\DeclareMathOperator{\e}{e}
\DeclareMathOperator{\Proj}{\mathrm{Proj}}
\DeclareMathOperator{\dist}{\mathrm{dist}}
\DeclareMathOperator{\A}{\mathcal{A}}
\DeclareMathOperator{\F}{\mathcal{F}}
\DeclareMathOperator{\N}{\mathcal{N}}
\DeclareMathOperator{\R}{\mathds{R}}
\DeclareMathOperator{\Integer}{\mathds{N}}
\DeclareMathOperator{\E}{\mathds{E}}
\DeclareMathOperator{\Corr}{\mathds{C}\mathrm{orr}}
\DeclareMathOperator{\Prob}{\mathds{P}}
\DeclareMathOperator{\Law}{\mathcal{L}}
\DeclareMathOperator{\Tr}{Tr}
\DeclareMathOperator*{\argmin}{arg\,min}
\DeclareMathOperator{\1}{\mathds{1}}
\DeclareMathOperator{\Distortion}{\mathcal{Q}_{2,N}}
\newcommand{\vertiii}[1]{{\left\vert\kern-0.25ex\left\vert\kern-0.25ex\left\vert #1
\right\vert\kern-0.25ex\right\vert\kern-0.25ex\right\vert}}
\theoremstyle{plain}
\newtheorem{theorem}{Theorem}[section]
\newtheorem{lemme}[theorem]{Lemma}
\newtheorem{proposition}[theorem]{Proposition}
\theoremstyle{definition}
\newtheorem{definition}[theorem]{Definition}
\newtheorem{example}[theorem]{Example}
\newtheorem{remark}[theorem]{\noindent \textbf{Remark}}
\newtheorem{remarks}[theorem]{\noindent \textbf{Remarks}}
\numberwithin{equation}{section}
\renewenvironment{abstract}
 {\small
	\begin{center}
		\bfseries \abstractname\vspace{-.5em}\vspace{0pt}
	\end{center}
	\list{}{%
	\setlength{\leftmargin}{20mm}% <---------- CHANGE HERE
	\setlength{\rightmargin}{\leftmargin}%
 }%
 \item\relax}
 {\endlist}
\newcommand*\samethanks[1][\value{footnote}]{\footnotemark[#1]}
\begin{document}

\graphicspath{{graphics/}}

\title{Stationary Heston model: Calibration and Pricing of exotics using Product Recursive Quantization}
\author{
	\sc Vincent Lemaire
	\thanks{Sorbonne Université, Laboratoire de Probabilités, Statistique et Modélisation, LPSM, Campus Pierre et Marie Curie, case 158, 4 place Jussieu, F-75252 Paris Cedex 5, France.}
	% \footnote{E-mail: \href{mailto:vincent.lemaire@sorbonne-universite.fr}{vincent.lemaire@sorbonne-universite.fr}}
	\and
	\sc Thibaut Montes
	% \thanks{Sorbonne Université, Laboratoire de Probabilités, Statistique et Modélisation, LPSM, Campus Pierre et Marie Curie, case 158, 4 place Jussieu, F-75252 Paris Cedex 5, France.}
	\samethanks[1]
	\thanks{The Independent Calculation Agent, The ICA, 112 Avenue Kleber, 75116 Paris, France.}
	% \footnote{E-mail: \href{mailto:thibaut.montes@sorbonne-universite.fr}{thibaut.montes@sorbonne-universite.fr}}
	\and
	\sc Gilles Pagès
	\samethanks[1]
	% \footnote{E-mail: \href{mailto:gilles.pages@sorbonne-universite.fr}{gilles.pages@sorbonne-universite.fr}}
}
\maketitle

\begin{abstract}
	A major drawback of the Standard Heston model is that its implied volatility surface does not produce a steep enough smile when looking at short maturities. For that reason, we introduce the Stationary Heston model where we replace the deterministic initial condition of the volatility by its invariant measure and show, based on calibrated parameters, that this model produce a steeper smile for short maturities than the Standard Heston model. We also present numerical solution based on Product Recursive Quantization for the evaluation of exotic options (Bermudan and Barrier options).
\end{abstract}

%%%%%%%%%%%%%%%%%%%%%%%%%%%%%%%%%%%%%%%%%%%%%%%%%%%%%%%%
%%%                    Corpus                        %%%
%%%%%%%%%%%%%%%%%%%%%%%%%%%%%%%%%%%%%%%%%%%%%%%%%%%%%%%%

\section*{Introduction}

Originally introduced by Heston in \cite{heston1993closed}, the Heston model is a stochastic volatility model used in Quantitative Finance to model the joint dynamics of a stock and its volatility, denoted $(S_t^{(x)})_{t \geq 0}$ and $(v_t^x)_{t \geq 0}$, respectively, where $v_0^x = x$ is the initial condition of the volatility. Historically, the initial condition of the volatility $x$ is considered as deterministic and is calibrated in the market like the other parameters of the model. This model received an important attention among practitioners for two reasons: first, it is a stochastic volatility model, hence it introduces smile in the implied volatility surface as observed in the market, which is not the case of models with constant volatility and second, in its original form, we have access to a semi closed-form formula for the characteristic function which allows us to price European options (Call $\&$ Put) almost instantaneously using the Fast Fourier approach (Carr $\&$ Madan in \cite{carr1999option}). Yet, a complaint often heard about the Heston model is that it fails to fit the implied volatility surface for short maturities because the model cannot produce a steep-enough smile for those maturities (see \cite{gatheral2011volatility}).

Noticing that the volatility process is ergodic with a unique invariant distribution $\nu = \Gamma (\alpha, \beta)$ where the parameters $\alpha$ and $\beta$ depend on the volatility diffusion parameters, it has been first proposed by Pagès $\&$ Panloup in \cite{pages2009approximation} to directly consider that the process evolves under its stationary regime in place of starting it at time $0$ from a deterministic value. We denote by $(S_t^{(\nu)})_{t \geq 0}$ and $(v_t^{\nu})_{t \geq 0}$ the couple asset-volatility in the Stationary Heston model. Replacing the initial condition of the volatility by the stationary measure does not modify the long-term behavior of the implied volatility surface but does inject more randomness into the volatility for short maturities. This tends to produce a steeper smile for short maturities, which is the kind of behavior we are looking for. Later, the short-time and long-time behavior of the implied volatility generated by such model has been studied by Jacquier $\&$ Shi in \cite{jacquier2017randomized}. Another extension of the Heston model have been suggested and extensivel analyzed in order to reproduce the slope of the skew for short-term expiring options: the Rough Heston model where the volatility satisfies a Voltera equation driven by a "rough" Liouville process with $H$-Hölder paths, $H=0.1$ (see \cite{jaisson2016rough,guennoun2018asymptotic,gatheral2018volatility,giorgia2018fast,gatheral2019rational} for details on the model and numerical solutions).

An other extension of the Heston model have been suggested in order to be able to reproduce the slope of the skew for short-term expiring options: the Rough Heston model (see \cite{jaisson2016rough,guennoun2018asymptotic,gatheral2018volatility,giorgia2018fast,gatheral2019rational} for details on the model and numerical solutions).

\medskip
In the beginning of the paper, we briefly recall the well-known methodology used for the pricing of European option in the Standard Heston model. Based on that, we express the price $I_0$ of a European option on the asset $S_T^{(\nu)}$ as
\begin{equation}
	I_0 = \E \big[ \e^{-r T} \varphi ( S_T^{(\nu)} ) \big] = \E \big[ f (v_0^{\nu}) \big]
\end{equation}
where $f(v)$ is the price of the European option in the Standard Heston model for a given set of parameters. The last expectation can be computed efficiently using quadrature formulas either based on optimal quantization of the Gamma distribution or on Laguerre polynomials.

Once we are able to price European options, we can think of calibrating our model to market data. Indeed the parameters of the model are calibrated using the implied volatility surface observed in the market. However, the calibration of the Standard Heston model is highly depending on the initial guess we choose in the minimization problem. This is due to an over-parametrization of the model (see \cite{gauthier2009fitting}). Hence, when we consider the Heston model in its stationary regime, there is one parameter less to calibrate as the initial value of the volatility is no longer deterministic. The stationary model tends to be more robust when it comes to calibration.

\medskip

In the second part of paper, we deal with the pricing of Exotic options such as Bermudan and Barrier options. We propose a method based on hybrid product recursive quantization. The "hybrid" term comes from the fact that we use two different types of schemes for the discretization of the volatility and the asset (Milstein and Euler-Maruyama). Recursive quantization was first introduced by Pagès $\&$ Sagna in \cite{pages2015recursive}. It is a Markovian quantization (see \cite{pages2004optimal}) drastically improved by the introduction of fast deterministic optimization procedure of the quantization grids and the transition weights. This optimization allows them to drastically reduce the time complexity by an order of magnitude and build such trees in a few seconds. Originally devised for Euler-Maruyama scheme of one dimensional Brownian diffusion, it has been extended to one-dimensional higher-order schemes by \cite{mcwalter2018recursive} and to medium dimensions using product quantization (see \cite{abbas2018product,rudd2017fast,callegaro2017american,callegaro2017pricing,sagna2018general}). Then, once the quantization tree is built, we proceed by a backward induction using the Backward Dynamic Programming Principle for the price of Bermudan options and using the methodology detailed in \cite{sagna2010pricing,pages2018numerical} based on the conditional law of the Brownian Bridge for the price of Barrier options.

\medskip

The paper is organized as follows. First, in Section \ref{RH:section:the_model}, we recall the definition of the Standard Heston model and the interesting features of the volatility diffusion which bring us to define the Stationary Heston model. In Section \ref{RH:section:pricing_calibrationEU}, we give a fast solution for the pricing of European options in the Stationary Heston model when there exists methods for the Standard model. Finally, once we are able to price European options, we can define the optimization problem of calibration on implied volatility surface. We perform the calibration of both models and compare their induced smile for short maturities options. Once this model has been calibrated, in Section \ref{RH:section:exotic_options}, we propose a numerical method based on hybrid product recursive quantization for the pricing of exotic financial products: Bermudan and Barrier options. For this method, we give an estimate of the $L^2$-error introduced by the approximation.

\section{The Heston Model} \label{RH:section:the_model}
The Standard Heston model is a two-dimensional diffusion process $(S_t^{(x)}, v_t^x)$ solution to the Stochastic Differential Equation
\begin{equation} \label{RH:model_standard_heston}
	\left\{
	\begin{aligned}
		\frac{dS_t^{(x)}}{S_t^{(x)}} & = (r-q) dt + \sqrt{v_t^x} \big( \rho d \widetilde W_t + \sqrt{1 - \rho^2} d W_t \big) \\
		dv_t^x                       & = \kappa (\theta - v_t^x) dt + \xi \sqrt{v_t^x} d \widetilde W_t
	\end{aligned}
	\right.
\end{equation}
where
\begin{itemize}
	\item $S_t^{(x)}$ is the dynamic of the risky asset,
	\item $v_t^x$ is the dynamic of the volatility process,
	\item $S_0^{(x)} = s_0 \geq 0$ is the initial value of the process,
	\item $r \in \R$ denotes the interest rate,
	\item $q \in \R$ is the dividend rate,
	\item $\rho \in [-1, 1]$ is the correlation between the asset and the volatility,
	\item $(W, \widetilde W)$ is a two-dimensional standard Brownian motion,
	\item $\theta \geq 0$ the long run average price variance,
	\item $\kappa \geq 0$ the rate at which $v_t^x$ reverts to $\theta$,
	\item $\xi \geq 0$ is the volatility of the volatility,
	\item $v_0^x = x \geq 0$ is the deterministic initial condition of the volatility.
\end{itemize}

This model is widely used by practitioner for various reasons. One is that it leads to semi-closed forms for vanilla options based on a fast Fourier transform. The other is that it represents well the observed mid and long-term market behavior of the implied volatility surface observed on the market. However, it fails producing or even fitting to the smile observed for short-term maturities.

\begin{remark}[The volatility] \label{RH:remark::vol}

	One can notice that the volatility process is autonomous thence we are facing a one dimensional problem. Moreover, the volatility process is following a Cox-Ingersoll-Ross (CIR) diffusion also known as the square root diffusion.
	Existence and uniqueness of a strong solution to this stochastic differential equation has been first shown in \cite{ikeda1981stochastic}, if $x \geq 0$. Moreover, it has been shown, see \cite{lamberton2011introduction}, that if the Feller condition holds, namely $\xi^2 \leq 2 \kappa \theta$, for every $x > 0$, then the unique solution $(v_t^x)_{t \geq 0}$ satisfies
	\begin{equation}
		\forall t \geq 0, \quad \Prob ( \tau_0^x = + \infty ) = 1
	\end{equation}
	where $\tau_0^x$ is the first hitting time defined by
	\begin{equation}
		\tau_0^x = \inf \{ t \geq 0 \mid v_t^x = 0 \} \quad \mbox{ where } \inf \emptyset = + \infty.
	\end{equation}
	Moreover, the CIR diffusion admits, as a Markov process, a unique stationary regime, characterized by its invariant distribution
	\begin{equation}\label{RH:gammalaw}
		\nu = \Gamma(\alpha, \beta)
	\end{equation}
	where
	\begin{equation}\label{RH:params_gammalaw}
		\alpha = \theta \beta \quad \mbox{ and }\quad \beta = 2 \kappa / \xi^2.
	\end{equation}

\end{remark}

\medskip
Based on the above remarks, the idea is to precisely consider the volatility process under its stationary regime, i.e., replacing the deterministic initial condition from the Standard Heston model by a $\nu$-distributed random variable independent of $(W, \widetilde W)$. We will refer to this model as the Stationary Heston model. Our first aim is to inject more randomness for short maturities ($t$ small) into the volatility but also to reduce the number of free parameters to stabilize and robustify the calibration of the Heston model which is commonly known to be overparametrized (see e.g. \cite{gauthier2009fitting}).

\smallskip
This model was first introduced by \cite{pages2009approximation} (see also \cite{ikeda1981stochastic}, p. 221). More recently, \cite{jacquier2017randomized} studied its small-time and large-time behaviors of the implied volatility. The dynamic of the asset price $(S_t^{(\nu)})_{t \geq 0}$ and its stochastic volatility $(v_t^{\nu})_{t \geq 0}$ in the Stationary Heston model are given by
\begin{equation} \label{RH:model_stationary_heston}
	\left\{
	\begin{aligned}
		\frac{dS_t^{(\nu)}}{S_t^{(\nu)}} & = (r-q) dt + \sqrt{v_t^{\nu}} \big( \rho d \widetilde W_t + \sqrt{1 - \rho^2} d W_t \big) \\
		dv_t^{\nu}                       & = \kappa (\theta - v_t^{\nu}) dt + \xi \sqrt{v_t^{\nu}} d \widetilde W_t
	\end{aligned}
	\right.
\end{equation}
where $v_0^{\nu} \sim \Law (\nu) \sim \Gamma(\alpha, \beta)$ with $\beta = 2 \kappa / \xi^2$, $\alpha = \theta \beta$. $S_0^{(\nu)}$, $r$ and $q$ are the same parameters as those defined in \eqref{RH:model_standard_heston} and the parameters $\rho$, $\theta$, $\kappa$, $\theta$ and $\xi$ can be described as in the Standard Heston model.
% The parameters $\rho$, $\theta$, $\kappa$, $\theta$ and $\xi$ designate the parameters in the s

\section{Pricing of European Options and Calibration} \label{RH:section:pricing_calibrationEU}

In this section, we first calibrate both Stationary and Standard Heston models and then compare their short-term behaviors of their resulting implied volatility surfaces. For that purpose we relied on a dataset of options price on the \textsc{Euro Stoxx} 50 observed the 26th of September 2019 (see Figure \ref{RH:fig:impliedvol_market}). This is why, as a preliminary step we briefly recall the well-known methodology for the evaluation of European Call and Put in the Standard Heston model. Based on that, we outline how to price these options in the Stationary Heston model. Then, we describe the methodology employed for the calibration of both models: the Stationary Heston model \eqref{RH:model_stationary_heston} and the Standard Heston model \eqref{RH:model_standard_heston} and then we discuss the obtained parameters and compare their short-term behaviors.

\subsection{European options}

The price of the European option with payoff $\varphi$ on the asset $S_T^{(\nu)}$, under the Stationary Heston model, exercisable at time $T$ is given by
\begin{equation}
	I_0 = \E \big[ \e^{-r T} \varphi ( S_T^{(\nu)} ) \big].
\end{equation}
After preconditioning by $v_0^{\nu}$, we have
\begin{equation} \label{RH:price_stationary_heston_eu}
	I_0 = \E \Big[ \E \big[ \e^{-r T} \varphi ( S_T^{(\nu)} ) \mid \sigma(v_0^{\nu}) \big] \Big] = \E \big[ f (v_0^{\nu}) \big]
\end{equation}
where $f(v)$ is the price of the European option in the Standard Heston model with deterministic initial conditions for the set of parameters $\lambda (v) = (s_0, r, q, \theta, \kappa, \xi, \rho, v)$.

\begin{example}[Call]
	If $\varphi$ is the payoff of a Call option then $f$ is simply the price given by Fourier transform in the Standard Heston model of the European Call Option. The price at time $0$, for a spot price $s_0$, of an European Call $C(\lambda(v), K, T)$ with expiry $T$ and strike $K$ under the Standard Heston model with parameters $\lambda(v) = (s_0, r, q, \theta, \kappa, \xi, \rho, v)$ is
	\begin{equation}\label{RH:price_call_heston}
		\begin{aligned}
			C(\lambda(v), K, T)
			 & = \E \big[ \e^{-r T} (S_T^{(v)} - K)_+ \big]                                                                    \\
			 & = \e^{-r T} \Big( \E \big[ S_T^{(v)} \1_{S_T^{(v)} \geq K} \big] - K \E \big[ \1_{S_T^{(v)} \geq K} \big] \Big) \\
			 & = s_0 \e^{-q T} P_1 \big( \lambda(v), K, T \big) - K \e^{-r T} P_2 \big(\lambda(v), K, T \big)
		\end{aligned}
	\end{equation}
	with $P_1 \big( \lambda(v), K, T \big)$ and $P_2 \big( \lambda(v), K, T \big)$ given by
	\begin{equation}
		\begin{aligned}
			P_1 \big( \lambda(v), K, T \big)
			 & = \frac{1}{2} + \frac{1}{\pi} \int_{0}^{+ \infty} \textrm{Re} \bigg( \frac{ \e^{-\textbf{i} u \log (K)} }{ iu } \frac{ \psi \big(\lambda(v), u - \textbf{i}, T \big) }{ s_0 \e^{ (r-q)T } } \bigg) du \\
			P_2 \big( \lambda(v), K, T \big)
			 & = \frac{1}{2} + \frac{1}{\pi} \int_{0}^{+ \infty} \textrm{Re} \bigg( \frac{ \e^{-\textbf{i} u \log (K)} }{ \textbf{i} u } \psi \big( \lambda(v), u, T \big) \bigg) du                                 \\
		\end{aligned}
	\end{equation}
	where $\textbf{i}$ is the imaginary unit s.t. $\textbf{i}^2 = -1$, $\psi \big( \lambda(v), u, T \big)$ is the characteristic function of the logarithm of the stock price process at time $T$. Several representations of the characteristic function exist, we choose to use the one proposed by \cite{SchoutensWim2004, gatheral2011volatility, littletrapheston}, which is numerically more stable. It reads
	\begin{equation}
		\begin{aligned}
			\psi \big( \lambda(v), u, T \big)
			 & = \E \big[ \e^{ \textbf{i} u \log(S_T^{(v)}) } \mid S_0^{(v)}, x \big]                                                            \\
			 & = \e^{ \textbf{i} u ( \log(s_0) + (r-q)T ) }                                                                                      \\
			 & \qquad \times \e^{ \theta \kappa \xi^{-2} \big( (\kappa - \rho \xi u \textbf{i} - d)T - 2 \log ( (1-g \e^{-dt}) / (1-g) ) \big) } \\
			 & \qquad \qquad \times \e^{ v^2 \xi^{-2} (\kappa - \rho \xi u \textbf{i} - d) (1 - \e^{-dt}) / (1-g\e^{-dt}) }                      \\
		\end{aligned}
	\end{equation}
	with
	\begin{equation}
		d = \sqrt{ ( \rho \xi u \textbf{i} - \kappa )^2 - \xi^2( - u \textbf{i} - u^2 ) } \quad \mbox{ and }\quad g = (\kappa - \rho \xi u \textbf{i} - d) / (\kappa - \rho \xi u \textbf{i} + d).
	\end{equation}
	Hence, in \eqref{RH:price_stationary_heston_eu}, $f(v)$ can be replaced by $C \big( \lambda(v), K, T \big)$, which yields
	\begin{equation}
		I_0 = \E \big[ \e^{-r T} ( S_T^{(\nu)} - K )_+ \big] = \E \Big[ C \big( \lambda(v_0^{\nu}), K, T \big) \Big].
	\end{equation}

\end{example}

Now, we come to the pricing of European options in the Stationary Heston model, using the expression of the density of $v_0^{\nu} \sim \Gamma (\alpha, \beta)$, \eqref{RH:price_stationary_heston_eu} reads
\begin{equation}
	I_0 = \E \big[ f (v_0^{\nu}) \big] = \int_0^{+ \infty} f (v) \frac{\beta^{\alpha}}{\Gamma(\alpha)} v^{\alpha -1} \e^{- \beta v} dv.
\end{equation}
Now, several approaches exists in order to approximate this integral on the positive real line.

\begin{itemize}[wide=0pt]
	\item {\em Quantization based quadrature formulas.} One could use a quantization-based cubature formula with an optimal quantizer of $v_0^{\nu}$ with the methodology detailed in Appendix \ref{RH:appendix:optquant}. Given that optimal quantizer of size $N$, $\widehat{v}_0^N$, we approximate $I_0$ by $\widehat{I}_0^N$
	      \begin{equation}
		      \widehat{I}_0^N = \E \big[ f (\widehat{v}_0^N) \big] = \sum_{i=1}^{N} f (v_{0,i}^N) \Prob \big( \widehat{v}_{0}^N = v_{0,i}^N \big).
	      \end{equation}

	      \begin{remarks}
		      In one dimension, the minimization problem, that consists in building an optimal quantizer, is invariant by linear transformation. Hence applying a linear transformation to an optimal quantizer preserves its optimality. For example, if we consider an optimal quantization $\widehat X^N$ of a standard normal distribution $\N (0,1)$ then $\mu + \sigma \widehat X^N$ is an optimal quantizer of a normal distribution $\N (\mu, \sigma^2)$ and the associated probabilities of each Voronoï centroid stay the same.

		      In our case, noticing that if we consider a Gamma random variable $X \sim \Gamma(\alpha, 1)$ then the rescaling of $X$ by $1/\beta$ yields $X/\beta \sim \Gamma(\alpha, \beta)$. Hence, for building the optimal quantizer $\widehat v_0^N$ of $v_0^{\nu}$, we can build an optimal quantizer of $X \sim \Gamma(\alpha, 1)$ and then rescale it by $1 / \beta$, yielding $\widehat v_0^N = \widehat{X}^N / \beta$. Our numerical tests showed that it is numerically more stable to use this approach.

		      % Concerning the construction of the optimal quantizer $\widehat{v}_0^N$, it is numerically more stable to first build an optimal quantizer $\widehat X^N$ of . This transformation gives the right law for the random variable, preserves the optimality of the quantizer and the associated probabilities of each Voronoï centroid stay the same.

		      In order to build the optimal quantizer, we use Lloyd's method detailed in Appendix \ref{RH:appendix:optquant} to $X \sim \Gamma(\alpha, 1)$ with the cumulative distribution function $F_{_X}( x ) = \Prob (X \leq x)$ and the partial first moment $K_{_X}( x ) = \E [X \1_{X \leq x}]$ given by
		      \begin{equation}
			      \begin{aligned}
				      \forall x > 0, \qquad       & F_{_X}( x ) = \frac{1}{\Gamma(\alpha)} \gamma(\alpha,x), \qquad & K_{_X}( x ) = \alpha F_{_X}( x ) - \frac{x^{\alpha} \e^{-x}}{\Gamma(\alpha)}, \\
				      \textrm{otherwise, } \qquad & F_{_X}( x ) = 0, \qquad                                         & K_{_X}( x ) = 0,                                                              \\
			      \end{aligned}
		      \end{equation}
		      where $\gamma(\alpha, x) = \int_{0}^{x} t^{\alpha -1} \e^{-t} dt$ is the lower gamma function. And the associated probabilities of the optimal quantizer $\widehat v_0^N$ are given by \eqref{RH:probaOptimalQuantizer}
		      \begin{equation}\label{RH:probainitialmeasure}
			      \Prob \big( \widehat v_0^N = v_{0,i}^N \big) = \Prob \big( \widehat{X}^N = x_i^N \big) = F_{_X} \big( x_{i + 1/2}^N \big) - F_{_X} \big( x_{i - 1/2}^N \big)
		      \end{equation}
		      where $\forall i \in \llbracket 2, N \rrbracket, x_{i-1/2}^N = \frac{x_{i-1}^N + x_i^N}{2}$ and $x_{1/2}^N = 0$ and $x_{N+1/2}^N = + \infty$.
	      \end{remarks}

	\item {\em Quadrature formula from Laguerre polynomials.} One could also use an algorithm based on fixed point quadratures for the numerical integration. Indeed, noticing that the density we are integrating against is a gamma density which is exactly the Laguerre weighting function (up to a rescaling). Then, $I_0$ rewrites
	      \begin{equation}
		      I_0
		      = \int_0^{+ \infty} f (v) \frac{\beta^{\alpha}}{\Gamma(\alpha)} v^{\alpha -1} \e^{- \beta v} dv
		      = \frac{\beta^{\alpha}}{\Gamma(\alpha)} \int_0^{+ \infty} f (v) \omega(v) dv
	      \end{equation}
	      where $\omega (v) = v^{\alpha-1} \e^{-\beta v}$ is the Laguerre weighting function. Then, for a fixed integer $n \geq 1$\footnote{In practice, we choose $n=20$. This number of points allows us to reach a high precision while keeping the computation time under control.}, $I_0$ is approximated by
	      \begin{equation}
		      \widetilde I_0^n = \frac{\beta^{\alpha}}{\Gamma(\alpha)} \sum_{i=1}^n \omega_i f(v_i)
	      \end{equation}
	      where the $\omega_i$'s are the Laguerre weights and the $v_i$'s are the associated Laguerre nodes.
\end{itemize}

\subsection{Calibration}
%
% On test avec plusieurs jeux de paramètres pour initialiser la minimization
%
%
%
% jeu 1 : $\rho = -0.8$, $v_0 = 0.074903$, $\theta = 0.006477$, $\kappa = 15$, $\sigma = 3$
%
%
% Résultat Heston Standard: $\rho = -0.743833$, $v_0 = 0.152584$, $\theta = 0.014868$, $\kappa = 80.045323$, $\sigma = 5.221881$
%
%
% Résultat Heston Stationaire: $\rho = -0.749586$, $\theta = 0.027444$, $\kappa = 593.458719$, $\sigma = 36.799901$

\begin{figure}
	\centering
	\includegraphics[width=1.\textwidth]{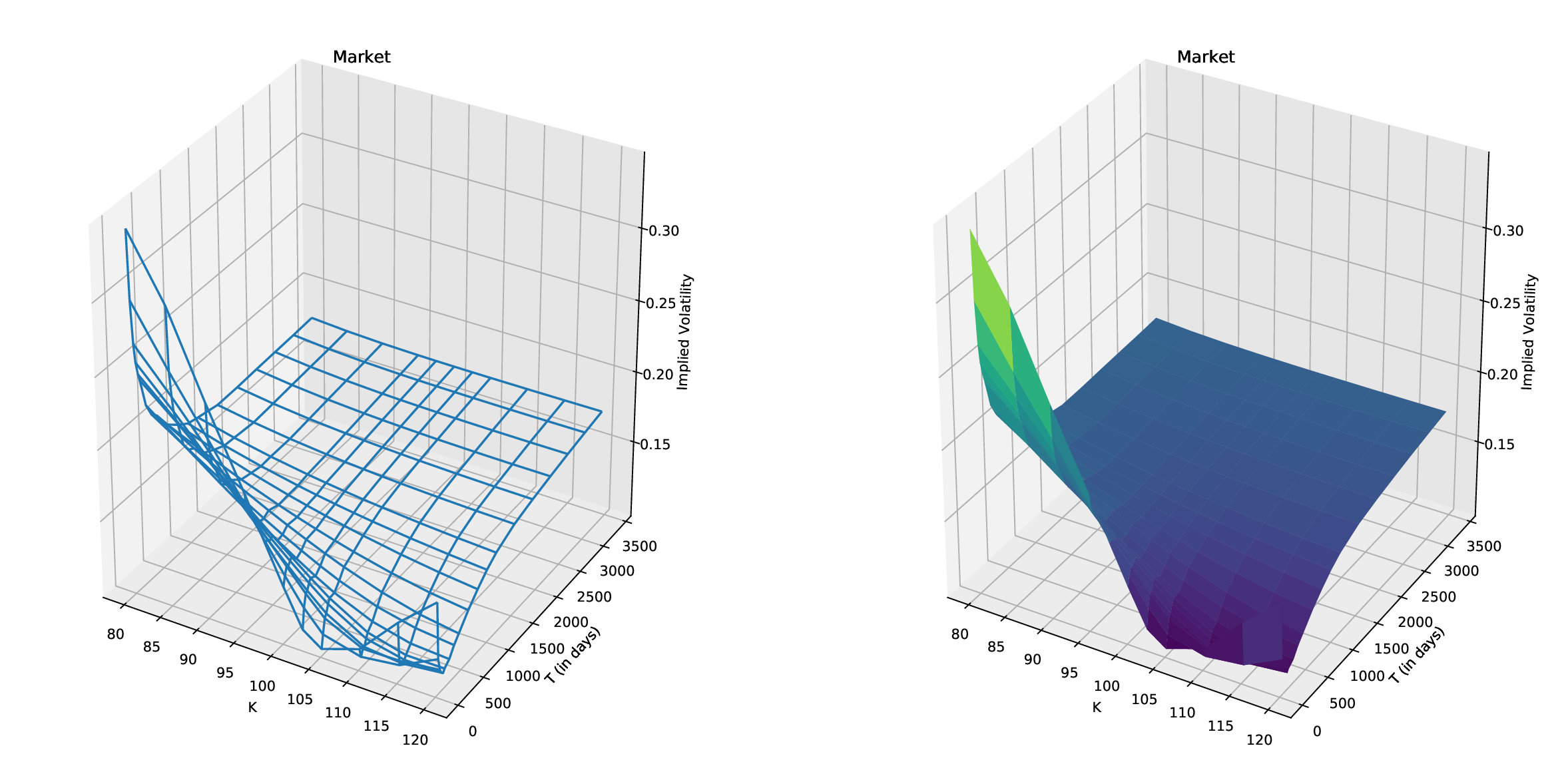}
	\caption[Implied volatility surface of the \textsc{Euro Stoxx} 50 as of the 26th of September 2019.]{\textit{Implied volatility surface of the \textsc{Euro Stoxx} 50 as of the 26th of September 2019. ($S_0 = 3541$, $r=-0.0032$ and $q=0.00225$) The expiries $T$ are given in days and the strikes $K$ in percentage of the spot.}}
	\label{RH:fig:impliedvol_market}
\end{figure}

Now that we are able to compute the price of European options, we define the problem of minimization we wish to optimize in order to calibrate our models parameters. Let $\mathcal{P}_{_{\sc SH}}$ be the set of parameters of the Stationary Heston model that needs to be calibrated, defined by
\begin{equation}
	\mathcal{P}_{_{\sc SH}} = \big\{ (\theta, \kappa, \xi, \rho) \in \R_+ \times \R_+ \times \R_+ \times [-1,1] \big\}
\end{equation}
and let $\mathcal{P}_{_{\sc H}}$ be the set of parameters of the Standard Heston model that needs to be calibrated, defined by
\begin{equation}
	\mathcal{P}_{_{\sc H}} = \big\{ (x, \theta, \kappa, \xi, \rho) \in \R_+ \times \R_+ \times \R_+ \times \R_+ \times [-1,1] \big\}.
\end{equation}

The others parameters are directly inferred from the market: we get $S_0 = 3541$, $r=-0.0032$ and $q=0.00225$.
In our case, we calibrate to option prices all having the same maturity. The problem can be formulated as follows: we search for the set of parameters $\phi^{\star} \in \mathcal{P}$ that minimizes the relative error between the implied volatility observed on the market and the implied volatility produced by the model for the given set of parameters, such that $\mathcal{P} = \mathcal{P}_{_{\sc SH}}$ for the Stationary Heston model and $\mathcal{P} = \mathcal{P}_{_{\sc H}}$ for the Standard Heston model. There is no need to calibrate the parameters $s_0$, $r$ and $q$ since they are directly observable in the market.

Being interested in the short-term behaviors of the models, it is natural to calibrate both models based on options prices at a small expiry. Once the optimization procedures have been performed, we compare their performances for small expiries. For that, we calibrate using only the data on the volatility surface in Figure \ref{RH:fig:impliedvol_market} with expiry 50 days ($T = 50/365$) and then we compare both models to the market implied volatility at expiry 22 days which is the smallest available in the data set.

\begin{remark}
	The calibration is performed in C++ on a laptop with a 2,4 GHz 8-Core Intel Core i9 CPU using the randomized version of the simplex algorithm of \cite{nelder1965simplex} proposed in the C++ library GSL. This algorithm is a derivative-free optimization method. It uses only the value of the function at each evaluation point. The computation time for calibrating the Standard Heston model is around $20$s and a bit more than a minute for the Stationary model. However, these computation times need to be considered carefully because the calibration time highly depends on the initial condition we choose for the minimizer and on the implementation of the Call pricer in the Standard Heston model.
\end{remark}

\subsubsection{Optimization without penalization}

We want to find the set of parameter $\phi^{\star}$ that minimizes the relative error between the volatilities observed in the market and the ones generated by the model, hence leading to the following minimization problem
\begin{equation}
	\min_{ \phi \in \mathcal{P} } \sum_{K} \bigg( \frac{ \sigma_{\textsc{iv}}^{Market}(K,T) - \sigma_{\textsc{iv}}^{Model}(\phi, K,T) }{ \sigma_{\textsc{iv}}^{Market}(K,T) } \bigg)^2
\end{equation}
where $T$ is the expiry of the chosen options chosen a priori and $K$ are their strikes. $\sigma_{\textsc{iv}}^{Market}(K,T)$ is the Mark-to-Market implied volatility taken from the observed implied volatility surface and the implied volatility $\sigma_{\textsc{iv}}^{Model}(\phi, K,T)$ is the Black-Scholes volatility $\sigma$ that matches the European Call price in this model to the price given by the Standard or Stationary Heston model with the set of parameters $\phi$.

\medskip
In all the following figures, the strike $K$ is given in percentage of the spot $S_0$.
\begin{figure}[H]
	\centering
	\includegraphics[width=1.\textwidth]{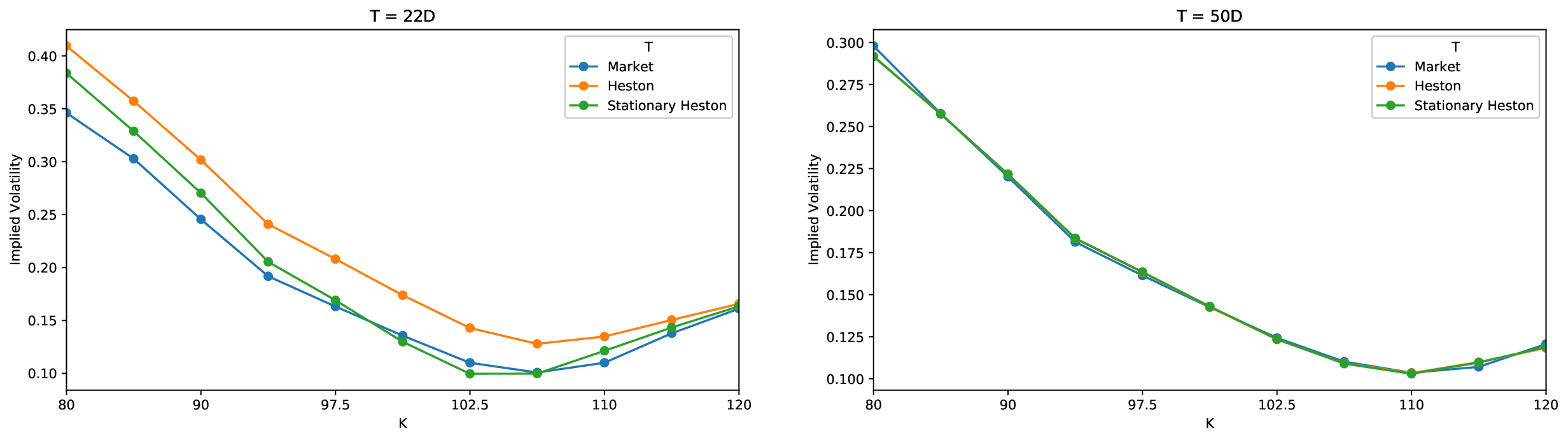}
	\caption[Implied volatilities for $22$ and $50$ days expiry options after calibration without penalization.]{\textit{Implied volatilities for $22$ (left) and $50$ (right) days expiry options after calibration at $50$ days without penalization.}}
	\label{RH:fig:impliedvol_22D_50D_nopena}
\end{figure}
It is clear in Figure \ref{RH:fig:impliedvol_22D_50D_nopena} (right) that both models fit really well to the market data and more precisely, the Stationary model succeeds to calibrate with the same precision as the Standard one with one less parameter. Moreover, one notices that even for 22 days maturity options, the Standard Heston model tends to over-estimate the implied volatility and fails to produce the right smile whereas the Stationary Heston model is closer to the market observations.

Now, we extrapolate the implied volatility surfaces, given by the two models, for even smaller maturities (7 and 14 days) in order to analyze the behavior of each model for short-term expiries.
\begin{figure}[H]
	\centering
	\includegraphics[width=1.\textwidth]{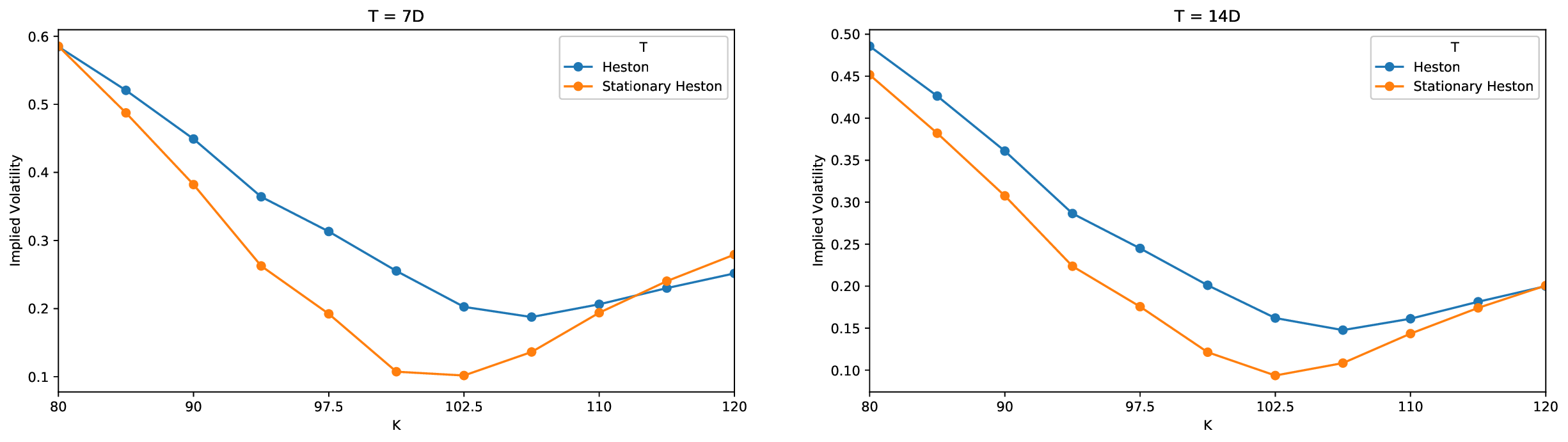}
	\caption[Implied volatilities for $7$ and $14$ days expiry options after calibration without penalization.]{\textit{Implied volatilities for $7$ (left) and $14$ (right) days expiry options after calibration at $50$ days without penalization.}}
	\label{RH:fig:impliedvol_7D_14D_nopena}
\end{figure}
It is clear in Figure \ref{RH:fig:impliedvol_7D_14D_nopena} that the Standard Heston model fails at producing the desired smile for very small maturities when the Stationary model meets no difficulty to generate it. The next graphics, Figure \ref{RH:fig:impliedvol_all_dates_nopena} reproduces the term-structure of the implied volatility in function of $T$ both models.
\begin{figure}[H]
	\centering
	\includegraphics[width=1.\textwidth]{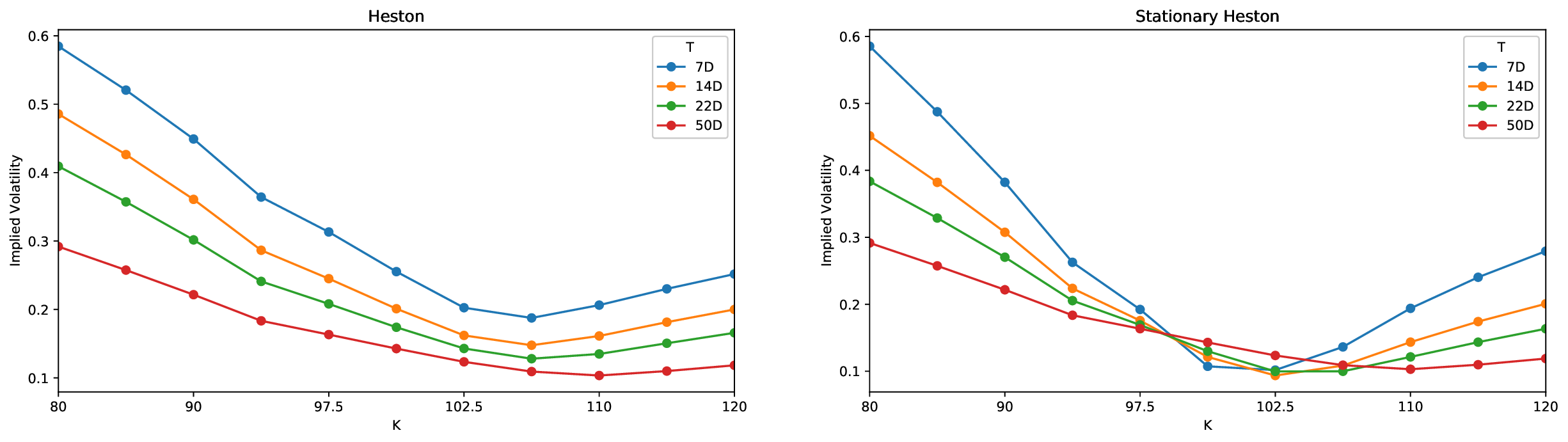}
	\caption[Term-structure of the volatility in function of $T$ and $K$ of both Heston models (stationary and standard) after calibration without penalization.]{\textit{Term-structure of the volatility in function of $T$ and $K$ of both models (left: Standard Heston and right: Stationary Heston) after calibration at $50$ days without penalization.}}
	\label{RH:fig:impliedvol_all_dates_nopena}
\end{figure}

Now, we investigate how these models behave for longer maturities. Do they succeed in preserving the general shape of the market volatility surface or are they only correctly fitting the maturity on which we calibrated them?

Figure \ref{RH:fig:impliedvol_rel_error_nopena} represents the relative error between the implied volatility given by the market and the one given by the models calibrated models at 50 days. Clearly, one notices that the Standard Heston model only fits at this expiry. Indeed, when looking at the expiry $22$ days or for long-term maturities, the relative error explodes. The term-structure of the implied volatility surface of the market is not preserved when using the Standard Heston model. However, the Stationary Heston model does fit well at both short and long term expiries. The Stationary model produces a steep smile for very short maturities and flattens correctly to the appropriate mean for long expiries.

\begin{table}[H]
	\centering
	\begin{tabular}{l||ccccc}
		\toprule
		$\phi^{\star}$    & $\rho$  & $v_0$      & $\theta$  & $\kappa$ & $\xi$   \\ \midrule \midrule
		Standard Heston   & $-0.74$ & $0.152584$ & $0.01487$ & $80.05$  & $5.22$  \\ \midrule
		Stationary Heston & $-0.75$ &            & $0.02744$ & $593.46$ & $36.80$ \\ \bottomrule
	\end{tabular}
	\caption[Parameters obtained for both models after calibration without penalization.]{\textit{Parameters obtained for both models after calibration without penalization for options with maturity $50$ days ($S_0 = 3541$, $r=-0.0032$ and $q=0.00225$).}}
	\label{RH:tab:params_nopena}
\end{table}

However, looking closely at the parameters obtained after calibration (which are summarized in Table \ref{RH:tab:params_nopena}), one notices that both sets of calibrated parameters are far from satisfying the Feller condition. And we have to keep in mind that the calibration procedure is performed in order to price path-dependent or American style derivatives using Monte-Carlo simulation or alternative numerical methods, as developed in the next Section. Hence, the Feller condition has to be satisfied, this is the reason why we add a constraint to the minimization problem in order to penalize the sets of parameters not satisfying the condition.

\begin{figure}[H]
	\centering
	\includegraphics[width=1.\textwidth]{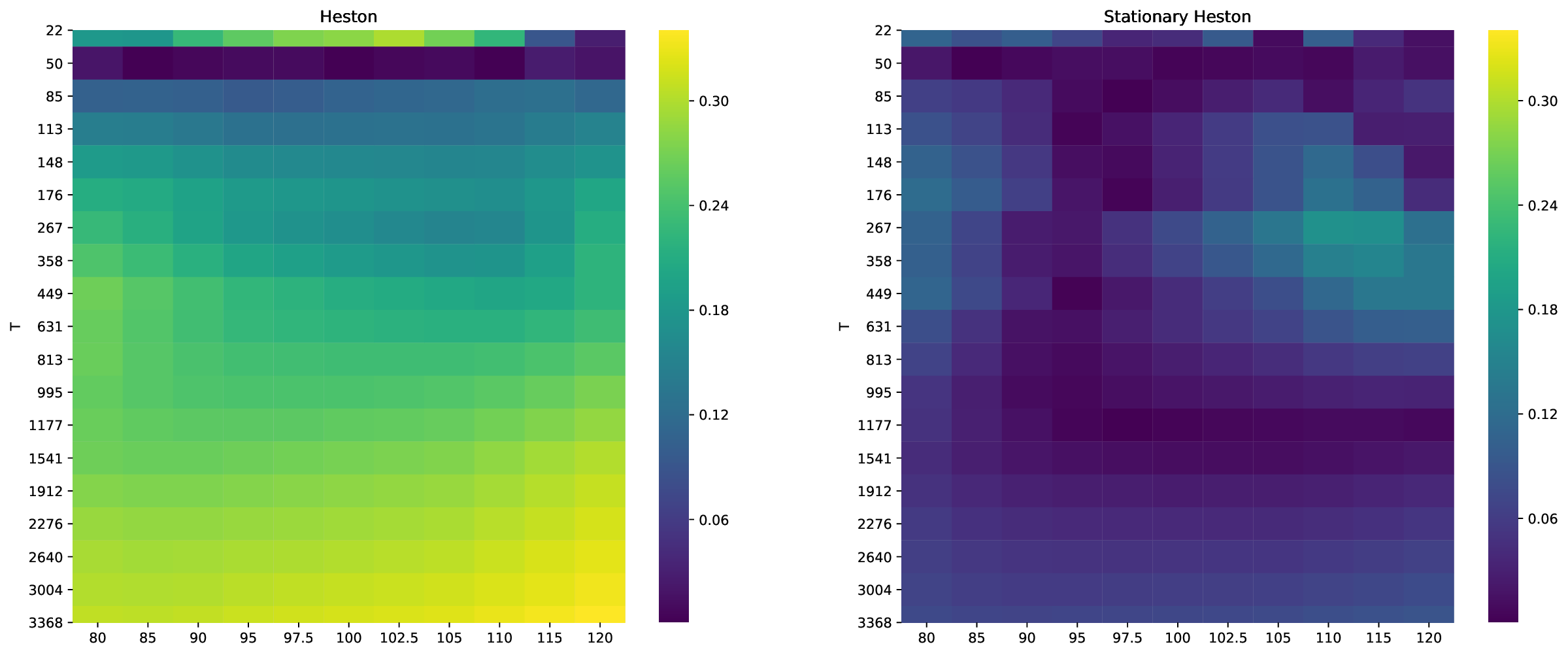}
	\caption[Relative error between market and models implied volatility after calibration without penalization.]{\textit{$(K,T) \longrightarrow \frac{\vert \sigma_{\textsc{iv}}^{Market}(K,T) - \sigma_{\textsc{iv}}^{Model}(\phi^{\star},K,T) \vert}{ \sigma_{\textsc{iv}}^{Market}(K,T) }$ for both models after calibration at $50$ days without penalization. The expiries $T$ are given in days and the strikes $K$ are in percentage of the spot. (left: Standard Heston and right: Stationary Heston)}.}
	\label{RH:fig:impliedvol_rel_error_nopena}
\end{figure}

\subsubsection{Optimization with penalization using the Feller condition}

The minimization problem becomes
\begin{equation}
	\min_{ \phi \in \mathcal{P} } \sum_{K} \bigg( \frac{ \sigma_{\textsc{iv}}^{Market}(K,T) - \sigma_{\textsc{iv}}^{Model}(\phi, K,T) }{ \sigma_{\textsc{iv}}^{Market}(K,T) } \bigg)^2 + \lambda \max( \xi^2 - 2 \kappa \theta, 0)
\end{equation}
where $\lambda$ is the penalization factor to be adjusted during the procedure. The obtained parameters after calibration are summarized in Table \ref{RH:tab:params_withpena}. The Feller condition is still not fulfilled for both models but it is not far from being satisfied. We choose $\lambda = 0.01$ which seems to be right the compromise in order to avoid underfitting the model because of the constraint.

% Résultat Heston Standard: $\rho = -0.74$, $v_0 = 0.152584$, $\theta = 0.014868$, $\kappa = 80.045323$, $\sigma = 5.221881$
% Résultat Heston Stationaire: $\rho = -0.749586$, $\theta = 0.027444$, $\kappa = 593.458719$, $\sigma = 36.799901$

\begin{table}[H]
	\centering
	\begin{tabular}{l||ccccc}
		\toprule
		$\phi^{\star}$    & $\rho$  & $v_0$    & $\theta$  & $\kappa$ & $\xi$  \\ \midrule \midrule
		Standard Heston   & $-0.83$ & $0.0045$ & $0.17023$ & $2.19$   & $1.04$ \\ \midrule
		Stationary Heston & $-0.99$ &          & $0.02691$ & $19.28$  & $1.15$ \\ \bottomrule
	\end{tabular}
	\caption[Parameters obtained for both models after calibration with penalization.]{\textit{Parameters obtained for both models after calibration with penalization ($\lambda = 0.01$) for options with maturity $50$ days ($S_0 = 3541$, $r=-0.0032$ and $q=0.00225$).}}
	\label{RH:tab:params_withpena}
\end{table}

Figure \ref{RH:fig:impliedvol_22D_50D_withpena} displays the resulting implied volatility curves at $50$ days and $22$ days for both calibrated models and observed in the market with calibration at $50$ days. Adding a penalization term deteriorates the calibration results compared to the non-penalized case (see Figure \ref{RH:fig:impliedvol_22D_50D_nopena} (right)) but the results are still acceptable.

\begin{figure}[H]
	\centering
	\includegraphics[width=1.\textwidth]{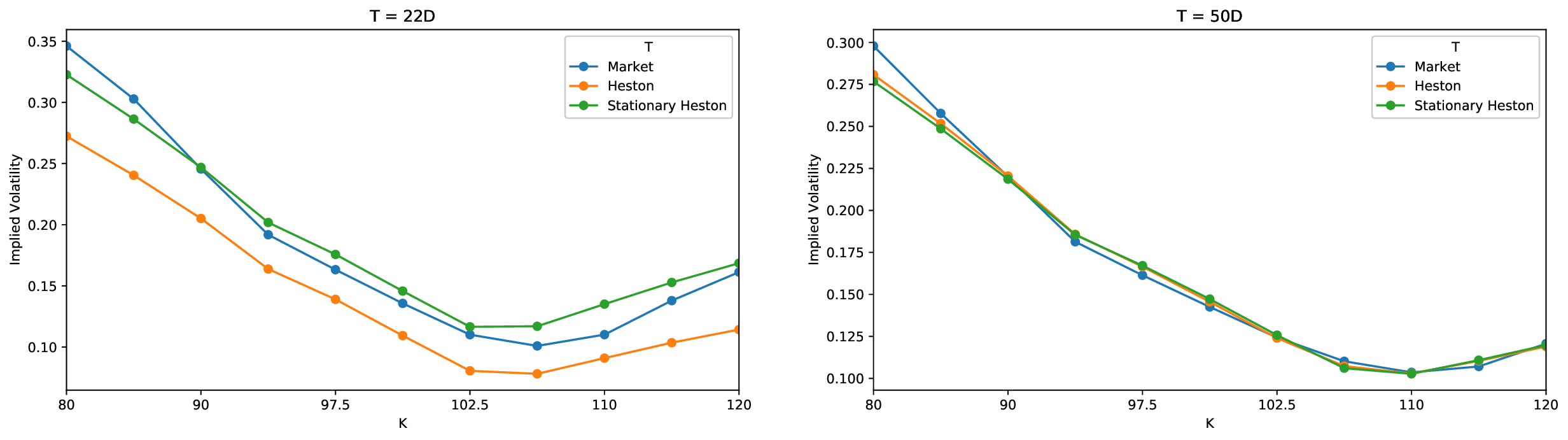}
	\caption[Implied volatilities for $22$ and $50$ days expiry options after calibration with penalization.]{\textit{Implied volatilities for $22$ (left) and $50$ (right) days expiry options after calibration at $50$ days with penalization.}}
	\label{RH:fig:impliedvol_22D_50D_withpena}
\end{figure}

Now, again, we extrapolate the implied volatility of both models for very short term maturities in Figure \ref{RH:fig:impliedvol_7D_14D_withpena}. The Stationary Heston model produces the desired smile, however the Standard Heston model fails to produce prices sensibly different than $0$ for strikes higher than $105$ with this set of parameters, this is why there is no values in implied volatility curves.

\begin{figure}[H]
	\centering
	\includegraphics[width=1.\textwidth]{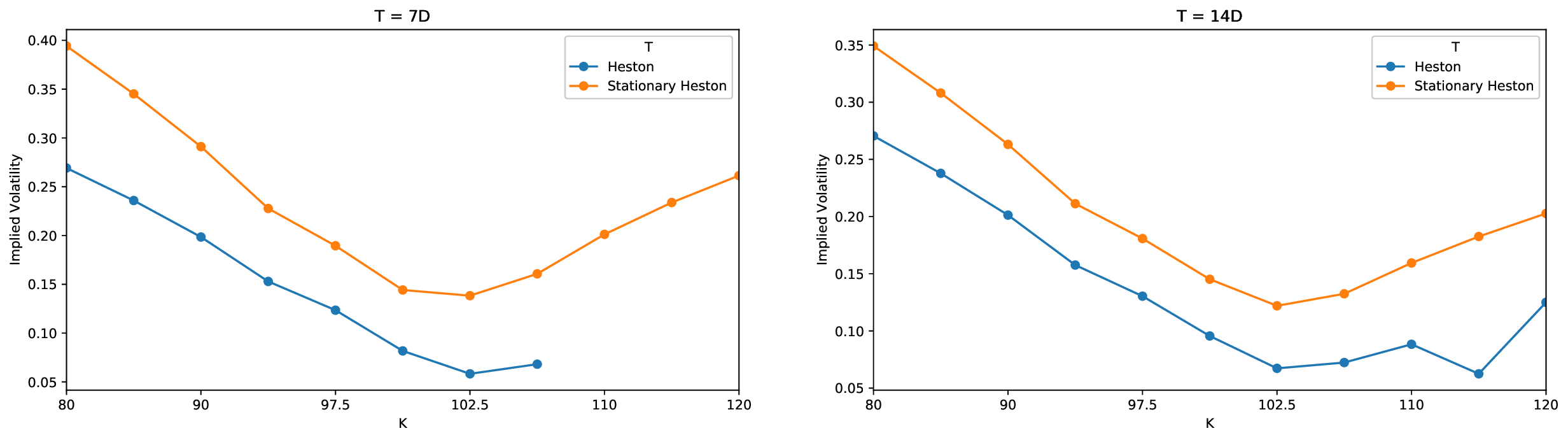}
	\caption[Implied volatilities for $7$ and $14$ days expiry options after calibration with penalization.]{\textit{Implied volatilities for $7$ (left) and $14$ (right) days expiry options after calibration at $50$ days with penalization.}}
	\label{RH:fig:impliedvol_7D_14D_withpena}
\end{figure}

Figure \ref{RH:fig:impliedvol_rel_error_withpena} represents, as in the non-penalized case, the relative error between the implied volatility given by the market and the one given by the models calibrated models at 50 days using a penalization. The Standard Heston model completely fails to preserve the term-structure while being calibrated at $50$ days. In comparison, the Stationary Heston behaves much better and the relative error does not explodes for long-term expiries, meaning that the long run average price variance is well caught.

\begin{figure}[H]
	\centering
	\includegraphics[width=1.\textwidth]{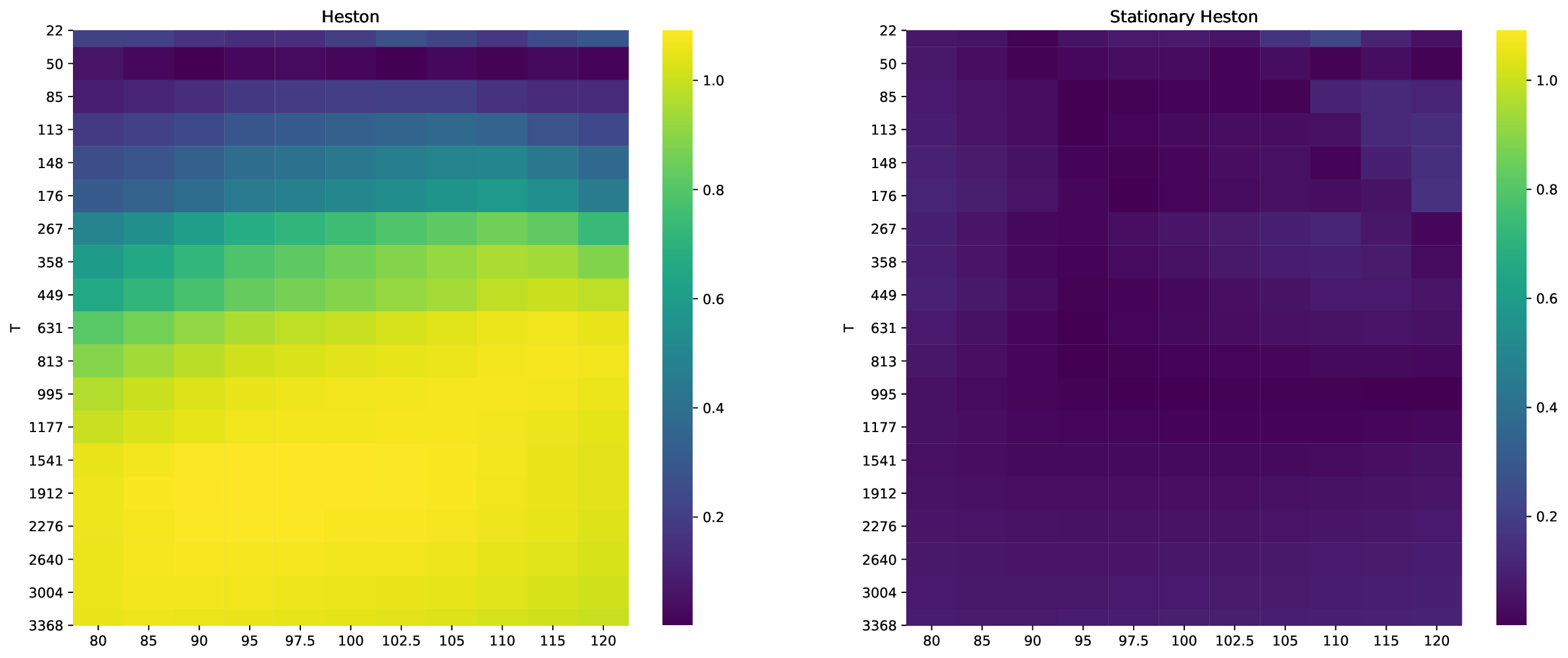}
	\caption[Relative error between market and models implied volatility after calibration with penalization.]{\textit{$(K,T) \longrightarrow \frac{\vert \sigma_{\textsc{iv}}^{Market}(K,T) - \sigma_{\textsc{iv}}^{Model}(\phi^{\star},K,T) \vert}{ \sigma_{\textsc{iv}}^{Market}(K,T) }$ for both models after calibration at $50$ days with penalization. The expiries $T$ are given in days and the strikes $K$ are in percentage of the spot. (left: Standard Heston and right: Stationary Heston)}.}
	\label{RH:fig:impliedvol_rel_error_withpena}
\end{figure}

\section{Toward the pricing of Exotic Options} \label{RH:section:exotic_options}

In this Section, we evaluate first Bermudan options and then Barrier options under the Stationary Heston model. For both products, the pricing rely on a \textit{Backward Dynamic Programming Principle}. The numerical solution we propose is based on a two-dimensional product recursive quantization scheme. We extend the methodology previously developed by \cite{abbas2018product,callegaro2017american,callegaro2017pricing}, where they considered an Euler-Maruyama scheme for both components. In this paper, we consider a hybrid scheme made up with an Euler-Maruyama scheme for the $\log$-stock price dynamics and a Milstein scheme for the (boosted) volatility process. Finally, we apply the backward algorithm that corresponds to the financial product we are dealing with (the \textit{ Quantized Backward Dynamic Programming Principle} for Bermudan Options, see \cite{bally2003quantization,printems2005quantization,pages2018numerical} and the algorithm by \cite{sagna2010pricing, pages2018numerical} for Barrier Options based on the conditional law of the Brownian motion).

\subsection{Discretization scheme of a stochastic volatility model}

We first present the time discretization schemes we use for the asset-volatility couple $(S_t^{(\nu)}, v_t^{\nu})_{t \in [0,T]}$. For the volatility, we choose a Milstein on a \textit{boosted} version of the process in order to preserve the positivity of the volatility and we select an Euler-Maruyama scheme for the $\log$ of the asset.

\paragraph{The boosted volatility.}
Based on the discussion in Appendix \ref{RH:appendix:discussionschememilstein}, we will work with the following \textit{boosted} volatility process: $Y_t = \e^{\kappa t} v_t^{\nu}, t \in [0,T]$ for some $\kappa > 0$, whose diffusion is given by
\begin{equation}
	d Y_t = \e^{\kappa t} \kappa \theta dt + \xi \e^{\kappa t / 2} \sqrt{Y_t} d \widetilde W_t.
\end{equation}

The Milstein discretization scheme of $Y_t$ is given by
\begin{equation}\label{RH:discretized_boostedvol}
	\widebar{Y}_{t_{k+1}} = \mathcal{M}_{\widetilde b, \widetilde \sigma} \big( t_k, \widebar{Y}_{t_k}, Z_{k+1}^2 \big)
\end{equation}
with $t_k = \frac{T k}{n}$ and $\widetilde b$ and $\widetilde \sigma$ are given by
\begin{equation}\label{RH:drift_and_vol_boostedvol}
	\widetilde b(t,x) = \e^{\kappa t} \kappa \theta, \qquad \widetilde \sigma(t,x) = \xi \sqrt{x} \e^{\kappa t / 2} \quad \textrm{ and } \quad \widetilde \sigma_{x}^{\prime} (t,x) = \frac{\xi \e^{\kappa t / 2}}{2 \sqrt{x}}
\end{equation}
and $\mathcal{M}_{\widetilde b,\widetilde \sigma} (t,x,z)$ defined by
\begin{equation}\label{RH:MilsteinScheme}
	\begin{aligned}
		\mathcal{M}_{\widetilde b,\widetilde \sigma} (t,x,z)
		 & = x - \frac{ \widetilde \sigma(t,x)}{2 \widetilde \sigma_{x}^{\prime} (t,x)} + h \bigg(\widetilde b(t, x) - \frac{ (\widetilde \sigma \widetilde \sigma_{x}^{\prime}) (t,x)}{2}\bigg) + \frac{ (\widetilde \sigma \widetilde \sigma_{x}^{\prime}) (t,x) h}{2} \bigg( z + \frac{1}{\sqrt{h} \widetilde \sigma_{x}^{\prime} (t,x)} \bigg)^2.
	\end{aligned}
\end{equation}
We made this choice of scheme because, under the Feller condition, the positivity of $\mathcal{M}_{\widetilde b,\widetilde \sigma}$ is ensured, since
\begin{equation}\label{RH:MilsteinScheme_with_specificationmodel}
	\begin{aligned}
		\mathcal{M}_{\widetilde b,\widetilde \sigma} (t,x,z)
		 & = h \e^{\kappa t} \Big( \kappa \theta - \frac{\xi^2}{4} \Big) + h \frac{\xi^2 \e^{\kappa t}}{4} \bigg( z + \frac{2 \sqrt{x}}{\sqrt{h} \xi \e^{ \kappa t / 2} } \bigg)^2
	\end{aligned}
\end{equation}
and
\begin{equation*}
	\xi^2 \leq 2 \kappa \theta \leq 4 \kappa \theta.
\end{equation*}

Other schemes could have been used, see \cite{alfonsi2005discretization} for an extensive review of the existing schemes for the discretization of the CIR model, but in our case we needed one allowing us to use the fast recursive quantization, i.e., where we can express explicitly and easily the cumulative distribution function and the first partial moment of the scheme, which is the case of the Milstein scheme (we give more details in SubSection \ref{RH:subsection:hybridproductrecursivequantization}).

Hence, as our time-discretized scheme is well defined because its positivity is ensured if the Feller condition is satisfied, we can start to think of the time-discretization of our process $( S_{t_k}^{(\nu)} )_{k \in \llbracket 0, n \rrbracket}$.

\paragraph{The $\log$-asset.}

For the asset, the standard approach is to consider the process which is the logarithm of the asset $X_t = \log (S_t)$. Applying Itô's formula, the dynamics of $X_t$ is given by
\begin{equation}
	d X_t = \Big( r - q - \frac{v_t}{2} \Big) dt + \sqrt{v_t} d W_t.
\end{equation}
Now, using a standard Euler-Maruyama scheme for the discretization of $X_t$, we have
\begin{equation}\label{RH:discretized_couple}
	\left\{
	\begin{aligned}
		\widebar{X}_{t_{k+1}} & = \mathcal{E}_{b, \sigma} \big( t_k, \widebar{X}_{t_k}, \widebar{Y}_{t_k}, Z_{k+1}^1 \big)    \\
		\widebar{Y}_{t_{k+1}} & = \mathcal{M}_{\widetilde b, \widetilde \sigma} \big( t_k, \widebar{Y}_{t_k}, Z_{k+1}^2 \big)
	\end{aligned}
	\right.
\end{equation}
where $Z_{k+1}^1 \sim \N(0,1)$, $Z_{k+1}^2 \sim \N(0,1)$, $\Corr (Z_{k+1}^1, Z_{k+1}^2) = \rho$ and
\begin{equation} \label{RH:eulerscheme}
	\mathcal{E}_{b, \sigma} (t,x,y,z) = x + b(t,x,y) h + \sigma (t,x,y) \sqrt{ h } \, z
\end{equation}
with
\begin{equation}\label{RH:drift_and_vol_logasset}
	b(t,x,y) = r - q - \frac{\e^{-\kappa t} y}{2} \qquad \textrm{ and } \qquad \sigma (t,x,y) = \e^{-\kappa t / 2} \sqrt{y}.
\end{equation}

\subsection{Hybrid Product Recursive Quantization} \label{RH:subsection:hybridproductrecursivequantization}

In this part, we describe the methodology used for the construction of the product recursive quantization tree of the couple log asset- boosted volatility in the Heston model.

In Figure \ref{RH:fig:quantif_recursive_explication}, as an example, we synthesise the main idea behind the recursive quantization of a diffusion $v_t$ which has been time-discretized with $F_0(t, x, z)$. We start at time $t_0=0$ with a quantizer $\widehat v_0$ taking values in the grid $\Gamma_{t_0} = \ \{ v_1^0, \dots, v_{10}^0 \}$ of size $10$, where each point is represented by a black bullet ($\bullet$) with probability $p_{i}^{0} = \Prob ( \widehat v_0 = v_i^0 )$ is represented by a bar. In the Stationary Heston model, $\widehat v_0$ is an optimal quantization of the Gamma distribution given by \eqref{RH:gammalaw} and \eqref{RH:params_gammalaw}. Then, starting from this grid, we simulate the process from time $t_0$ to time $t_1 = 5$ days with our chosen time-discretization scheme $F_0(t, x, z)$, yielding $\widetilde v_1 = F_0(t_0, \widehat v_0, Z_1)$, where $Z_1$ is a standardized Gaussian random variable. Each trajectory starts from point $v_i^0$ with probability $p_i^0$. And finally we project the obtained distribution at time $t_1$ onto a grid $\Gamma_{t_1} = \{ v_1^1, \dots, v_{10}^1 \}$ of cardinality $10$, represented by black triangles ($\blacktriangleup$) such that $\widehat v_1$ is an optimal quantizer of the discretized and simulated process starting from quantizer $\widehat v_0$ at time $t_0 = 0$.

\begin{remark}
	In practice, for low dimensions, we do not simulate trajectories. We use the information on the law of $\widetilde v_1$ conditionally of starting from $\widehat v_0$. The knowledge of the distribution allows us to use deterministic algorithms during the construction of the optimal quantizer of $\widetilde v_1$ that are a lot faster than algorithms based on simulation.
\end{remark}

\begin{figure}[h]
	\centering
	\includegraphics[width=1.\textwidth]{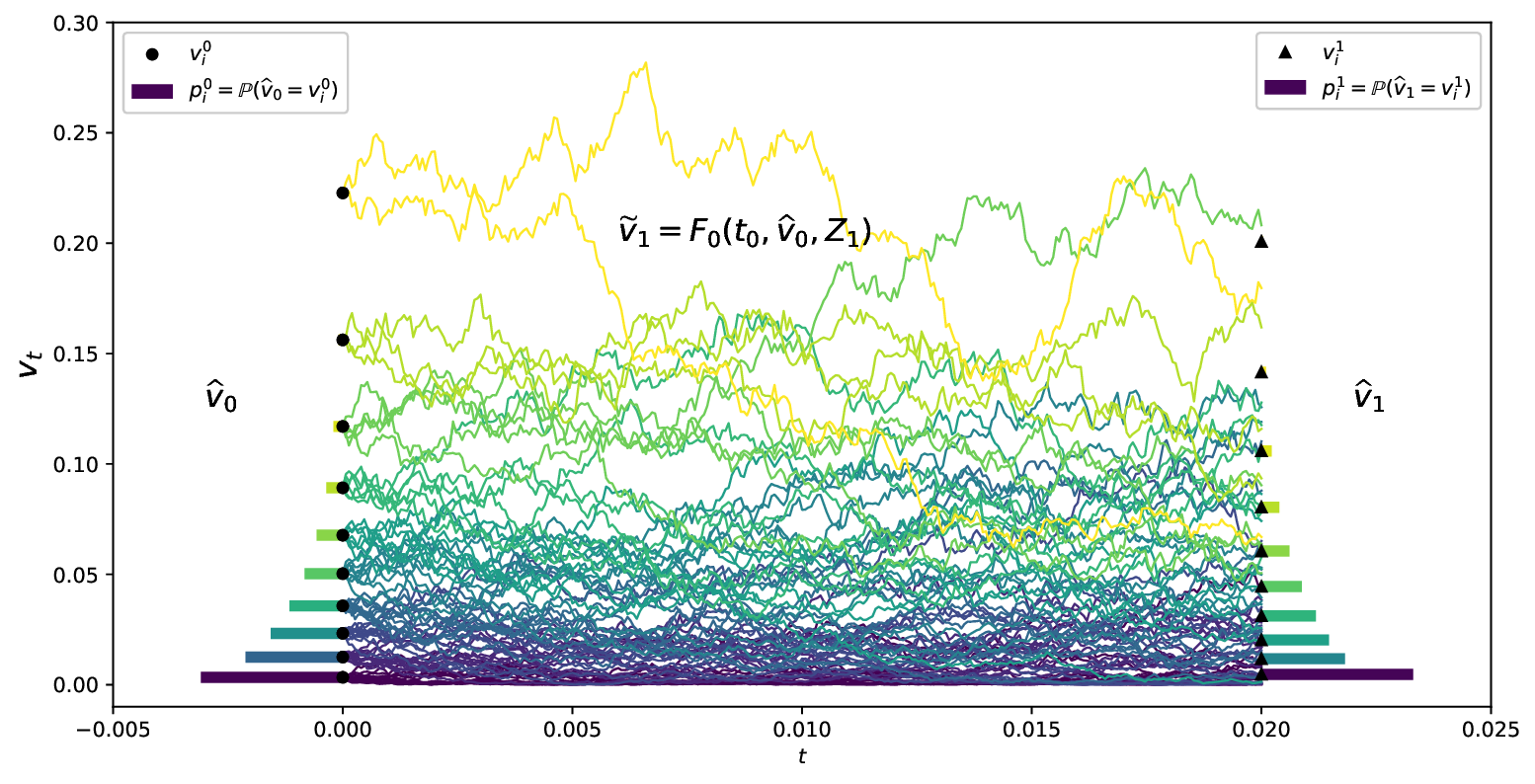}
	\caption[Example of recursive quantization of the volatility process in the Heston model for one time-step.]{\textit{Example of recursive quantization of the volatility process in the Heston model for one time-step.}}
	\label{RH:fig:quantif_recursive_explication}
\end{figure}

In our case, we consider the following stochastic volatility system
\begin{equation} \label{RH:systemvolsto}
	\left\{
	\begin{aligned}
		d X_t & = b (t, X_t, Y_t) dt + \sigma (t, X_t, Y_t) d W_t                        \\
		d Y_t & = \widetilde b (t, Y_t) dt + \widetilde \sigma (t, Y_t) d \widetilde W_t \\
	\end{aligned}
	\right.
\end{equation}
where $W_t$ and $\widetilde W_t$ are two correlated Brownian motions with correlation $\rho \in [-1, 1]$, $b$ and $\sigma$ are defined in \eqref{RH:drift_and_vol_logasset} and $\widetilde b$ and $\widetilde \sigma$ are defined in \eqref{RH:drift_and_vol_boostedvol}. Our aim is to build a quantization tree of the couple $(X_t, Y_t)$ at given dates $t_k, \, k=0, \dots, n$ based on a recursive product quantization scheme. The product recursive quantization of such diffusion system has already been studied by \cite{callegaro2017pricing} and \cite{rudd2017fast} in the case case where both processes are discretized using an Euler-Maruyama scheme.

One can notice that building the quantization tree $(\widehat{Y}_k)_{k \in \llbracket 0, n \rrbracket}$ approximating $( Y_t )_{t \in [0, T]}$ is a one dimensional problem as the diffusion of $Y_t$ is autonomous. Hence, based on our choice of discretization scheme, we will apply the fast recursive quantization (detailed above in Figure \ref{RH:fig:quantif_recursive_explication}) that was introduced in \cite{pages2015recursive} for one dimensional diffusion discretized by an Euler-Maruyama discretization scheme and then extended to higher order schemes, still in one dimension, by \cite{mcwalter2018recursive}. The minor difference with existing literature is that, in our problem, the initial condition $y_0$ is not deterministic.

Then, using the quantization tree of $(\widehat Y_k)_{k \in \llbracket 0, n \rrbracket}$ we will be able to build the tree $(\widehat{X}_k)_{k \in \llbracket 0, n \rrbracket}$ following ideas developed in \cite{abbas2018product,rudd2017fast,callegaro2017american,callegaro2017pricing}. Indeed, once the quantization tree of the volatility is built, we are in a one-dimensional setting and we are able to use fast deterministic algorithms.

\subsubsection{Quantizing the volatility (a one-dimensional case)}

% Based on Remark \ref{RH:remark::vol}, the volatility dynamic is autonomous so we are facing a one dimensional problem.
Let $(Y_t)_{t \in [0,T]}$ be a stochastic process in $\R$ and solution to the stochastic differential equation
\begin{equation}
	d Y_t = \widetilde b (t, Y_t) dt + \widetilde \sigma (t, Y_t) d \widetilde W_t
\end{equation}
where $Y_0$ has the same law than the stationary measure $\nu$: $\Law (Y_0) = \nu$. In order to approximate our diffusion process, we choose a Milstein scheme for the time discretization, as defined in \ref{RH:MilsteinScheme} and we build recursively the Markovian quantization tree $(\widehat Y_{t_k})_{k \in \llbracket 0, n \rrbracket}$ where $\widehat Y_{t_{k+1}}$ is the Vorono\"i quantization of $\widetilde Y_{t_{k+1}}$ defined by
\begin{equation} \label{RH:recurQuantiVol}
	\widetilde Y_{t_{k+1}} = \mathcal{M}_{\widetilde b, \widetilde \sigma} \big( t_k, \widehat Y_{t_k}, Z_{k+1}^2 \big), \qquad \widehat Y_{t_{k+1}} = \Proj_{\Gamma_{N_{2,k+1}}^Y} \big( \widetilde{Y}_{t_{k+1}} \big)
\end{equation}
and the projection operator $\Proj_{\Gamma_{N_{2,k+1}}^Y}(\cdot)$ is defined in \eqref{RH:quantizOfX}, $\Gamma_{N_{2,k+1}}^Y = \big\{ y_1^{k+1}, \dots, y_{N_{2,k+1}}^k \big\}$ is the grid of the optimal quantizer of $\widetilde{Y}_{t_{k+1}}$ and $Z_{k+1}^2 \sim \N(0,1)$. In order alleviate the notations, we will denote $\widetilde Y_k$ and $\widehat Y_k$ in place of $\widetilde Y_{t_k}$ and $\widehat Y_{t_k}$.

\medskip
The first step consists in building $\widehat Y_0$, an optimal quantizer of size $N_{2,0}$ of $Y_0$. Noticing that $Y_0 = v_0^{\nu}$, we use the optimal quantizer we built for the pricing of European options. Then, we build recursively $(\widehat Y_k)_{k=1, \dots, n}$, where the $N_{2,k}$-tuple are defined by $y_{_{1:N_{2,k}}}^k = \big( y_1^k, \dots, y_{N_{2,k}}^k \big)$, by solving iteratively the minimization problem defined in the Appendix \ref{RH:appendix:optquant} in \eqref{RH:distortozero}, with the help of Lloyd's method I. Replacing $X$ by $\widetilde Y_{k+1}$ in \eqref{RH:distortozero} yields
\begin{equation}
	\begin{aligned}
		y_j^{k+1}
		 & = \frac{ \E \Big[ \widetilde Y_{k+1} \1_{ Y_{k+1} \, \in \, C_j \big( \Gamma_{N_{2,k+1}}^Y \big) } \Big] }{\Prob \Big( \widetilde Y_{k+1} \in C_j \big( \Gamma_{N_{2,k+1}}^Y \big) \Big) }                                                                                                                                                                                                                          \\
		 & = \frac{ \E \Big[ \mathcal{M}_{\widetilde b, \widetilde \sigma} \big( t_k, \widehat Y_k, Z_{k+1}^2 \big) \1_{ \mathcal{M}_{\widetilde b, \widetilde \sigma} \big( t_k, \widehat Y_k, Z_{k+1}^2 \big) \, \in \, C_j \big( \Gamma_{N_{2,k+1}}^Y \big) } \Big] }{ \Prob \Big( \mathcal{M}_{\widetilde b, \widetilde \sigma} \big( t_k, \widehat Y_k, Z_{k+1}^2 \big) \in C_j \big( \Gamma_{N_{2,k+1}}^Y \big) \Big) }.
	\end{aligned}
\end{equation}
Now, preconditioning by $\widehat Y_k$ in the numerator and the denominator and using $p_i^k = \Prob \big( \widehat Y_k = y_i^k \big)$, we have
\begin{equation}
	\begin{aligned}
		y_j^{k+1}
		 & = \frac{ \E \bigg[ \E \Big[ \mathcal{M}_{\widetilde b, \widetilde \sigma} \big( t_k, \widehat Y_k, Z_{k+1}^2 \big) \1_{ \mathcal{M}_{\widetilde b, \widetilde \sigma} \big( t_k, \widehat Y_k, Z_{k+1}^2 \big) \, \in \, C_j \big( \Gamma_{N_{2,k+1}}^Y \big) } \mid \widehat Y_k \Big] \bigg] }{ \E \bigg[ \Prob \Big( \mathcal{M}_{\widetilde b, \widetilde \sigma} \big( t_k, \widehat Y_k, Z_{k+1}^2 \big) \in C_j \big( \Gamma_{N_{2,k+1}}^Y \big) \mid \widehat Y_k \Big) \bigg] } \\
		 & = \frac{ \displaystyle \sum_{i=1}^{N_{2,k}} \E \Big[ \mathcal{M}_{\widetilde b, \widetilde \sigma} \big( t_k, y_i^k, Z_{k+1}^2 \big) \1_{ \mathcal{M}_{\widetilde b, \widetilde \sigma} \big( t_k, y_i^k, Z_{k+1}^2 \big) \, \in \, C_j \big( \Gamma_{N_{2,k+1}}^Y \big) } \Big] \, p_i^k}{ \displaystyle \sum_{i=1}^{N_{2,k}} \Prob \Big( \mathcal{M}_{\widetilde b, \widetilde \sigma} \big( t_k, y_i^k, Z_{k+1}^2 \big) \in C_j \big( \Gamma_{N_{2,k+1}}^Y \big) \Big) \, p_i^k}      \\
		 & = \frac{ \displaystyle \sum_{i=1}^{N_{2,k}} \Big( K_i^k \big( y_{j + 1/2}^{k+1} \big) - K_i^k \big( y_{j - 1/2}^{k+1} \big) \Big) \, p_i^k}{ \displaystyle \sum_{i=1}^{N_{2,k}} \Big( F_i^k \big( y_{j + 1/2}^{k+1} \big) - F_i^k \big( y_{j - 1/2}^{k+1} \big) \Big) \, p_i^k }                                                                                                                                                                                                         \\
	\end{aligned}
\end{equation}
where $C_j \big( \Gamma_{N_{2,k+1}}^Y \big) = \big( y_{j - 1/2}^{k+1}, y_{j + 1/2}^{k+1} \big]$ is defined in \eqref{RH:def_voronoi_cells}. $F_i^k$ and $K_i^k$ are the cumulative distribution function and the first partial moment function of $U_i^k \sim \mu_i^k + \kappa_i^k ( Z_{k+1}^1 + \lambda_i^k )^2$ respectively with
\begin{equation} \label{RH:mean_std_milstein}
	\begin{aligned}
		\kappa_j^k                   & = \frac{(\widetilde \sigma \widetilde \sigma_{x}^{\prime}) (t_k,y_j^k) h}{2}, \qquad \qquad \lambda_j^k = \frac{1}{\sqrt{h} \widetilde \sigma_{x}^{\prime} (t_k,y_j^k)},                                   \\
		\textrm{and } \qquad \mu_j^k & = y_j^k - \frac{\sigma(t_k,y_j^k)}{2 \widetilde \sigma_{x}^{\prime} ( t_k, y_j^k )} + h \bigg( \widetilde b(t_k, y_j^k) - \frac{(\widetilde \sigma \widetilde \sigma_{x}^{\prime}) (t_k,y_j^k)}{2} \bigg).
	\end{aligned}
\end{equation}
The functions $F_i^k$ and $K_i^k$ can explicitly be determined in terms of the density and the cumulative distribution function of the normal distribution.

\begin{lemme}
	Let $U = \mu + \kappa (Z + \lambda)^2$, with $\mu, \kappa, \lambda \in \R$, $\lambda \geq 0$, $\kappa > 0$ and $Z \sim \N (0, 1)$ then the cumulative distribution function $F_{_X}$ and the first partial moment $K_{_U}$ of $U$ are given by
	\begin{equation}
		\begin{aligned}
			F_{_U} ( x ) & = \big( F_{_Z} ( x_{_+} ) - F_{_Z} ( x_{_-} ) \big) \1_{x > \mu }                                                                                                                                       \\
			K_{_U} ( x ) & = \bigg( F_{_U} ( x ) \big( \mu + \kappa ( \lambda^2 + 1 ) \big) + \frac{ \kappa }{ \sqrt{2 \pi} } \Big( x_{_-} \e^{- \frac{x_{_+}^2}{2}} - x_{_+} \e^{- \frac{x_{_-}^2}{2} } \Big) \bigg) \1_{x > \mu} \\
		\end{aligned}
	\end{equation}
	where $x_{_+} = \sqrt{ \frac{x - \mu}{\kappa} } - \lambda$, $x_{_-} = - \sqrt{ \frac{x - \mu}{\kappa} } - \lambda $ and $F_{_Z}$ is the cumulative distribution function of $Z$.
\end{lemme}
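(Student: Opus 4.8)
The plan is to rewrite the event $\{U\le x\}$ as a two–sided bound on the Gaussian variable $Z$ and then to evaluate the resulting truncated Gaussian moments by elementary integration by parts. Throughout I write $\phi=F_{_Z}'$ for the standard normal density, so that $\phi'(z)=-z\,\phi(z)$.

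\emph{The distribution function.} Since $\kappa>0$, one has $\{U\le x\}=\{(Z+\lambda)^2\le (x-\mu)/\kappa\}$. If $x\le\mu$ the right–hand side is nonpositive while $(Z+\lambda)^2\ge 0$, so this event is $\Prob$-null (for $x=\mu$ it reduces to $\{Z=-\lambda\}$), which produces the factor $\1_{x>\mu}$. For $x>\mu$, taking square roots gives $\{x_{_-}\le Z\le x_{_+}\}$, whence $F_{_U}(x)=F_{_Z}(x_{_+})-F_{_Z}(x_{_-})$.

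\emph{The first partial moment.} On $\{x_{_-}\le Z\le x_{_+}\}$ I would expand $U=\mu+\kappa\big(Z^2+2\lambda Z+\lambda^2\big)$ and split $K_{_U}(x)=\E[U\1_{U\le x}]$ into its constant, linear and quadratic parts in $Z$. Using $\phi'=-z\phi$ gives $\int_{x_{_-}}^{x_{_+}} z\,\phi(z)\,dz=\phi(x_{_-})-\phi(x_{_+})$, and one integration by parts gives $\int_{x_{_-}}^{x_{_+}} z^2\,\phi(z)\,dz=x_{_-}\phi(x_{_-})-x_{_+}\phi(x_{_+})+F_{_U}(x)$. Collecting the three contributions yields, on $\{x>\mu\}$,
\[
K_{_U}(x)=\big(\mu+\kappa(\lambda^2+1)\big)F_{_U}(x)+\kappa\Big[(x_{_-}+2\lambda)\phi(x_{_-})-(x_{_+}+2\lambda)\phi(x_{_+})\Big].
\]

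\emph{Final simplification.} The only step requiring a small observation is the identity $x_{_+}+x_{_-}=-2\lambda$, immediate from the definition of $x_{_\pm}$, which gives $x_{_-}+2\lambda=-x_{_+}$ and $x_{_+}+2\lambda=-x_{_-}$; substituting, the bracket collapses to $x_{_-}\phi(x_{_+})-x_{_+}\phi(x_{_-})$, and writing $\phi(z)=\e^{-z^2/2}/\sqrt{2\pi}$ produces exactly the announced expression, with the indicator $\1_{x>\mu}$ inherited from $F_{_U}$. I expect no genuine obstacle: all the integrals involved are elementary, and the only thing to watch is the bookkeeping in this last algebraic step.
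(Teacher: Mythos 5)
Your proof is correct: the reduction of $\{U\le x\}$ to $\{x_{_-}\le Z\le x_{_+}\}$, the truncated Gaussian moments via $\phi'(z)=-z\phi(z)$ and one integration by parts, and the final collapse using $x_{_+}+x_{_-}=-2\lambda$ all check out and reproduce exactly the stated formulas, including the indicator $\1_{x>\mu}$. The paper states this lemma without proof, and your argument is precisely the elementary direct computation it implicitly relies on, so there is nothing to compare beyond noting your write-up fills that gap correctly.
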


Finally, we can apply the Lloyd algorithm defined in Appendix \ref{RH:LloydAlgo} with $F_{_X}$ and $K_{_X}$ defined by
\begin{equation}
	\begin{aligned}
		F_{_X} ( x ) = \sum_{i=1}^{N_{2,k}} p_i^k \, F_i^k ( x ) \qquad \textrm{ and } \qquad K_{_X} ( x ) = \sum_{i=1}^{N_{2,k}} p_i^k \, K_i^k ( x ).
	\end{aligned}
\end{equation}

In order to be able to build recursively the tree quantization $(\widehat Y_k)_{k = 0, \dots, n}$, we need to have access to the weights $p_i^k = \Prob \big( \widehat Y_k = y_i^k \big)$, which can be themselves computed recursively, as well as the conditional probabilities $p_{ij}^k = \Prob \big( \widehat Y_{k+1} = y_j^{k+1} \mid \widehat Y_k = y_i^k \big)$.

\begin{lemme}
	The conditional probabilities $p_{ij}^k$ are given by
	\begin{equation}\label{RH:condiProbavol}
		\begin{aligned}
			p_{ij}^k
			= & F_i^k \big( y_{j + 1/2}^{k+1} \big) - F_i^k \big( y_{j - 1/2}^{k+1} \big).
		\end{aligned}
	\end{equation}
	And the probabilities $p_j^{k+1}$ are given by
	\begin{equation}
		p_j^{k+1} = \sum_{i=1}^{N_{2,k}} p_i^{k} p_{ij}^k.
	\end{equation}
\end{lemme}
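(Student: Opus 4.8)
The plan is to read off both identities directly from the recursive definition \eqref{RH:recurQuantiVol} by conditioning on the value of $\widehat Y_k$ and exploiting that the Milstein innovation $Z_{k+1}^2$ is independent of the past. First I would record the structural fact that $\widehat Y_k$ is a measurable (deterministic) function of $v_0^{\nu}$ and of the innovations $Z_1^2,\dots,Z_k^2$ only: it is obtained by successive Voronoï projections onto the \emph{deterministic} grids produced by Lloyd's method, so no extra randomness enters. In particular $\widehat Y_k$ is independent of $Z_{k+1}^2$. Consequently, for any $i$ with $p_i^k = \Prob(\widehat Y_k = y_i^k) > 0$, the conditional law of $\widetilde Y_{k+1} = \mathcal{M}_{\widetilde b,\widetilde\sigma}(t_k,\widehat Y_k,Z_{k+1}^2)$ given $\{\widehat Y_k = y_i^k\}$ is the (unconditional) law of $U_i^k = \mathcal{M}_{\widetilde b,\widetilde\sigma}(t_k,y_i^k,Z_{k+1}^2)$, because conditioning a function $h(A,B)$ on a discrete variable $A$ with $\Prob(A=a)>0$ amounts, when $A \indep B$, to replacing $A$ by $a$ while leaving the law of $B$ unchanged.

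Next I would identify that conditional law with the one computed in the preceding Lemma. By \eqref{RH:MilsteinScheme} together with \eqref{RH:mean_std_milstein}, $U_i^k = \mu_i^k + \kappa_i^k (Z_{k+1}^2 + \lambda_i^k)^2$ with $\kappa_i^k > 0$ and $\lambda_i^k \geq 0$ (the latter since $\widetilde\sigma_x'(t_k,y_i^k) > 0$ for the model at hand), so the previous Lemma applies and furnishes the cumulative distribution function $F_i^k$ of $U_i^k$. Since $\widehat Y_{k+1} = \Proj_{\Gamma_{N_{2,k+1}}^Y}(\widetilde Y_{k+1})$, the event $\{\widehat Y_{k+1} = y_j^{k+1}\}$ coincides with $\{\widetilde Y_{k+1} \in C_j(\Gamma_{N_{2,k+1}}^Y)\}$, and by \eqref{RH:def_voronoi_cells} the cell is the half-open interval $(y_{j-1/2}^{k+1}, y_{j+1/2}^{k+1}]$, with the usual conventions making the two extreme cells $(0, y_{3/2}^{k+1}]$ and $(y_{N_{2,k+1}-1/2}^{k+1}, +\infty)$. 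Combining these,
\[
p_{ij}^k = \Prob\big(U_i^k \in (y_{j-1/2}^{k+1}, y_{j+1/2}^{k+1}]\big) = F_i^k\big(y_{j+1/2}^{k+1}\big) - F_i^k\big(y_{j-1/2}^{k+1}\big),
\]
which is \eqref{RH:condiProbavol}. The formula for $p_j^{k+1}$ then follows from the law of total probability over the partition $\{\widehat Y_k = y_i^k\}_{1\le i\le N_{2,k}}$, namely $p_j^{k+1} = \sum_{i=1}^{N_{2,k}} \Prob(\widehat Y_{k+1} = y_j^{k+1}\mid \widehat Y_k = y_i^k)\,\Prob(\widehat Y_k = y_i^k) = \sum_{i=1}^{N_{2,k}} p_i^k p_{ij}^k$, the terms with $p_i^k = 0$ contributing nothing so that the arbitrary choice of the conditional probability there is harmless.

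The only point requiring genuine care is the conditioning/independence step, i.e. making precise that conditioning on $\{\widehat Y_k = y_i^k\}$ freezes the state argument of $\mathcal{M}_{\widetilde b,\widetilde\sigma}$ without perturbing the law of $Z_{k+1}^2$, which is what lets us substitute the CDF $F_i^k$ from the previous Lemma; this is where the hypothesis that the recursion's grids are deterministic is used. Everything else — the description of the Voronoï cells as half-open intervals with the correct boundary conventions, and the telescoping of $F_i^k$ at consecutive midpoints — is routine bookkeeping.
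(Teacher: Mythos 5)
Your proof is correct and follows essentially the same route as the paper: condition on $\{\widehat Y_k = y_i^k\}$, use independence of $Z_{k+1}^2$ from $\widehat Y_k$ to freeze the state argument of $\mathcal{M}_{\widetilde b,\widetilde\sigma}$, identify the resulting law with $U_i^k$ so that its CDF $F_i^k$ evaluated at the cell endpoints gives \eqref{RH:condiProbavol}, then apply the law of total probability for $p_j^{k+1}$. The only difference is that you make explicit the measurability/independence argument (that $\widehat Y_k$ depends only on $v_0^{\nu}$ and $Z_1^2,\dots,Z_k^2$ through deterministic grids), which the paper leaves implicit.
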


\begin{proof}
	The
	\begin{equation*}
		\begin{aligned}
			p_{ij}^k
			= & \Prob \big( \widehat Y_{k+1} = y_j^{k+1} \mid \widehat Y_k = y_i^k \big)                                                                                                    \\
			= & \Prob \Big( \mathcal{M}_{\widetilde b, \widetilde \sigma} \big( t_k, \widehat Y_k, Z_{k+1}^2 \big) \in C_j \big( \Gamma_{N_{2,k+1}}^Y \big) \mid \widehat Y_k = y_i^k \Big) \\
			= & \Prob \Big( \mathcal{M}_{\widetilde b, \widetilde \sigma} \big( t_k, y_i^k, Z_{k+1}^2 \big) \in C_j \big( \Gamma_{N_{2,k+1}}^Y \big) \Big)                                  \\
			= & F_i^k \big( y_{j + 1/2}^{k+1} \big) - F_i^k \big( y_{j - 1/2}^{k+1} \big)
		\end{aligned}
	\end{equation*}
	and
	\begin{equation*}
		\begin{aligned}
			p_j^{k+1}
			= & \Prob \big( \widehat Y_{k+1} = y_j^{k+1} \big) = \sum_{i=1}^{N_{2,k}} \Prob \big( \widehat Y_{k+1} = y_j^{k+1} \mid \widehat Y_k = y_i^k \big) \Prob \big( \widehat Y_k = y_i^k \big) \\
			= & \sum_{i=1}^{N_{2,k}} p_i^{k} \, p_{ij}^k.
		\end{aligned}
	\end{equation*}
\end{proof}

As an illustration, we display in Figure \ref{RH:fig:recursive_quantization_vol_60days} the rescaled grids obtained after recursive quantization of the boosted-volatility, where $\widehat v_k = \e^{-\kappa t_k} \widehat Y_k$ and $(\widehat Y_k)_{k = 1, \dots, n}$ are the quantizers built using the fast recursive quantization approach.

\begin{figure}[!h]
	\centering
	\includegraphics[width=1.\textwidth]{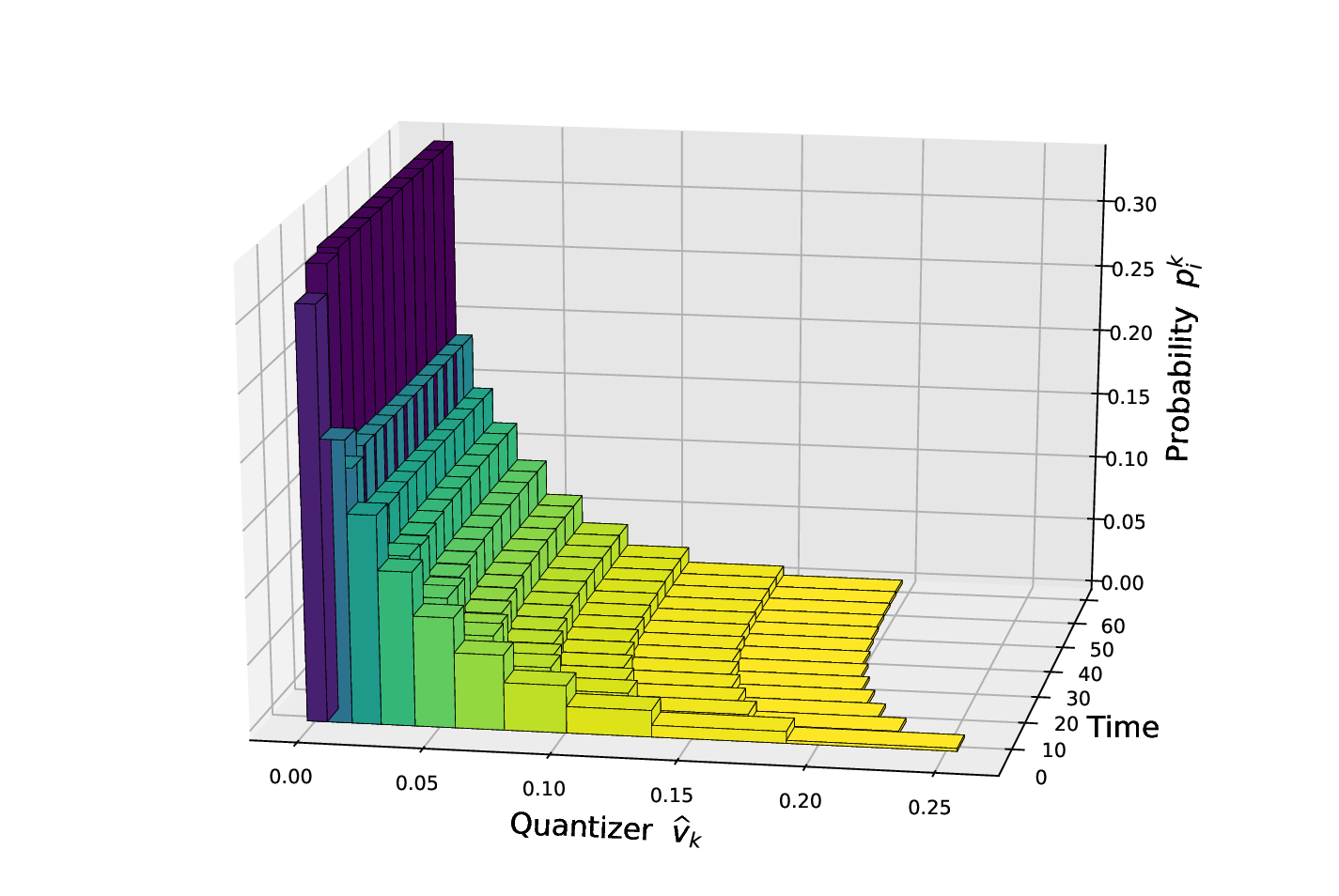}
	\caption[Rescaled Recursive quantization of the boosted-volatility process with its associated weights from $t=0$ to $t=60$ days with a time step of $5$ days with grids of size $N=10$.]{\textit{Rescaled Recursive quantization of the boosted-volatility process with its associated weights from $t=0$ to $t=60$ days with a time step of $5$ days with grids of size $N=10$. The recursive quantization methodology is applied to $\widehat Y_k$ and then we display the rescaled volatility $\widehat v_k = \e^{- \kappa t_k} \widehat Y_k$.}}
	\label{RH:fig:recursive_quantization_vol_60days}
\end{figure}

\subsubsection{Quantizing the asset (a one-dimensional case again)}

Now, using the fact that $(Y_t)_{t}$ has already been quantized and the Euler-Maruyama scheme of $(X_t)_{t}$, as defined \eqref{RH:eulerscheme}, we define the Markov quantized scheme
\begin{equation}\label{RH:recurQuantiAsset}
	\widetilde{X}_{t_{k+1}} = \mathcal{E}_{b, \sigma} \big( t_k, \widehat X_{t_k}, \widehat Y_{t_k}, Z_{k+1}^1 \big), \qquad \widehat X_{t_{k+1}} = \Proj_{\Gamma_{N_{1,k+1}}^X} \big( \widetilde{X}_{t_{k+1}} \big)
\end{equation}
where the projection operator $\Proj_{\Gamma_{N_{1,k+1}}^X}(\cdot)$ is defined in \eqref{RH:quantizOfX}, $\Gamma_{N_{1,k+1}}^X$ is the optimal $N_{1,k+1}$-quantizer of $\widetilde{X}_{t_{k+1}}$ and $Z_{k+1}^1 \sim \N(0,1)$. Again, in order to simplify the notations, $\widetilde X_{t_{k}}$ and $\widehat X_{t_k}$ are denoted in what follows by $\widetilde X_k$ and $\widehat X_k$.

Note that we are still in an one-dimensional case, hence we can apply the same methodology as developed in Appendix \ref{RH:appendix:optquant} and build recursively the quantization $\big( \widehat X_k \big)_{k = 0, \dots, n}$ as detailed above, where the $N_{1,k}$-tuple are defined by $x_{_{1:N_{1,k}}}^k = \big( x_1^k, \dots, x_{N_{1,k}}^k \big)$. Replacing $X$ by $\widetilde X_k$ in \eqref{RH:distortozero} yield
\begin{equation}
	\begin{aligned}
		x_{j_1}^{k+1}
		 & = \frac{ \E \Big[ \mathcal{E}_{b, \sigma} \big( t_k, \widehat X_{t_k}, \widehat Y_{t_k}, Z_{k+1}^1 \big) \1_{ \mathcal{E}_{b, \sigma} \big( t_k, \widehat X_{t_k}, \widehat Y_{t_k}, Z_{k+1}^1 \big) \, \in \, C_{j_1} \big( \Gamma_{N_{1,k+1}}^X \big) } \Big] }{\Prob \Big( \mathcal{E}_{b, \sigma} \big( t_k, \widehat X_{t_k}, \widehat Y_{t_k}, Z_{k+1}^1 \big) \in C_{j_1} \big( \Gamma_{N_{1,k+1}}^X \big) \Big) }                                                                                                                    \\
		 & = \frac{ \displaystyle \sum_{i_1=1}^{N_{1,k}} \sum_{i_2=1}^{N_{2,k}} \E \Big[ \mathcal{E}_{b, \sigma} \big( t_k, x_{i_1}^k, y_{i_2}^k, Z_{k+1}^1 \big) \1_{ \mathcal{E}_{b, \sigma} \big( t_k, x_{i_1}^k, y_{i_2}^k, Z_{k+1}^1 \big) \, \in \, C_{j_1} \big( \Gamma_{N_{1,k+1}}^X \big) } \Big] \, p_{(i_1,i_2)}^k}{ \displaystyle \sum_{i_1=1}^{N_{1,k}} \sum_{i_2=1}^{N_{2,k}} \Prob \Big( \mathcal{E}_{b, \sigma} \big( t_k, x_{i_1}^k, y_{i_2}^k, Z_{k+1}^1 \big) \in C_{j_1} \big( \Gamma_{N_{1,k+1}}^X \big) \Big) \, p_{(i_1,i_2)}^k} \\
		 & = \frac{ \displaystyle \sum_{i_1=1}^{N_{1,k}} \sum_{i_2=1}^{N_{2,k}} \Big( K_{(i_1,i_2)}^k \big( x_{j_1 + 1/2}^{k+1} \big) - K_{(i_1,i_2)}^k \big( x_{j_1 - 1/2}^{k+1} \big) \Big) \, p_{(i_1,i_2)}^k}{ \displaystyle \sum_{i_1=1}^{N_{1,k}} \sum_{i_2=1}^{N_{2,k}} \Big( F_{(i_1,i_2)}^k \big( x_{j_1 + 1/2}^{k+1} \big) - F_{(i_1,i_2)}^k \big( x_{j_1 - 1/2}^{k+1} \big) \Big) \, p_{(i_1,i_2)}^k}
	\end{aligned}
\end{equation}
where $p_{(i_1,i_2)}^k = \Prob \big( \widehat X_k = x_{i_1}^k, \widehat Y_k = y_{i_2}^k \big)$ and $F_{(i_1,i_2)}^k$ and $K_{(i_1,i_2)}^k$ are the cumulative distribution function and the first partial moment function of the normal distribution $\mu_{(i_1,i_2)}^k + Z_{k+1}^1 \sigma_{(i_1,i_2)}^k $ and they are defined by
\begin{equation}
	\begin{aligned}
		F_{(i_1,i_2)}^k ( x ) & = F_{_Z} \bigg( \frac{x - \mu_{(i_1,i_2)}^k }{\sigma_{(i_1,i_2)}^k} \bigg)                                                                                                                   \\
		K_{(i_1,i_2)}^k ( x ) & = \mu_{(i_1,i_2)}^k F_{_Z} \bigg( \frac{x - \mu_{(i_1,i_2)}^k }{\sigma_{(i_1,i_2)}^k} \bigg) + \sigma_{(i_1,i_2)}^k K_{_Z} \bigg( \frac{x - \mu_{(i_1,i_2)}^k }{\sigma_{(i_1,i_2)}^k} \bigg) \\
	\end{aligned}
\end{equation}
with
\begin{equation} \label{RH:mean_std_euler}
	\mu_{(i_1,i_2)}^k = x_{i_1}^k + b(t_k, x_{i_1}^k, y_{i_2}^k) h \qquad \textrm{ and } \qquad \sigma_{(i_1,i_2)}^k = \sigma(t_k, x_{i_1}^k, y_{i_2}^k) \sqrt{h}
\end{equation}
and $F_{_Z}$ and $K_{_Z}$ are the cumulative distribution function and the first partial moment of the standard normal distribution.

Finally, we apply the Lloyd method defined in Appendix \eqref{RH:LloydAlgo} with $F_{_X}$ and $K_{_X}$ defined by
\begin{equation}
	\begin{aligned}
		F_{_X} ( x ) = \sum_{i_1=1}^{N_{1,k}} \sum_{i_2=1}^{N_{2,k}} p_{(i_1,i_2)}^k \, F_{(i_1,i_2)}^k ( x ) \qquad \textrm{ and } \qquad K_{_X} ( x ) = \sum_{i_1=1}^{N_{1,k}} \sum_{i_2=1}^{N_{2,k}} p_{(i_1,i_2)}^k \, K_{(i_1,i_2)}^k ( x ).
	\end{aligned}
\end{equation}

The sensitive part concerns the computation of the joint probabilities $p_{(i_1,i_2)}^k$. Indeed, they are needed at each step in order to be able to design recursively the quantization tree.
\begin{lemme}
	The joint probabilities $p_{(i_1,i_2)}^k$ are given by the following forward induction
	\begin{equation}\label{RH:joint_proba_couple}
		p_{(j_1,j_2)}^{k+1}
		= \sum_{i=1}^{N_{1,k}} \sum_{j=1}^{N_{2,k}} p_{(i_1,i_2)}^k \Prob \big( \widehat X_{k+1} = x_{j_1}^{k+1}, \widehat Y_{k+1} = y_{j_2}^{k+1} \mid \widehat X_k = x_{i_1}^k, \widehat Y_k = y_{i_2}^k \big)
	\end{equation}
	where the joint conditional probabilities $\Prob \big( \widehat X_{k+1} = x_{j_1}^{k+1}, \widehat Y_{k+1} = y_{j_2}^{k+1} \mid \widehat X_k = x_{i_1}^k, \widehat Y_k = y_{i_2}^k \big)$ are given by the formulas below, depending on the correlation
	\begin{itemize}[wide=0pt]
		\item if $\Corr(Z_{k+1}^1, Z_{k+1}^2) = \rho = 0$
		      \begin{equation}
			      \Prob \big( \widehat X_{k+1} = x_{j_1}^{k+1}, \widehat Y_{k+1} = y_{j_2}^{k+1} \mid \widehat X_k = x_{i_1}^k, \widehat Y_k = y_{i_2}^k \big) = p_{i_2 j_2}^k \, \Big[ \N \big( x_{i_1,i_2,j_1,+}^{k} \big) - \N \big( x_{i_1,i_2,j_1,-}^{k} \big) \Big],
		      \end{equation}
		      where $p_{i_2 j_2}^k$ is defined in \eqref{RH:condiProbavol} and
		      \begin{equation}
			      x_{i_1,i_2,j_1,-}^{k} = \frac{x_{j_1-1/2}^{k+1} - \mu_{(i_1,i_2)}^k }{\sigma_{(i_1,i_2)}^k},\qquad x_{i_1,i_2,j_1,+}^{k} = \frac{x_{j_1+1/2}^{k+1} - \mu_{(i_1,i_2)}^k }{\sigma_{(i_1,i_2)}^k},
		      \end{equation}
		      with $\mu_{(i_1,i_2)}^k$ and $\sigma_{(i_1,i_2)}^k$ defined in \eqref{RH:mean_std_euler}.

		\item if $\Corr(Z_{k+1}^1, Z_{k+1}^2) = \rho \neq 0$
		      \begin{equation}\label{RH:conditionalProba}
			      \begin{aligned}
				      \Prob & \big( \widehat X_{k+1} = x_{j_1}^{k+1}, \widehat Y_{k+1} = y_{j_2}^{k+1} \mid \widehat X_k = x_{i_1}^k, \widehat Y_k = y_{i_2}^k \big)                                                                                  \\
				            & = \Prob \Big( Z_{k+1}^1 \in \big( x_{i_1,i_2,j_1,-}^{k}, x_{i_1,i_2,j_1,+}^{k} \big], Z_{k+1}^2 \in \Big( \sqrt{y_{i_2,j_2,-}^{k}} - \lambda_{i_2}^k, \sqrt{y_{i_2,j_2,+}^{k}} - \lambda_{i_2}^k \Big] \Big)            \\
				            & \qquad + \Prob \Big( Z_{k+1}^1 \in \big( x_{i_1,i_2,j_1,-}^{k}, x_{i_1,i_2,j_1,+}^{k} \big], Z_{k+1}^2 \in \Big[ - \sqrt{y_{i_2,j_2,+}^{k}} - \lambda_{i_2}^k, - \sqrt{y_{i_2,j_2,-}^{k}} - \lambda_{i_2}^k \Big) \Big) \\
			      \end{aligned}
		      \end{equation}
		      where
		      \begin{equation}
			      y_{i_2,j_2,-}^{k} = 0 \vee \frac{y_{j_2-1/2}^{k+1} - \mu_{i_2}^k}{\kappa_{i_2}^k},\qquad y_{i_2,j_2,+}^{k} = 0 \vee \frac{y_{j_2+1/2}^{k+1} - \mu_{i_2}^k}{\kappa_{i_2}^k},
		      \end{equation}
		      with $\mu_{i_2}^k$, $\kappa_{i_2}^k$ and $\lambda_{i_2}^k$ defined in \eqref{RH:mean_std_milstein}.

	\end{itemize}
\end{lemme}

\begin{remark}
	The probability in the right hand side of \eqref{RH:conditionalProba} can be computed using the cumulative distribution function of a correlated bivariate normal distribution\footnote{C++ implementation of the upper right tail of a bivariate normal distribution can be found in John Burkardt's website \url{https://people.sc.fsu.edu/~jburkardt/cpp_src/toms462/toms462.html}.}. Indeed, let
	$$F_{\rho}(x_1, x_2) = \Prob (X_1 \leq x_1, X_2 \leq x_2)$$
	the cumulative distribution function of the correlated centered Gaussian vector $(X_1, X_2)$ with unit variance and correlation $\rho$, we have
	\begin{equation}
		\Prob \big( X_1 \in [a,b], X_2 \in [c,d] \big) = F_{\rho}(b, d) - F_{\rho}(b, c) - F_{\rho}(a, d) + F_{\rho}(a, c)
	\end{equation}
	with $a,c \geq -\infty$ and $b,d \leq + \infty$.
\end{remark}

\begin{proof}
	\begin{equation*}
		\begin{aligned}
			p_{(j_1,j_2)}^{k+1}
			= & \Prob \big( \widehat X_{k+1} = x_{j_1}^{k+1}, \widehat Y_{k+1} = y_{j_2}^{k+1} \big)                                                                                                                                                                        \\
			= & \sum_{i=1}^{N_{1,k}} \sum_{j=1}^{N_{2,k}} \Prob \big( \widehat X_{k+1} = x_{j_1}^{k+1}, \widehat Y_{k+1} = y_{j_2}^{k+1} \mid \widehat X_k = x_{i_1}^k, \widehat Y_k = y_{i_2}^k \big) \Prob \big( \widehat X_k = x_{i_1}^k, \widehat Y_k = y_{i_2}^k \big) \\
			= & \sum_{i=1}^{N_{1,k}} \sum_{j=1}^{N_{2,k}} p_{(i_1,i_2)}^k \Prob \big( \widehat X_{k+1} = x_{j_1}^{k+1}, \widehat Y_{k+1} = y_{j_2}^{k+1} \mid \widehat X_k = x_{i_1}^k, \widehat Y_k = y_{i_2}^k \big).                                                     \\
		\end{aligned}
	\end{equation*}

	\begin{itemize}[wide=0pt]
		\item if $\Corr(Z_{k+1}^1, Z_{k+1}^2) = \rho = 0$
		      \begin{equation*}
			      \begin{aligned}
				       & \Prob \big( \widehat X_{k+1} = x_{j_1}^{k+1}, \widehat Y_{k+1} = y_{j_2}^{k+1} \mid \widehat X_k = x_{i_1}^k, \widehat Y_k = y_{i_2}^k \big)                                                             \\
				       & \qquad \qquad \qquad \qquad \qquad \qquad= p_{i_2 j_2}^k \Prob \big( \widehat X_{k+1} = x_{j_1}^{k+1} \mid \widehat X_k = x_{i_1}^k, \widehat Y_k = y_{i_2}^k \big)                                      \\
				       & \qquad \qquad \qquad \qquad \qquad \qquad= p_{i_2 j_2}^k \Prob \Big( \widebar X_{k+1} \in \big( x_{j_1-1/2}^{k+1}, x_{j_1+1/2}^{k+1} \big] \mid \widehat X_k = x_{i_1}^k, \widehat Y_k = y_{i_2}^k \Big) \\
				       & \qquad \qquad \qquad \qquad \qquad \qquad= p_{i_2 j_2}^k \Prob \Big( \mathcal{E}_{b, \sigma} \big( t_k, x_{i_1}^k, y_{i_2}^k, Z_{k+1}^1 \big) \in \big( x_{j_1-1/2}^{k+1}, x_{j_1+1/2}^{k+1} \big] \Big) \\
				       & \qquad \qquad \qquad \qquad \qquad \qquad= p_{i_2 j_2}^k \, \Big[ \N \big( x_{i_1,i_2,j_1,+}^{k} \big) - \N \big( x_{i_1,i_2,j_1,-}^{k} \big) \Big],
			      \end{aligned}
		      \end{equation*}
		      % where $p_{i_2 j_2}^k$ is defined in \eqref{RH:condiProbavol} and
		      % \begin{equation}
		      % 	x_{i_1,i_2,j_1,-}^{k} = \frac{x_{j_1-1/2}^{k+1} - \mu_{(i_1,i_2)}^k }{\sigma_{(i_1,i_2)}^k},\qquad x_{i_1,i_2,j_1,+}^{k} = \frac{x_{j_1+1/2}^{k+1} - \mu_{(i_1,i_2)}^k }{\sigma_{(i_1,i_2)}^k},
		      % \end{equation}
		      % with $\mu_{(i_1,i_2)}^k$ and $\sigma_{(i_1,i_2)}^k$ defined in \eqref{RH:mean_std_euler}.

		\item if $\Corr(Z_{k+1}^1, Z_{k+1}^2) = \rho \neq 0$
		      \begin{equation*}
			      \begin{aligned}
				      \Prob & \big( \widehat X_{k+1} = x_{j_1}^{k+1}, \widehat Y_{k+1} = y_{j_2}^{k+1} \mid \widehat X_k = x_{i_1}^k, \widehat Y_k = y_{i_2}^k \big)                                                                                                                                                      \\
				            & = \Prob \Big( \mathcal{E}_{b, \sigma} \big( t_k, x_{i_1}^k, y_{i_2}^k, Z_{k+1}^1 \big) \in \big( x_{j_1-1/2}^{k+1}, x_{j_1+1/2}^{k+1} \big], \mathcal{M}_{\widetilde b, \widetilde \sigma} \big( t_k, y_{i_2}^k, Z_{k+1}^2 \big) \in \big( y_{j_2-1/2}^{k+1}, y_{j_2+1/2}^{k+1} \big] \Big) \\
				            & = \Prob \Big( \mu_{(i_1,i_2)}^k + \sigma_{(i_1,i_2)}^k Z_{k+1}^1 \in \big( x_{j_1-1/2}^{k+1}, x_{j_1+1/2}^{k+1} \big], \mu_{i_2}^k + \kappa_{i_2}^k ( Z_{k+1}^2 + \lambda_{i_2}^k )^2 \in \big( y_{j_2-1/2}^{k+1}, y_{j_2+1/2}^{k+1} \big] \Big)                                            \\
				            & = \Prob \Big( Z_{k+1}^1 \in \big( x_{i_1,i_2,j_1,-}^{k}, x_{i_1,i_2,j_1,+}^{k} \big], ( Z_{k+1}^2 + \lambda_{i_2}^k )^2 \in \big( y_{i_2,j_2,-}^{k}, y_{i_2,j_2,+}^{k} \big] \Big)                                                                                                          \\
				            & = \Prob \Big( Z_{k+1}^1 \in \big( x_{i_1,i_2,j_1,-}^{k}, x_{i_1,i_2,j_1,+}^{k} \big], Z_{k+1}^2 \in \Big( \sqrt{y_{i_2,j_2,-}^{k}} - \lambda_{i_2}^k, \sqrt{y_{i_2,j_2,+}^{k}} - \lambda_{i_2}^k \Big] \Big)                                                                                \\
				            & \qquad + \Prob \Big( Z_{k+1}^1 \in \big( x_{i_1,i_2,j_1,-}^{k}, x_{i_1,i_2,j_1,+}^{k} \big], Z_{k+1}^2 \in \Big[ - \sqrt{y_{i_2,j_2,+}^{k}} - \lambda_{i_2}^k, - \sqrt{y_{i_2,j_2,-}^{k}} - \lambda_{i_2}^k \Big) \Big).
			      \end{aligned}
		      \end{equation*}
	\end{itemize}
\end{proof}

\begin{remark}
	Another possibility for the quantization of the Stationary Heston model could be to use optimal quantizers for the volatility at each date $t_k$ in place of using recursive quantization. Indeed, the volatility $(v_t)_t$ being stationary and the fact that we required the volatility to start at time $0$ from the invariant measure, we could use the grid of the optimal quantization $\widehat v_0$ of size $N$ of the stationary measure with its associated weights for every dates, hence setting $\widehat v_k = \widehat v_0$. We need as well the transitions from time $t_k$ to $t_{k+1}$ defined by
	\begin{equation}
		\Prob \big( \widehat v_{k+1} = v_{j_2}^{k+1} \mid \widehat v_k = v_{i_2}^k \big).
	\end{equation}
	These probabilities can be computed using the conditional law of the CIR process described in \cite{cox2005theory, andersen2007efficient}, which is a non-central chi-square distribution.
	Then, we would build the recursive quantizer of the $\log$-asset at date $\widehat X_{k+1}$ with the standard methodology of recursive quantization using the already built quantizers of the volatility $\widehat v_k$ and the $\log$-asset $\widehat X_k$ at time $t_k$, i.e.
	\begin{equation}
		\widetilde X_{k+1} = \mathcal{E}_{b, \sigma} \big( t_k, \widehat X_k, \widehat v_k, Z_{k+1}^1 \big) \quad \mbox{and} \quad \widehat X_{k+1} = \Proj_{\Gamma_{N_{1,k+1}}^X} \big( \widetilde{X}_{k+1} \big)
	\end{equation}
	where, this time, the Euler scheme is not defined in function of the boosted-volatility but directly in function of the volatility and is given by
	\begin{equation}
		\mathcal{E}_{b, \sigma} \big( t, x, v, z \big) = x + h \Big( r - q - \frac{v}{2} \Big) + \sqrt{v} \sqrt{h} z.
	\end{equation}

	\medskip
	However, the difficulties with this approach come from the computation of the couple transitions
	\begin{equation}
		\Prob \big( \widehat X_{k+1} = x_{j_1}^{k+1}, \widehat v_{k+1} = v_{j_2}^{k+1} \mid \widehat X_k = x_{i_1}^k, \widehat v_k = v_{i_2}^k \big).
	\end{equation}
	Indeed, these probability weights would not be as straightforward to compute as the methodology we adopt in this paper, namely using time-discretization schemes for both components. Our approach allows us to express the conditional probability of the couple as the probability that a correlated bi-variate Gaussian vector lies in a rectangle domain and this can be easily be computed numerically.

\end{remark}

\subsubsection{About the $L^2$-error}

In this part, we study the $L^2$-error induced by the product recursive quantization approximation $\widehat U_k = (\widehat X_k, \widehat Y_k )$ of $\widebar U_k = (\widebar X_k, \widebar Y_k)$, the time-discretized processes defined in \eqref{RH:discretized_boostedvol} and \eqref{RH:discretized_couple} by
\begin{equation}\label{RH:timescheme_U}
	\widebar U_k = F_{k-1} ( \widebar U_{k-1}, Z_k)
\end{equation}
where $Z_k = (Z_k^1, Z_k^2)$ is a standardized correlated Gaussian vector and the hybrid discretization scheme $F_k(u, Z)$ is given by
\begin{equation}
	F_k(u, Z) = \Bigg(
	\begin{aligned}
		 & \mathcal{E}_{b, \sigma} \big( t_k, x, y, Z_{k+1}^1 \big)                    \\
		 & \mathcal{M}_{\widetilde b, \widetilde \sigma} \big( t_k, y, Z_{k+1}^2 \big)
	\end{aligned} \Bigg).
\end{equation}

We recall the definition of the product recursive quantizer $\widehat U_k = (\widehat X_k, \widehat Y_k)$. Its first component $\widehat X_k$ is the projection of $\widetilde X_k$ onto $\Gamma_{N_{1,k}}^X$ and the second component $\widehat Y_k$ is the projection of $\widetilde Y_k$ onto $\Gamma_{N_{2,k}}^Y$, i.e.,
\begin{equation}\label{RH:recur_quantization_U}
	\widehat X_{k+1} = \Proj_{\Gamma_{N_{1,k+1}}^X} \big( \widetilde{X}_{k+1} \big) \quad \mbox{and} \quad \widehat Y_{k+1} = \Proj_{\Gamma_{N_{2,k+1}}^Y} \big( \widetilde{Y}_{k+1} \big)
\end{equation}
where $\widetilde X_k$ and $\widetilde Y_k$ are defined in \eqref{RH:recurQuantiVol} and \eqref{RH:recurQuantiAsset}, respectively. Moreover, if we consider the couple $\widetilde U_k = (\widetilde X_k, \widetilde Y_k)$, using the above notations we have
\begin{equation}\label{RH:diffusion_quantized_schem_U}
	\widetilde U_k = F_{k-1} ( \widehat U_{k-1}, Z_k ).
\end{equation}

% We study the error induced by the product Recursive Quantization of both components.

It has been shown in \cite{abbas2018product,sagna2018general} that if, for all $k=0, \dots, n-1$, the schemes $F_{k}(u,z)$ are Lipschitz in $u$, then there exists constants $j=1, \dots, n, \, C_j < + \infty$ such that
\begin{equation}
	\Vert \widehat U_k - \widebar U_k \Vert_{_2} \leq \sum_{j=1}^k C_j \big( N_{1,j} \times N_{2,j} \big)^{-1/2}
\end{equation}
where $\widehat U_k$ and $\widebar U_k$ are the processes defined in \eqref{RH:recur_quantization_U} and \eqref{RH:diffusion_quantized_schem_U}. The proof of this result is based on the extension of Pierce's lemma to the case of product quantization (see Lemma 2.3 in \cite{sagna2018general}).

\medskip
In our case, the diffusion of the boosted volatility in the CIR model does not have Lipschitz drift and volatility components, hence the above result from \cite{abbas2018product,sagna2018general} does not apply in our context. Even if we can hope to obtain similar results by applying the same kind of arguments, the results we obtain have to considered carefully. Indeed, when we take the limit in $n \rightarrow + \infty$, the number of time-step, the error upper-bound term goes to infinity. However, in practice, we consider $h = kT/n$ fixed and then study the behavior of $\widehat U_k$ in function of $N_{1,j}$ and $N_{2,j}$ for $j \geq k$. The proof of the following proposition is given in Appendix \ref{RH:appendix:proofl2error}.

\begin{proposition} \label{RH:prop:l2-error}
	Let $b$, $\sigma$, $\widetilde b$ and $\widetilde \sigma$, defined by \eqref{RH:drift_and_vol_boostedvol} and \eqref{RH:drift_and_vol_logasset}, the coefficients of the $\log$-asset and the boosted-volatility of the Heston model. Let, for every $k=0, \dots, n$, $\widehat U_k$ the hybrid recursive product quantizer at level $N_{1,k} \times N_{2,k}$ of $\widebar U_k$. Then, for every $k=0, \dots, n$
	\begin{equation}
		\begin{aligned}
			\Vert \widehat U_k - \widebar U_k \Vert_{_2}
			 & \leq \sum_{j=0}^k \widetilde A_{j,k} \big( N_{1,j} \times N_{2,j} \big)^{-1/2} + B_{k} \sqrt{h} \\
		\end{aligned}
	\end{equation}
	where
	\begin{equation}
		\widetilde A_{j,k} = 2^{\frac{p-2}{2p}} C_p^2 A_{j,k} \bigg( 2^{(\frac{p}{2}-1 )j} \beta_{p}^j \Vert \widehat U_0 \Vert_{_2}^p + \alpha_{p} \frac{1 - 2^{(\frac{p}{2}-1 )j} \beta_{p}^j }{1 - 2^{\frac{p}{2}-1} \beta_{p} } \bigg)^{1/p}
	\end{equation}
	with
	\begin{equation}
		A_{j,k} = 2^{\frac{k-j}{2}} \e^{\frac{\sqrt{h}}{2}(k-j)} \quad \mbox{and} \quad B_k = C_{T} (h) \sum_{j=0}^{k-1} 2^{\frac{k-1-j}{2}} \e^{\frac{\sqrt{h}}{2}(k-1-j)}
	\end{equation}
	where $\sum_{\emptyset} = 0$ by convention and $C_T(h) = O(1)$.
\end{proposition}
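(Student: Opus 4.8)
The plan is to iterate the usual one–step decomposition of recursive quantization, the only genuine difficulty being that the boosted CIR coefficients $\widetilde b,\widetilde\sigma$ of \eqref{RH:drift_and_vol_boostedvol} are merely $\tfrac12$–Hölder, so that the Lipschitz–scheme estimate of \cite{abbas2018product,sagna2018general} is not directly available. For each $k$ write
\[
	\widehat U_k-\widebar U_k=\big(\widehat U_k-\widetilde U_k\big)+\big(\widetilde U_k-\widebar U_k\big),
\]
the first term being the product–quantization error committed at step $k$, the second the error inherited from step $k-1$ through $F_{k-1}$. For the first term I would invoke the product–quantization version of Pierce's lemma (Lemma 2.3 in \cite{sagna2018general}): there are $p>2$ and $C_p<\infty$ with
\[
	\Vert\widehat U_k-\widetilde U_k\Vert_{_2}\le C_p\,\Vert\widetilde U_k\Vert_{_p}\,\big(N_{1,k}\times N_{2,k}\big)^{-1/2}.
\]
It then remains to bound $\Vert\widetilde U_k\Vert_{_p}$. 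Since $\widetilde U_k=F_{k-1}(\widehat U_{k-1},Z_k)$ and, from \eqref{RH:MilsteinScheme_with_specificationmodel}--\eqref{RH:eulerscheme}, $F_{k-1}(u,z)$ has at most affine growth in $u$ (the $\sqrt y$ contributions being absorbed by $\varepsilon y+C_\varepsilon(1+z^2)$), while $\Vert\widehat U_{k-1}\Vert_{_p}\le\Vert\widetilde U_{k-1}\Vert_{_p}$ componentwise by stationarity of optimal quantizers, one gets via $(a+b)^p\le 2^{p-1}(a^p+b^p)$ a geometric recursion for $\Vert\widetilde U_k\Vert_{_p}^p$ whose resolution produces exactly the bracket multiplying $A_{j,k}$ in $\widetilde A_{j,k}$ ($\beta_p$ being the per–step moment factor, $\alpha_p$ the additive one, and $\Vert\widehat U_0\Vert_{_2}$ entering through the initial condition).

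For the inherited term the key is an $L^2$ Lipschitz–type estimate of $F_{k-1}$ in the state despite the absence of a global Lipschitz constant. Expanding the two components with \eqref{RH:mean_std_milstein}--\eqref{RH:mean_std_euler}, the state enters the increment of $\mathcal M_{\widetilde b,\widetilde\sigma}$ only through $y$ (with coefficient exactly $1$) and $\sqrt h\,\xi\,\e^{\kappa t/2}\sqrt y\,z$, and that of $\mathcal E_{b,\sigma}$ only through $x$, $-\tfrac12h\,\e^{-\kappa t}y$ and $\sqrt h\,\e^{-\kappa t/2}\sqrt y\,z$. Conditioning on the past (so that $Z_k$ is independent of $\widehat U_{k-1},\widebar U_{k-1}$) this yields
\[
	\Vert\widetilde Y_k-\widebar Y_k\Vert_{_2}\le\Vert\widehat Y_{k-1}-\widebar Y_{k-1}\Vert_{_2}+\sqrt h\,\xi\,\e^{\kappa t_{k-1}/2}\,\big\Vert\sqrt{\widehat Y_{k-1}}-\sqrt{\widebar Y_{k-1}}\big\Vert_{_2},
\]
and similarly for $X$. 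The non–Lipschitz term is then dispatched crudely by the triangle inequality, $\big\Vert\sqrt{\widehat Y_{k-1}}-\sqrt{\widebar Y_{k-1}}\big\Vert_{_2}\le(\E\widehat Y_{k-1})^{1/2}+(\E\widebar Y_{k-1})^{1/2}$, and both first moments are bounded by $\theta\e^{\kappa T}$ (telescoping the drift of \eqref{RH:discretized_boostedvol}, using $\E\widehat Y_{k-1}=\E\widetilde Y_{k-1}$ and $\E\widehat Y_0=\E v_0^{\nu}=\theta$ by \eqref{RH:params_gammalaw}). This is precisely where the additive $\sqrt h$ contribution of each step is created, hence the term $B_k\sqrt h$ with $C_T(h)=O(1)$.

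Combining the $X$– and $Y$–estimates, using $a+b\le\sqrt2\,(a^2+b^2)^{1/2}$ to pass to the Euclidean norm of $\widetilde U_k-\widebar U_k$ and $1+\tfrac h2\le\e^{\sqrt h/2}$ (for $h\le1$) to handle the $O(h)$ drift coupling between the two components, one obtains the scalar recursion
\[
	e_k\le C_p\,\Vert\widetilde U_k\Vert_{_p}\,\big(N_{1,k}\times N_{2,k}\big)^{-1/2}+\sqrt2\,\e^{\sqrt h/2}\,e_{k-1}+C_T(h)\sqrt h,\qquad e_k:=\Vert\widehat U_k-\widebar U_k\Vert_{_2},
\]
with $e_0$ the optimal quantization error of $\nu$ at level $N_{2,0}$. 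Unrolling it, the accumulated factor from step $j$ to step $k$ is $\big(\sqrt2\,\e^{\sqrt h/2}\big)^{k-j}=2^{(k-j)/2}\e^{\sqrt h(k-j)/2}=A_{j,k}$, which gives the announced bound. The main obstacle is the one handled in the third paragraph: paying for the $\sqrt{\cdot}$ in the CIR coefficients with the $\sqrt h$ produced by the diffusive increments, which relies on the explicit strictly positive form \eqref{RH:MilsteinScheme_with_specificationmodel} of the Milstein scheme under the Feller condition and on uniform $L^1$–bounds for the scheme; this is also what forces the factor $A_{j,k}$, hence the whole bound, to blow up as $n\to\infty$ with $h=T/n$, as already noted before the statement.
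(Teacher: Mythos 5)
Your proposal is correct and follows the same skeleton as the paper's proof: the decomposition $\widehat U_k-\widebar U_k=(\widehat U_k-\widetilde U_k)+(F_{k-1}(\widehat U_{k-1},Z_k)-F_{k-1}(\widebar U_{k-1},Z_k))$, the extended Pierce lemma for product quantizers to handle $\Vert\widehat U_j-\widetilde U_j\Vert_{_2}$, the $L^p$-linear growth of the hybrid scheme combined with stationarity of the quantizers to bound $\Vert\widetilde U_j\Vert_{_p}$ by a geometric recursion, and the unrolling that produces $A_{j,k}$ and $B_k$. The one place where you genuinely deviate is the treatment of the $\tfrac12$-Hölder term coming from $\widetilde\sigma$: the paper keeps the cross term $\sqrt{h}\,\vert z\vert\sqrt{\vert\widehat Y_k-\widebar Y_k\vert}$ and splits it by Young's inequality $\sqrt a\,\sqrt b\le\tfrac12(a/\lambda+\lambda b)$ with $\lambda=\sqrt h(1-\sqrt h)$, which re-injects $\vert\widehat Y_k-\widebar Y_k\vert$ into the contraction factor (hence the $1+\tfrac{\sqrt h}{2}\le \e^{\sqrt h/2}$) and leaves a deterministic remainder $C_T(h)\sqrt h$ with $C_T(h)=(1+\xi^2\e^{\kappa T})(1-\sqrt h)^{-1}$, without needing any moment bound on the schemes at that stage; you instead discard the difference structure entirely via the triangle inequality and control $\Vert\sqrt{\widehat Y_{k-1}}\Vert_{_2}+\Vert\sqrt{\widebar Y_{k-1}}\Vert_{_2}$ by a telescoped first-moment bound of order $\theta\e^{\kappa T}$ (valid here because the Milstein scheme is positive under Feller and $\E\,\mathcal M_{\widetilde b,\widetilde\sigma}(t,y,Z)=y+h\kappa\theta\e^{\kappa t}$, with stationarity giving $\E\widehat Y_k=\E\widetilde Y_k$). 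Your variant buys a slightly smaller per-step factor ($1+\tfrac h2$ instead of $1+\tfrac{\sqrt h}2$, still dominated by $\e^{\sqrt h/2}$, so the stated $A_{j,k}$ is unaffected) at the price of requiring the positivity and uniform $L^1$-control of both chains, and it yields a different but still $O(1)$ constant $C_T(h)$, which is all the Proposition asserts; to land exactly on the stated $\alpha_p,\beta_p$ in $\widetilde A_{j,k}$ you would still need to run the specific linear-growth computation of Appendix \ref{RH:appendix:proofLplineargrowth} rather than a generic $\varepsilon y+C_\varepsilon(1+z^2)$ absorption, but that is a matter of constants, not of substance.
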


\subsection{Backward algorithm for Bermudan and Barrier options}

\paragraph{Bermudan Options}

A Bermudan option is a financial derivative product that gives the right to its owner to buy or sell (or to enter to, in the case of a swap) an underlying product with a given payoff $\psi_t(\cdot, \cdot)$ at predefined exercise dates $\{t_0, \cdots, t_n\}$. Its price, at time $t_0=0$, is given by
\begin{equation*}
	\sup_{\tau \in \{ t_0, \cdots, t_n \}} \E \Big[ \e^{ - r \tau } \psi_{\tau} ( X_{\tau}, Y_{\tau} ) \mid \F_{t_0} \Big]
\end{equation*}
where $X_t$ and $Y_t$ are solutions to the system defined in \eqref{RH:systemvolsto}.

In this part, we follow the numerical solution first introduced by \cite{printems2005quantization,bally2003quantization}. They proposed to solve discrete-time optimal stopping problems using a quantization tree of the risk factors $X_t$ and $Y_t$.

Let $\F^{X,Y} = ( \F )_{0 \leq k \leq n}$ the natural filtration of $X$ and $Y$. Hence, we can define recursively the sequence of random variable $L^p$-integrable $(V_k)_{0 \leq k \leq n}$
\begin{equation}\label{RH:BDPP}
	\left\{
	\begin{aligned}
		 & V_n = \e^{ - r t_n } \psi_n( X_n, Y_n ),                                                                            \\
		 & V_k = \max \big( \e^{ - r t_k } \psi_k(X_k, Y_k), \E [ V_{k+1} \mid \F_k ] \big) \mathrm{,\qquad} 0 \leq k \leq n-1
	\end{aligned} \right.
\end{equation}
called \textit{Backward Dynamic Programming Principle}. Then
\begin{equation*}
	V_0 = \sup \big\{ \E [ \e^{ - r \tau } \psi_\tau( X_\tau, Y_\tau ) \mid \F_0 ], \tau \in \Theta_{0,n} \big\}
\end{equation*}
with $\Theta_{0,n}$ the set of all stopping times taking values in $\{ t_0, \cdots, t_n \}$.
The sequence $(V_k)_{0 \leq k \leq n}$ is also known as the Snell envelope of the obstacle process $\big( \e^{ - r t_k } \psi_k(X_k, Y_k) \big)_{0 \leq k \leq n}$. In the end, $\E [ V_0 ]$ is the quantity we are interested in. Indeed, $\E [ V_0 ]$ is the price of the Bermudan option whose payoff is $\psi_k$ and is exercisable at dates $\{ t_1, \cdots, t_n \}$.

Following what was defined in \eqref{RH:BDPP}, in order to compute $\E [ V_0 ]$, we will need to use the previously defined quantizer of $X_k$ and $Y_k$: $\widehat X_k$ and $\widehat Y_k$. Hence, for a given global budget $N = N_{1,0} N_{2,0} + \cdots + N_{1,n} N_{2,n}$, the total number of nodes of the tree by the couple $(\widehat X_k, \widehat Y_k)_{0 \leq k \leq n}$, we can approximate the \textit{Backward Dynamic Programming Principle} \eqref{RH:BDPP} by the following sequence involving the couple $(\widehat{X}_k, \widehat Y_k )_{0 \leq k \leq n}$
\begin{equation} \label{RH:BDPP_Quantif}
	\left\{ \begin{aligned}
		 & \widehat V_n = \e^{-r t_n} \psi_n ( \widehat X_n, \widehat Y_n ),                                                                                                     \\
		 & \widehat V_k = \max \big( \e^{-r t_k} \psi_k(\widehat X_k, \widehat Y_k), \E [ \widehat V_{k+1} \mid ( \widehat X_k, \widehat Y_k ) ] \big),\qquad k = 0, \dots, n-1.
	\end{aligned} \right.
\end{equation}
\begin{remark}
	A direct consequence of choosing recursive Markovian Quantization to spatially discretize the problem is that the sequence $(\widehat{X}_k, \widehat Y_k )_{0 \leq k \leq n}$ is Markovian. Hence $(\widehat V_k)_{0 \leq k \leq n}$ defined in \eqref{RH:BDPP_Quantif} obeying a \textit{Backward Dynamic Programming Principle} is the Snell envelope of $\big( \e^{ - r t_k } \psi_k ( \widehat X_k, \widehat Y_k ) \big)_{0 \leq k \leq n}$. This is the main difference with the first approach of \cite{printems2005quantization,bally2003quantization}, where in there case they only had a pseudo-Snell envelope of $\big( \e^{ - r t_k } \psi_k ( \widehat X_k, \widehat Y_k ) \big)_{0 \leq k \leq n}$.
\end{remark}

\medskip
Using the discrete feature of the quantizers, \eqref{RH:BDPP_Quantif} can be rewritten
\begin{equation} \label{RH:BDPP_Quantif_num} \left\{
	\begin{aligned}
		 & \widehat v_n ( x_{i_1}^n, y_{i_2}^n ) = \e^{ - r t_n }  \psi_n ( x_{i_1}^n, y_{i_2}^n )
		,\quad
		\begin{matrix}
			i_1 = 1, \dots, N_{1,n} \\
			i_2 = 1, \dots, N_{2,n}
		\end{matrix}
		\\
		 & \widehat v_k ( x_{i_1}^k, y_{i_2}^k ) = \max \Big( \e^{ - r t_k } \psi_k ( x_{i_1}^k, y_{i_2}^k ), \sum_{j_1 = 1}^{N_{1,k+1}} \sum_{j_2 = 1}^{N_{2,k+1}} \pi_{(i_1,i_2),(j_1,j_2)}^k \widehat{v}_{k+1} ( x_{j_1}^{k+1}, y_{j_2}^{k+1} ) \Big)
		, \quad
		\begin{matrix}
			k = 0, \dots, n-1       \\
			i_1 = 1, \dots, N_{1,k} \\
			i_2 = 1, \dots, N_{2,k}
		\end{matrix}
	\end{aligned} \right.
\end{equation}
where $ \pi_{(i_1,i_2),(j_1,j_2)}^k = \Prob \big( \widehat X_{k+1} = x_{j_1}^{k+1}, \widehat Y_{k+1} = y_{j_2}^{k+1} \mid \widehat X_k = x_{i_1}^k, \widehat Y_k = y_{i_2}^k \big)$ is the conditional probability weight given in \eqref{RH:conditionalProba}. Finally, the approximation of the price of the Bermudan option is given by
\begin{equation}
	\E \big[ \widehat v_0 ( x_0, \widehat Y_0 ) \big] = \sum_{i=1}^{N_{2,0}} p_i \, \widehat v_0 ( x_0, y_i^0 )
\end{equation}
with $p_i = \Prob \big( \widehat Y_0 = y_i^0 \big)$ given by \eqref{RH:probainitialmeasure}.

\paragraph{Barrier Options}

A Barrier option is a path-dependent financial product whose payoff at maturity date $T$ depends on the value of the process $X_T$ at time $T$ and its maximum or minimum over the period $[0,T]$. More precisely, we are interested by options with the following types of payoff $h$
\begin{equation}
	h = f(X_T) \1_{ \{ \sup_{t \in [0,T]} X_t \in I \} }\qquad or \qquad h = f(X_T) \1_{ \{ \inf_{t \in [0,T]} X_t \in I \} }
\end{equation}
where $I$ is an unbounded interval of $\R$, usually of the forme $(- \infty, L]$ or $[L, + \infty)$ ($L$ is the barrier) and $f$ can be any vanilla payoff function (Call, Put, Spread, Butterfly, ...).

\medskip
In this part, we follow the methodology initiated in \cite{sagna2010pricing} in the case of functional quantization. This work is based on the Brownian bridge method applied to the Euler-Maruyama scheme as described e.g. in \cite{pages2018numerical}. We generalize it to stochastic volatility models and product Markovian recursive quantization. $X_t$ being discretized by an Euler-Maruyama scheme, yielding $\widebar X_k$ with $k=0, \dots, n$, we can determine the law of $\max_{t \in [0,T]} \widebar X_t$ and $\min_{t \in [0,T]} \widebar X_t$ given the values $\widebar X_k = x_k, \widebar Y_k = y_k, k = 0, \dots, n$
\begin{equation}
	\Law \Big( \max_{t \in [0,T]} \widebar X_t \mid \widebar X_k = x_k, \widebar Y_k = y_k, k = 0, \dots, n \Big) = \Law \Big( \max_{k = 0, \dots, n-1} \big( G_{(x_k,y_k), x_{k+1}}^k \big)^{-1} (U_k) \Big)
\end{equation}
and
\begin{equation}
	\Law \Big( \min_{t \in [0,T]} \widebar X_t \mid \widebar X_k = x_k, \widebar Y_k = y_k, k = 0, \dots, n \Big) = \Law \Big( \max_{k = 0, \dots, n-1} \big( F_{(x_k,y_k), x_{k+1}}^k \big)^{-1} (U_k) \Big)
\end{equation}
where $(U_k)_{k=0,\dots, n-1}$ are i.i.d uniformly distributed random variables over the unit interval and $(G_{(x,y),z}^k)^{-1}$ and $(F_{(x,y),z}^k)^{-1}$ are the inverse of the conditional distribution functions $G_{(x,y),z}^k$ and $F_{(x,y),z}^k$ defined by
\begin{equation}\label{RH:upoutfunction}
	G_{(x,y),z}^k(u) = \Big( 1 - \e^{-2n \frac{(x-u)(z-u)}{T \sigma^2(t_k,x,y)}} \Big) \1_{\{ u \geq \max(x,z) \}}
\end{equation}
and
\begin{equation}\label{RH:downoutfunction}
	F_{(x,y),z}^k(u) = 1 - \Big( 1 - \e^{-2n \frac{(x-u)(z-u)}{T \sigma^2(t_k,x,y)}} \Big) \1_{\{ u \leq \min(x,z) \}}.
\end{equation}

% \begin{equation}
% 	\E f (\widebar X_T, \max_{t \in [0,T]} \widebar X_t) = \E f(\widebar X_T, 0) + \E \int_{0}^{+ \infty} ( 1 - \prod_{k=0}^{n-1} G_{(\widebar X_k,\widebar Y_k), \widebar X_{k+1}}^k (z) ) d_z f(\widebar X_T, z)
% \end{equation}
% and for $\E f (\widebar X_T, \min_{t \in [0,T]} \widebar X_t)$
% \begin{equation}
% 	\E f (\widebar X_T, \min_{t \in [0,T]} \widebar X_t) = \E f(\widebar X_T, 0) + \E \int_{0}^{+ \infty} ( \prod_{k=0}^{n-1} (1 - F_{(\widebar X_k,\widebar Y_k), \widebar X_{k+1}}^k (z) ) ) d_z f(\widebar X_T, z).
% \end{equation}
Now, using the resulting representation formula for $\E f (\widebar X_T, \max_{t \in [0,T]} \widebar X_t)$ (see e.g. \cite{sagna2010pricing, pages2018numerical}), we have a new representation formula for the price of up-and-out options $\widebar P_{UO}$ and down-and-out options $\widebar P_{DO}$
\begin{equation}
	\widebar P_{UO} = \e^{-rT} \E \big[ f(\widebar X_T) \1_{\sup_{t \in [0,T]} \widebar X_t \leq L} \big] = \e^{-rT} \E \bigg[ f(\widebar X_T) \prod_{k=0}^{n-1} G_{ (\overline{X}_k, \overline{Y}_k), \widebar X_{k+1}}^k (L) \bigg]
\end{equation}
and
\begin{equation}
	\widebar P_{DO} = \e^{-rT} \E \big[ f(\widebar X_T) \1_{\inf_{t \in [0,T]} \widebar X_t \geq L} \big] = \e^{-rT} \E \bigg[ f(\widebar X_T) \prod_{k=0}^{n-1} \Big( 1 - F_{(\widebar X_k, \widebar Y_k), \widebar X_{k+1}}^k (L) \Big) \bigg]
\end{equation}
where $L$ is the barrier.

\medskip
Finally, replace $\widebar X_k$ and $\widebar Y_k$ by $\widehat X_k$ and $\widehat Y_k$ and apply the recursive algorithm in order to approximate $\widebar P_{UO}$ or $\widebar P_{DO}$ by $\E [ \widehat V_0 ]$ or equivalently $\E [ \widehat v_0 (x_0, \widehat Y_0) ]$
\begin{equation} \label{RH:barrier_Quantif}
	\left\{ \begin{aligned}
		 & \widehat V_n = \e^{-rT} f ( \widehat X_n ),                                                                                                                          \\
		 & \widehat V_k = \E \big[ g_k(\widehat X_k,\widehat Y_k, \widehat X_{k+1}) \widehat V_{k+1} \mid (\widehat X_k, \widehat Y_k) \big] \mathrm{,\qquad} 0 \leq k \leq n-1
	\end{aligned} \right.
\end{equation}
that can be rewritten
\begin{equation} \label{RH:barrier_Quantif_num} \left\{
	\begin{aligned}
		 & \widehat v_n ( x_{i_1}^n, y_{i_2}^n ) = \e^{-rT} f ( x_i^n )
		,\quad
		\begin{matrix}
			i = 1, \dots, N_{1,n} \\
			j = 1, \dots, N_{2,n}
		\end{matrix}
		\\
		 & \widehat v_k ( x_{i_1}^k, y_{i_2}^n ) = \sum_{j_1 = 1}^{N_{1,k+1}} \sum_{j_2 = 1}^{N_{2,k+1}} \pi_{(i_1,i_2),(j_1,j_2)}^k \widehat{v}_{k+1} ( x_{j_1}^{k+1}, y_{j_2}^{k+1} ) g_k( x_{i_1}^k, y_{i_2}^k, x_{j_1}^{k+1} )
		, \quad
		\begin{matrix}
			k = 0, \dots, n-1     \\
			i = 1, \dots, N_{1,k} \\
			j = 1, \dots, N_{2,k}
		\end{matrix}
	\end{aligned} \right.
\end{equation}
with $ \pi_{(i_1,i_2),(j_1,j_2)}^k = \Prob \big( \widehat X_{k+1} = x_{j_1}^{k+1}, \widehat Y_{k+1} = y_{j_2}^{k+1} \mid \widehat X_k = x_{i_1}^k, \widehat Y_k = y_{i_2}^k \big)$ the conditional probabilities given in \eqref{RH:conditionalProba} and $g_k(x,y,z)$ is either equal to $G_{(x,y),z}^k(L)$ or $1- F_{(x,y),z}^k(L)$ depending on the option type. Finally, the approximation of the price of the barrier option is given by
\begin{equation}
	\E [ \widehat V_0 ] = \E \big[ \widehat v_0 ( x_0, \widehat Y_0 ) \big] = \sum_{i=1}^{N_{2,0}} p_i \, \widehat v_0 ( x_0, y_i^0 )
\end{equation}
with $p_i = \Prob \big( \widehat Y_0 = y_i^0 \big)$ given by \eqref{RH:probainitialmeasure}.

\subsection{Numerical illustrations}

In this part, we deal with numerical experiments in the Stationary Heston model. We will apply the methodology based on hybrid product recursive quantization to the pricing of European, Bermudan and Barrier options. For the model parameters, we consider the parameters given in Table \ref{RH:tab:params_withpena} obtained after the penalized calibration procedure and instead of considering the market value for $S_0$, we take $S_0=100$ in order to get prices of an order we are used to. For the size of the quantization grids, we consider grids of constant size for all time-steps: for all $k=0, \dots, n$, we take $N_{1,k} = N_1$ and $N_{2,k} = N_2$ where $n$ is the number of time steps. During the numerical tests, we vary the tuple values $(n,N_1,N_2)$.

All the numerical tests have been carried out in C++ on a laptop with a 2,4 GHz 8-Core Intel Core i9 CPU. The computations of the transition probabilities are parallelized on the CPU.

\paragraph{European options}
First, we compare, in Table \ref{RH:tab:price_european_n_180}, the price of European options with maturity $t_n = T=0.5$ ($6$ months) computed using the quantization tree to the benchmark price computed using the methodology based on the quadrature formula (the quadrature formula with Laguerre polynomials) explained in Section \ref{RH:section:pricing_calibrationEU}. In place of using the backward algorithm \eqref{RH:BDPP_Quantif} (without the function max) for computing the expectation at the expiry date, we use the weights $p_{(i_1,i_2)}^k$ defined in \eqref{RH:joint_proba_couple} and built by forward induction, in order to compute
\begin{equation}
	\E \big[ \e^{-r t_n} \psi_n ( \widehat X_n, \widehat Y_n ) \big] = \e^{-r t_n} \sum_{i_1=1}^{N_{1,n}} \sum_{i_2=1}^{N_{2,n}} \psi_n ( x_{i_1}^n, y_{i_2}^n ).
\end{equation}
We give, in parenthesis, the relative error induced by the quantization-based approximation. We compare the behavior of the pricers with different size of grids and numbers of discretization steps. We notice that the main part of the error is explained by the size of the time-step $n$.

\begin{table}[H]
	\centering
	\begin{tabular}{c|lcccccc}
		\toprule
		                        &       &           & \multicolumn{4}{c}{$(N_1,N_2)$}                                                                                 \\ \midrule
		                        & $K$   & Benchmark & $(20,5)$                        & $(50,10)$               & $(100,10)$              & $(150,10)$              & \\ \midrule \midrule
		\multirow{5}{2em}{Call} & $80$  & $20.17$   & $19.68 \, \, (2.46 \%)$         & $19.99 \, \, (0.92 \%)$ & $20.04 \, \, (0.64 \%)$ & $20.06 \, \, (0.57 \%)$ & \\
		                        & $85$  & $15.56$   & $14.97 \, \, (3.75 \%)$         & $15.35 \, \, (1.31 \%)$ & $15.42 \, \, (0.89 \%)$ & $15.43 \, \, (0.79 \%)$ & \\
		                        & $90$  & $11.24$   & $10.60 \, \, (5.68 \%)$         & $11.03 \, \, (1.84 \%)$ & $11.10 \, \, (1.18 \%)$ & $11.12 \, \, (1.02 \%)$ & \\
		                        & $95$  & $7.383$   & $6.781 \, \, (8.14 \%)$         & $7.202 \, \, (2.44 \%)$ & $7.286 \, \, (1.30 \%)$ & $7.306 \, \, (1.03 \%)$ & \\
		                        & $100$ & $4.196$   & $3.727 \, \, (11.1 \%)$         & $4.081 \, \, (2.73 \%)$ & $4.173 \, \, (0.54 \%)$ & $4.194 \, \, (0.04 \%)$ & \\ \midrule
		\multirow{6}{2em}{Put}  & $100$ & $4.469$   & $4.160 \, \, (6.90 \%)$         & $4.396 \, \, (1.61 \%)$ & $4.459 \, \, (0.22 \%)$ & $4.472 \, \, (0.08 \%)$ & \\
		                        & $105$ & $7.171$   & $7.034 \, \, (1.91 \%)$         & $7.178 \, \, (0.09 \%)$ & $7.244 \, \, (1.01 \%)$ & $7.257 \, \, (1.19 \%)$ & \\
		                        & $110$ & $10.86$   & $10.84 \, \, (0.18 \%)$         & $10.91 \, \, (0.46 \%)$ & $10.97 \, \, (1.02 \%)$ & $10.98 \, \, (1.11 \%)$ & \\
		                        & $115$ & $15.38$   & $15.43 \, \, (0.33 \%)$         & $15.40 \, \, (0.12 \%)$ & $15.43 \, \, (0.37 \%)$ & $15.44 \, \, (0.41 \%)$ & \\
		                        & $120$ & $20.30$   & $20.43 \, \, (0.60 \%)$         & $20.31 \, \, (0.02 \%)$ & $20.29 \, \, (0.05 \%)$ & $20.29 \, \, (0.04 \%)$ & \\ \bottomrule
		                        & Time  &           & $2.6$s                          & $39$s                   & $192$s                  & $480$s                  & \\ \bottomrule
	\end{tabular}
	\caption[Pricing of European options in a Stationary Heston model with product hybrid recursive quantization with time-step $n=180$.]{\textit{Comparison between European options prices, with maturity $T=0.5$ (6 months), given by quantization and the benchmark, in function of the strike $K$ and $(N_1,N_2)$ where we set $n=180$.}}
	\label{RH:tab:price_european_n_180}
\end{table}

\begin{table}[H]
	\centering
	\begin{tabular}{c|lcccccc}
		\toprule
		                        &       &           & \multicolumn{4}{c}{$n$}                                                                                 \\ \midrule
		                        & $K$   & Benchmark & $30$                    & $60$                    & $90$                    & $180$                   & \\ \midrule \midrule
		\multirow{5}{2em}{Call} & $80$  & $20.17$   & $20.00 \, \, (0.83\%)$  & $20.03 \, \, (0.70 \%)$ & $20.03 \, \, (0.72 \%)$ & $19.99 \, \, (0.92 \%)$ & \\
		                        & $85$  & $15.56$   & $15.33 \, \, (1.47\%)$  & $15.38 \, \, (1.11 \%)$ & $15.39 \, \, (1.07 \%)$ & $15.35 \, \, (1.31 \%)$ & \\
		                        & $90$  & $11.24$   & $10.94 \, \, (2.60\%)$  & $11.04 \, \, (1.78 \%)$ & $11.05 \, \, (1.63 \%)$ & $11.03 \, \, (1.84 \%)$ & \\
		                        & $95$  & $7.383$   & $7.045 \, \, (4.57\%)$  & $7.170 \, \, (2.87 \%)$ & $7.203 \, \, (2.43 \%)$ & $7.202 \, \, (2.44 \%)$ & \\
		                        & $100$ & $4.196$   & $3.879 \, \, (7.55\%)$  & $4.016 \, \, (4.29 \%)$ & $4.057 \, \, (3.31 \%)$ & $4.081 \, \, (2.73 \%)$ & \\ \midrule
		\multirow{6}{2em}{Put}  & $100$ & $4.469$   & $4.161 \, \, (6.89\%)$  & $4.306 \, \, (3.64 \%)$ & $4.354 \, \, (2.56 \%)$ & $4.396 \, \, (1.61 \%)$ & \\
		                        & $105$ & $7.171$   & $6.972 \, \, (2.77\%)$  & $7.081 \, \, (1.25 \%)$ & $7.125 \, \, (0.64 \%)$ & $7.178 \, \, (0.09 \%)$ & \\
		                        & $110$ & $10.86$   & $10.81 \, \, (0.44\%)$  & $10.85 \, \, (0.05 \%)$ & $10.87 \, \, (0.12 \%)$ & $10.91 \, \, (0.46 \%)$ & \\
		                        & $115$ & $15.38$   & $15.39 \, \, (0.06\%)$  & $15.38 \, \, (0.04 \%)$ & $15.39 \, \, (0.08 \%)$ & $15.40 \, \, (0.12 \%)$ & \\
		                        & $120$ & $20.30$   & $20.29 \, \, (0.08\%)$  & $20.29 \, \, (0.09 \%)$ & $20.29 \, \, (0.06 \%)$ & $20.31 \, \, (0.02 \%)$ & \\ \bottomrule
		                        & Time  &           & $9$s                    & $16$s                   & $24$s                   & $42$s                   & \\ \bottomrule
	\end{tabular}
	\caption[Pricing of European options in a Stationary Heston model with product hybrid recursive quantization with grids of size $(N_1,N_2)=(50,10)$.]{\textit{Comparison between European options prices, with maturity $T=0.5$ ($6$ months), given by quantization and the benchmark, in function of the strike $K$ and of the size $n$ where we set $(N_1,N_2)=(50,10)$.}}
	\label{RH:tab:price_european_fixedsize}
\end{table}

\paragraph{Bermudan options}
Then, in Figure \ref{RH:fig:US_callput100_N2_10}, we display the prices of monthly exercisable Bermudan options with maturity $T=0.5$ ($6$ months) for Call and Put of strikes $K=100$. The prices are computed by quantization and we compare the behavior of the pricer for different choices of time-step $n$ and sizes of the asset grids $N_1$ where we set $N_2=10$. Again, we notice that the choice of $n$ has a high impact on the price given by quantization compared to the choice of the grid size.

\begin{figure}[H]
	\centering
	\includegraphics[width=1.\textwidth]{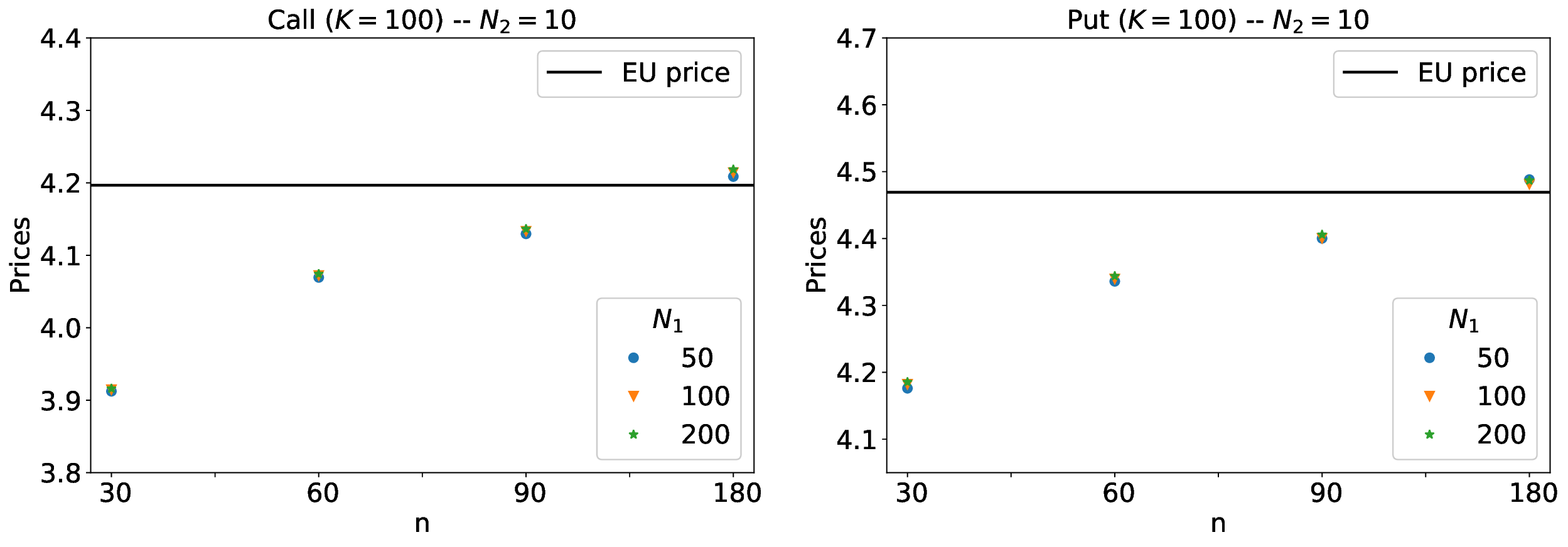}
	\caption[Prices of Bermudan options in the stationary Heston model given by product hybrid recursive quantization.]{\textit{Prices of Bermudan options in the stationary Heston model given by product hybrid recursive quantization with fixed value $N_2 = 10$.}}
	\label{RH:fig:US_callput100_N2_10}
\end{figure}

\paragraph{Barrier options}
Finally, in Figure \ref{RH:fig:price_barrier_L115}, we display the prices of an up-and-out Barrier option with strike $K=100$, maturity $T=0.5$ ($6$ months), barrier $L=115$ and $N_2 = 10$ computed with quantization. Again, we can notice the impact of $n$ on the approximated price.

\begin{figure}[H]
	\centering
	\includegraphics[width=.5\textwidth]{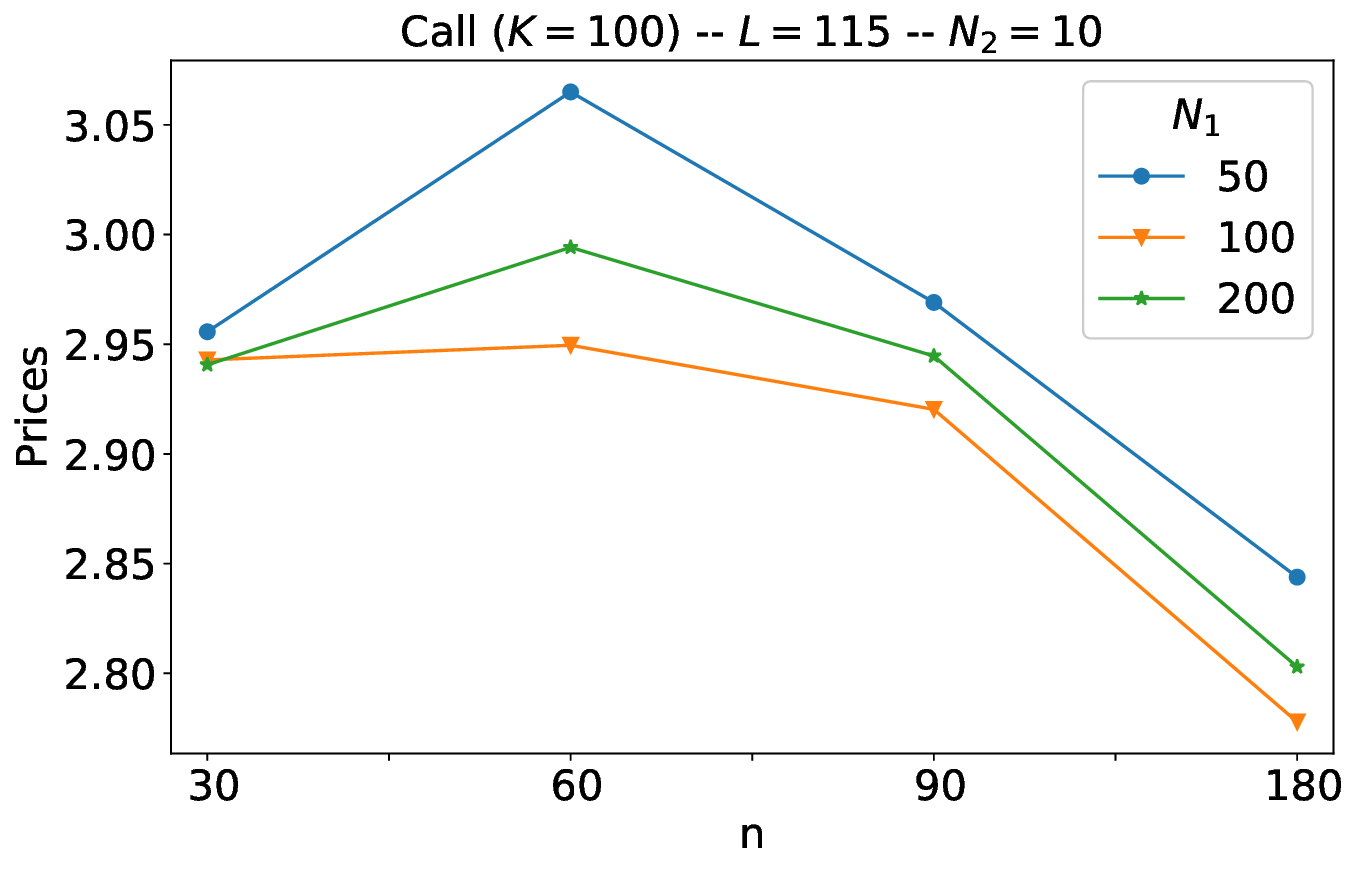}
	\caption[Prices of Barrier options with strike $K=100$ in the stationary Heston model given by product hybrid recursive quantization.]{\textit{Prices of Barrier options with strike $K=100$ in the stationary Heston model given by product hybrid recursive quantization with fixed value $N_2 = 10$.}}
	\label{RH:fig:price_barrier_L115}
\end{figure}

\section*{Acknowledgment}
The authors wish to thank Guillaume Aubert for fruitful discussion on the Heston model and Jean-Michel Fayolle for his advice on the calibration of the models. The PhD thesis of Thibaut Montes is funded by a CIFRE grand from The Independent Calculation Agent (The ICA) and French ANRT.

%%%%%%%%%%%%%%%%%%%%%%%%%%%%%%%%%%%%%%%%%%%%%%%%%%%%%%%%
%%%                    References                    %%%
%%%%%%%%%%%%%%%%%%%%%%%%%%%%%%%%%%%%%%%%%%%%%%%%%%%%%%%%
\newpage

\nocite{*}
\bibliography{bibli}
\bibliographystyle{alpha}

%%%%%%%%%%%%%%%%%%%%%%%%%%%%%%%%%%%%%%%%%%%%%%%%%%%%%%%%
%%%                    Appendices                    %%%
%%%%%%%%%%%%%%%%%%%%%%%%%%%%%%%%%%%%%%%%%%%%%%%%%%%%%%%%
\newpage

\begin{appendices}

\section{Discretization scheme for the volatility preserving the positivity} \label{RH:appendix:discussionschememilstein}

We recall the dynamics of the volatility
\begin{equation*}
	dv_t = \kappa (\theta - v_t) dt + \xi \sqrt{v_t} d \widetilde W_t
\end{equation*}
with $\kappa >0,\, \theta >0 \,\mbox{and}\, \xi > 0$. In this section, we discuss the choice of the discretization scheme under the Feller condition, which ensures the positivity of the process.

\paragraph{Euler-Maruyama scheme.}
Discretizing the volatility using an Euler-Maruyama scheme
\begin{equation*}
	\widebar v_{t_{k+1}} = \widebar v_{t_k} + \kappa (\theta - \widebar v_{t_k}) h + \xi \sqrt{\widebar v_{t_k}} \sqrt{h} \, Z_{k+1}^2
\end{equation*}
with $t_k = k h$, $h = T/n$ and $Z_{k+1}^2 = (\widetilde W_{t_{k+1}} - \widetilde W_{t_k} ) / \sqrt{h} $ may look natural. However, such a scheme clearly does not preserve positivity of the process even if the Feller condition is fulfilled since
\begin{equation*}
	\Prob \big( \widebar v_{t_1} < 0 \big) = \Prob \bigg( Z < \frac{- v_{0} - \kappa (\theta - v_0) h}{\xi \sqrt{v_0} \sqrt{h} } \bigg) > 0
\end{equation*}
with $Z \sim \N (0, 1)$. This suggests to introduce the Milstein scheme which is quite tractable in one dimension in absence of Lévy areas.

\paragraph{Milstein scheme.}
The Milstein scheme of the stochastic volatility is given by
% \begin{equation*}
% 	\widebar{v}_{t_{k+1}} = \widebar{v}_{t_k} + b(t_k, \widebar{v}_{t_k}) h + \sigma (t_k, \widebar{v}_{t_k}) \sqrt{h} Z_{k+1}^2 + \frac{ (\sigma \sigma_{x}^{\prime}) h}{2} ( t_k, \widebar{v}_{t_k} ) \big( (Z^2_{k+1})^2 - 1 \big)
% \end{equation*}
% or equivalently
\begin{equation*}
	\widebar{v}_{t_{k+1}} = \mathcal{M}_{b,\sigma} \big( t_k, \widebar{v}_{t_{k+1}}, Z_{k+1}^2 \big)
\end{equation*}
where (see \eqref{RH:MilsteinScheme})
\begin{equation*}
	\mathcal{M}_{b,\sigma} (t,x,z) = x - \frac{\sigma(x)}{2 \sigma_{x}^{\prime} (x)} + h \bigg(b(t, x) - \frac{ (\sigma \sigma_{x}^{\prime}) (x)}{2}\bigg) + \frac{ (\sigma \sigma_{x}^{\prime}) (x) h}{2} \bigg( z + \frac{1}{\sqrt{h} \sigma_{x}^{\prime} (x)} \bigg)^2.
\end{equation*}
with $b(x) = \kappa (\theta - x)$, $\sigma(x) = \xi \sqrt{x}$ and $ \sigma_{x}^{\prime} (x) = \frac{\xi}{2 \sqrt{x}}$. Consequently, under the Feller condition, the positivity of $\mathcal{M}_{b,\sigma} (t,x,z) $ is ensured if
\begin{equation*}
	x \geq \frac{\sigma(x)}{2 \sigma_{x}^{\prime} (x) } \geq 0, \qquad b(t, x) \geq \frac{ (\sigma \sigma_{x}^{\prime}) (x)}{2} \geq 0.
\end{equation*}
In our case, if the first condition holds true since
\begin{equation*}
	\frac{\sigma(x)}{2 \sigma_{x}^{\prime} (x) } = \frac{ \xi \sqrt{x} }{2 \frac{\xi}{2 \sqrt{x}} } = x
\end{equation*}
the second one fails. Indeed
\begin{equation*}
	\begin{aligned}
		\frac{ (\sigma \sigma_{x}^{\prime}) (x)}{2} = \frac{ \xi \sqrt{x} \frac{\xi}{2 \sqrt{x}} }{2} = \frac{ \xi^2 }{4}
	\end{aligned}
\end{equation*}
can be bigger than $b(t, x)$. In order to solve this problem, we consider the following \textit{boosted} volatility process
\begin{equation}
	Y_t = \e^{\kappa t} v_t, \, t \in [0,T].
\end{equation}

\paragraph{Milstein scheme for the \textit{boosted} volatility.}
Let $Y_t = \e^{\kappa t} v_t, \, t \in [0,T]$ for some $\kappa > 0$, which satisfies, owing to Itô's formula
\begin{equation*}
	d Y_t = \e^{\kappa t} \kappa \theta dt + \xi \e^{\kappa t / 2} \sqrt{Y_t} d \widetilde W_t.
\end{equation*}

\begin{remark}
	The process $(Y_t)_{t \in [0,T]}$ will have a higher variance but, having in mind a quantized scheme, this has no real impact (by contrast with a Monte Carlo simulation).
\end{remark}

Now, if we look at the Milstein discretization scheme of $Y_t$
\begin{equation*}
	\widebar{Y}_{t_{k+1}} = \mathcal{M}_{\widetilde b, \widetilde \sigma} \big( t_k, \widebar{Y}_{t_k}, Z_{k+1}^2 \big)
\end{equation*}
using the notation defined in \eqref{RH:MilsteinScheme} where drift and volatility terms of the \textit{boosted} process, now time-dependents, are given by
\begin{equation*}
	\widetilde b(t,x) = \e^{\kappa t} \kappa \theta, \qquad \widetilde \sigma(t,x) = \xi \sqrt{x} \e^{\kappa t / 2} \quad \textrm{ and } \quad \widetilde \sigma_{x}^{\prime} (t,x) = \frac{\xi \e^{\kappa t / 2}}{2 \sqrt{x}}.
\end{equation*}
Under the Feller condition, the positivity of the scheme is ensured, since
\begin{equation*}
	\frac{\widetilde \sigma(t,x)}{2 \widetilde \sigma_{x}^{\prime} (t,x) } = x \qquad \textrm{and} \qquad \frac{(\widetilde \sigma \widetilde \sigma_{x}^{\prime}) (t,x)}{2} = \frac{ \xi^2 \e^{ \kappa t}}{4} \leq \widetilde b(t,x) = \e^{\kappa t} \kappa \theta.
\end{equation*}
The last inequality is satisfied thanks to the condition $\frac{\xi^2}{2 \kappa \theta} \leq 1$ ensuring the positivity of the scheme.

% \newpage

\section{$L^p$-linear growth of the hybrid scheme}\label{RH:appendix:proofLplineargrowth}

The aim of this section is to show the $L^p$-linear growth of the scheme $F_k(u,z)$ with $u = (x,y)$ defined by
\begin{equation}\label{RH:scheme_couple}
	F_k(u, Z) = \Bigg(
	\begin{aligned}
			 & \mathcal{E}_{b, \sigma} \big( t_k, x, y, Z_{k+1}^1 \big)                    \\
			 & \mathcal{M}_{\widetilde b, \widetilde \sigma} \big( t_k, y, Z_{k+1}^2 \big)
		\end{aligned} \Bigg).
\end{equation}
where the schemes $\mathcal{E}_{b, \sigma}$ and $\mathcal{M}_{\widetilde b, \widetilde \sigma}$ are defined in \eqref{RH:eulerscheme} and \eqref{RH:MilsteinScheme}, respectively.

The results on the $L^p$-linear growth of the schemes are essentially based on the key Lemma 2.1 proved in \cite{sagna2018general} in $\R^d$ that we recall below.

\begin{lemme}\label{RH:key_lemma}
	\begin{enumerate}[label=(\alph*)]
		\item Let $u \in \R^d$ and $A(u)$ be a $d \times q$-matrix and let $a(u) \in \R^d$. Let $p \in [2,3)$. For any centered random vector $\zeta \in L^p_{\R^d}(\Omega, \A, \Prob)$, one has for every $h \in (0, + \infty)$
		      \begin{equation}
			      \E \big[ \vert a(u) + \sqrt{h} A(u) \zeta \vert^p \big] \leq \bigg( 1 + \frac{(p-1)(p-2)}{2} h \bigg) \vert a(u) \vert^p + h \big( 1 + p + h^{\frac{p}{2}-1} \big) \Vert A(u) \Vert^p \E \big[ \vert \zeta \vert^p \big]
		      \end{equation}
		      where $\Vert A(u) \Vert = \big( \Tr(A(u) A^{\star}(u)) \big)^{1/2}$.
		\item In particular, if $\vert a(u) \vert \leq \vert u \vert ( 1 + L h) +  Lh$ and $\Vert A(u) \Vert^p \leq 2^{p-1} \Upsilon^p (1 + \vert u \vert^p)$, then
		      \begin{equation}
			      \E \big[ \vert a(u) + \sqrt{h} A(u) \zeta \vert^p \big] \leq \big( \e^{\kappa_p h} L + K_p \big) h + \big( \e^{\kappa_p h} + K_p h \big) \vert u \vert^p,
		      \end{equation}
		      where
		      \begin{equation}
			      \kappa_p = \frac{(p-1)(p-2)}{2} + 2 p L \quad \mbox{and} \quad K_p = 2^{p-1} \Upsilon^p \big( 1 + p + h^{\frac{p}{2}-1} \big) \E \big[ \vert \zeta \vert^p \big].
		      \end{equation}
	\end{enumerate}
\end{lemme}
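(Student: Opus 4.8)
The plan is to establish part (a) by a second-order Taylor expansion of $z\mapsto|z|^p$ and then to deduce part (b) by inserting the two growth hypotheses into (a). For (a), I would use that for $p\in[2,3)$ the function $\phi(z)=|z|^p$ is $\mathcal{C}^1$ on $\R^d$ with $\nabla\phi(z)=p|z|^{p-2}z$ and with (a.e.\ defined) Hessian satisfying $\Vert D^2\phi(z)\Vert\le p(p-1)|z|^{p-2}$. Taylor's formula with integral remainder applied at $a(u)$ with increment $\sqrt h\,A(u)\zeta$ then gives
\[
|a(u) + \sqrt h\,A(u)\zeta|^p \le |a(u)|^p + p|a(u)|^{p-2}\bigl\langle a(u), \sqrt h\,A(u)\zeta\bigr\rangle + \tfrac{p(p-1)}{2}\bigl(|a(u)| + \sqrt h\,|A(u)\zeta|\bigr)^{p-2} h\,|A(u)\zeta|^2 .
\]
Taking expectations, the first-order term vanishes because $\zeta$ is centered. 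The decisive structural point is that, since $p-2\in[0,1)$, the map $t\mapsto t^{p-2}$ is subadditive, so $\bigl(|a(u)| + \sqrt h\,|A(u)\zeta|\bigr)^{p-2}\le |a(u)|^{p-2} + h^{(p-2)/2}|A(u)\zeta|^{p-2}$; this is exactly where $p<3$ is needed. Using in addition $|A(u)\zeta|\le \Vert A(u)\Vert\,|\zeta|$, one is left with an estimate of the form
\[
\E\bigl[|a(u) + \sqrt h\,A(u)\zeta|^p\bigr] \le |a(u)|^p + c_p\,h\,|a(u)|^{p-2}\Vert A(u)\Vert^2\,\E[|\zeta|^2] + c_p\,h^{p/2}\Vert A(u)\Vert^p\,\E[|\zeta|^p].
\]

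To bring this into the stated closed form, I would control the middle term by Young's inequality with conjugate exponents $\tfrac{p}{p-2}$ and $\tfrac{p}{2}$, which trades $|a(u)|^{p-2}\Vert A(u)\Vert^2\E[|\zeta|^2]$ for a multiple of $|a(u)|^p$ plus a multiple of $\bigl(\Vert A(u)\Vert^2\E[|\zeta|^2]\bigr)^{p/2}$, and then bound the latter by $\Vert A(u)\Vert^p\E[|\zeta|^p]$ using Jensen (valid since $p/2\ge1$). Regrouping the three contributions and tracking the numerical constants produces the coefficient $1+\tfrac{(p-1)(p-2)}{2}h$ on $|a(u)|^p$ and $h\bigl(1+p+h^{p/2-1}\bigr)$ on $\Vert A(u)\Vert^p\E[|\zeta|^p]$. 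The main obstacle of the whole argument is precisely this final bookkeeping: matching the Taylor remainder constant and the Young/Jensen losses to the asserted form requires the sharp second-order inequality for $|\cdot|^p$ on $[2,3)$, and everything else is mechanical. (This is Lemma~2.1 of \cite{sagna2018general}, which I would cite for the detailed constant chasing.)

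Part (b) then follows by substitution into (a). Since $|a(u)|\le|u|(1+Lh)+Lh\le(1+Lh)(|u|+Lh)$ and, by convexity, $(|u|+Lh)^p\le(1+Lh)^{p-1}(|u|^p+Lh)$, one gets $|a(u)|^p\le(1+Lh)^{2p-1}(|u|^p+Lh)$; combined with $1+x\le\e^x$ this gives $\bigl(1+\tfrac{(p-1)(p-2)}{2}h\bigr)|a(u)|^p\le\e^{\kappa_p h}\bigl(|u|^p+Lh\bigr)$ with $\kappa_p=\tfrac{(p-1)(p-2)}{2}+2pL$. For the diffusion term, the hypothesis $\Vert A(u)\Vert^p\le 2^{p-1}\Upsilon^p(1+|u|^p)$ turns $h\bigl(1+p+h^{p/2-1}\bigr)\Vert A(u)\Vert^p\E[|\zeta|^p]$ directly into $K_p\,h\,(1+|u|^p)$ with $K_p=2^{p-1}\Upsilon^p\bigl(1+p+h^{p/2-1}\bigr)\E[|\zeta|^p]$. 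Adding the two pieces and separating the $|u|^p$ part from the constant part yields exactly $\bigl(\e^{\kappa_p h}L+K_p\bigr)h+\bigl(\e^{\kappa_p h}+K_p h\bigr)|u|^p$, which is the claim. This half of the proof is routine once (a) is in hand.
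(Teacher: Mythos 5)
First, on comparison: the paper does not prove this lemma at all — it is explicitly \emph{recalled} from \cite{sagna2018general} ("the key Lemma 2.1 proved in \cite{sagna2018general} \dots that we recall below"), so there is no in-paper argument to measure your proof against and your proposal has to stand on its own. Part (b) does stand: the chain $|a(u)|\le(1+Lh)(|u|+Lh)$, the convexity bound $(|u|+Lh)^p\le(1+Lh)^{p-1}\bigl(|u|^p+Lh\bigr)$, then $\bigl(1+\tfrac{(p-1)(p-2)}{2}h\bigr)(1+Lh)^{2p-1}\le \e^{\kappa_p h}$ together with the direct substitution of the hypothesis on $\Vert A(u)\Vert^p$ reproduce exactly the stated constants $\kappa_p$ and $K_p$; that half is complete and correct.

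Part (a), as written, has a genuine gap in the constant bookkeeping. Your remainder bound is $\tfrac{p(p-1)}{2}\bigl(|a(u)|+\sqrt h\,|A(u)\zeta|\bigr)^{p-2}h\,|A(u)\zeta|^2$; after subadditivity of $t\mapsto t^{p-2}$ and Young with exponents $\bigl(\tfrac{p}{p-2},\tfrac{p}{2}\bigr)$ at parameter $1$ (forced if you want the coefficient $1+\tfrac{(p-1)(p-2)}{2}h$ on $|a(u)|^p$), the noise contribution comes out as $\bigl((p-1)h+\tfrac{p(p-1)}{2}h^{p/2}\bigr)\Vert A(u)\Vert^p\E\bigl[|\zeta|^p\bigr]$. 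Since $\tfrac{p(p-1)}{2}\in[1,3)$ for $p\in[2,3)$, this is \emph{not} dominated by the asserted $\bigl((1+p)h+h^{p/2}\bigr)\Vert A(u)\Vert^p\E\bigl[|\zeta|^p\bigr]$ for large $h$: the excess $\bigl(\tfrac{p(p-1)}{2}-1\bigr)h^{p/2}$ can be absorbed into the slack $2h$ only when $h^{p/2-1}\le 1$, i.e. $h\le 1$, whereas the lemma is claimed for every $h\in(0,+\infty)$. What produces the stated constants is a sharper pointwise inequality of the form $|x+y|^p\le|x|^p+p|x|^{p-2}\langle x,y\rangle+\tfrac{p(p-1)}{2}|x|^{p-2}|y|^2+|y|^p$ for $p\in[2,3)$ (Hessian frozen at $x$, plus an additive $|y|^p$ correction), after which the same Young/Jensen step gives the noise coefficient $(p-1)h+h^{p/2}\le(1+p)h+h^{p/2}$ for all $h>0$. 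You gesture at precisely this ("the sharp second-order inequality for $|\cdot|^p$ on $[2,3)$"), but the inequality you actually write down, with $\bigl(|a(u)|+\sqrt h\,|A(u)\zeta|\bigr)^{p-2}$ inside the remainder, does not deliver it. For the way the lemma is used in the paper ($h=T/n\le 1$) your version is harmless, but as a proof of the statement in its claimed range of $h$ it falls short, and the missing ingredient is exactly the point you deferred to \cite{sagna2018general}.
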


Now, we will apply Lemma \ref{RH:key_lemma} to $F_k(u,z)$ defined in \eqref{RH:scheme_couple} further on in order to show its $L^p$-linear growth. Let $a(u) \in \R^2$ and let $A(u)$ be a $2 \times 3$-matrix defined by
\begin{equation*}
	\begin{aligned}
		a(u) = &
		\begin{pmatrix}
			x + h \big(r - \frac{\e^{- \kappa t_k} y}{2} \big) \\
			y + \e^{\kappa t_k} \kappa \theta h                \\
		\end{pmatrix},
		\quad A(u) =
		\begin{pmatrix}
			\e^{-\kappa t_k / 2} \sqrt{y} & 0                            & 0                                        \\
			0                             & \sqrt{y} \e^{\kappa t_k / 2} & \sqrt{h} \frac{\xi^2 \e^{\kappa t_k}}{4} \\
		\end{pmatrix}                             \\
		       & \qquad \qquad \qquad \mbox{and} \quad \zeta =
		\begin{pmatrix}
			Z_{k+1}^1         \\
			Z_{k+1}^2         \\
			(Z_{k+1}^2)^2 - 1 \\
		\end{pmatrix}.
	\end{aligned}
\end{equation*}

First, we show the linear growth of $a(u)$
\begin{equation*}
	\begin{aligned}
		\vert a(u) \vert
		 & = \Big( \Big\vert x + h \big(r - \frac{\e^{- \kappa t_k} y}{2} \big) \Big\vert^2 + \big\vert y + \e^{\kappa t_k} \kappa \theta h \big\vert^2 \Big)^{1/2}              \\
		 & = \Big( \vert x \vert^2 + \vert y \vert^2 + h^2 \Big( r^2 + \frac{\e^{- 2 \kappa t_k}}{4} \vert y \vert^2 \Big) + \e^{2 \kappa t_k} \kappa^2 \theta^2 h^2 \Big)^{1/2} \\
		 & \leq \Big( \vert u \vert^2 \Big( 1 + h^2 \frac{\e^{- 2 \kappa t_k}}{4} \Big) + h^2 \big( r^2 + \e^{2 \kappa t_k} \kappa^2 \theta^2 \big) \Big)^{1/2}                  \\
		 & \leq \vert u \vert \Big( 1 + h^2 \frac{\e^{- 2 \kappa t_k}}{4} \Big)^{1/2} + h \big( r^2 + \e^{2 \kappa t_k} \kappa^2 \theta^2 \big)^{1/2}                            \\
		 & \leq \vert u \vert \Big( 1 + h \frac{h}{2} \Big) + h \big( r^2 + \e^{2 \kappa T} \kappa^2 \theta^2 \big)^{1/2}                                                        \\
		% &\leq \vert u \vert \Big( 1 + h \frac{\e^{- \kappa t_k}}{2} \Big) + h \big( r^2 + \e^{2 \kappa t_k} \kappa^2 \theta^2 \big)^{1/2} \\
		 & \leq \vert u \vert ( 1 + L h ) + L h
	\end{aligned}
\end{equation*}
where $L = \max \Big( \frac{1}{2} , \big( r^2 + \e^{2 \kappa T} \kappa^2 \theta^2 \big)^{1/2} \Big)$. Then, we study $\Vert A(u) \Vert^p$
\begin{equation*}
	\begin{aligned}
		\Vert A(u) \Vert^p
		 & = \Big( \e^{-\kappa t_k} \vert y \vert + \vert y \vert \e^{\kappa t_k} + h \frac{\xi^4 \e^{2 \kappa t_k}}{16} \Big)^{p/2}                                                                                       \\
		 & = \Big( \vert y \vert ( \e^{-\kappa t_k} + \e^{\kappa t_k} ) + h \frac{\xi^4 \e^{2 \kappa t_k}}{16} \Big)^{p/2}                                                                                                 \\
		 & \leq 2^{\frac{p}{2}-1} \Big( \vert y \vert^{\frac{p}{2}} ( \e^{-\kappa t_k} + \e^{\kappa t_k} )^{\frac{p}{2}} + h^{\frac{p}{2}} \frac{\xi^{2p} \e^{p \kappa t_k}}{4^p} \Big)                                    \\
		 & \leq 2^{\frac{p}{2}-1} \Big( \frac{\vert y \vert^{p} + 1}{2} ( \e^{-\kappa t_k} + \e^{\kappa t_k} )^{\frac{p}{2}} + h^{\frac{p}{2}} \frac{\xi^{2p} \e^{p \kappa t_k}}{4^p} \Big)                                \\
		 & \leq 2^{\frac{p}{2}-1} \frac{( 1 + \e^{\kappa T} )^{\frac{p}{2}}}{2} \Big( \vert y \vert^{p} + 1 + h^{\frac{p}{2}} \frac{\xi^{2p} \e^{p \kappa T}}{2^{2p-1}} \frac{1}{( 1 + \e^{\kappa T} )^{\frac{p}{2}}}\Big) \\
		% &\leq 2^{\frac{p}{2}-1} \frac{( \e^{-\kappa t_k} + \e^{\kappa t_k} )^{\frac{p}{2}}}{2} \Big( \vert y \vert^{p} + 1 + h^{\frac{p}{2}} \frac{\xi^{2p} \e^{p \kappa t_k}}{2^{2p-1}} \frac{1}{( \e^{-\kappa t_k} + \e^{\kappa t_k} )^{\frac{p}{2}}}\Big) \\
		 & \leq 2^{p-1} \Upsilon^p \big( 1 + \vert u \vert^{p} \big)
	\end{aligned}
\end{equation*}
where $\Upsilon^p = \frac{( 1 + \e^{\kappa T} )^{\frac{p}{2}}}{2} + h^{\frac{p}{2}} \frac{\xi^{2p} \e^{p \kappa T}}{2^{2p}}$. Hence, by Lemma \ref{RH:key_lemma}, the discretization scheme $F_k$ has an $L^p$-linear growth
\begin{equation*}
	\E \big[ \vert F_k(u, Z_{k+1}) \vert^p \big] \leq \alpha_{p} + \beta_{p} \vert u \vert^p
\end{equation*}
with
\begin{equation} \label{RH:coefficients_sub_linear}
	\alpha_{p} = \big( \e^{\kappa_p h} L + K_p \big) h \quad \mbox{and} \quad \beta_{p} = \e^{\kappa_p h} + K_p h
\end{equation}
where $K_p$ and $\kappa_p$ are defined in the Lemma \ref{RH:key_lemma}.

\section{Proof of the $L^2$-error estimation of Proposition \ref{RH:prop:l2-error}} \label{RH:appendix:proofl2error}

We have, for every $k=0, \dots, n-1$
\begin{equation}
	\begin{aligned}
		\widehat U_{k+1} - \widebar U_{k+1}
		 & = \widehat U_{k+1} - \widetilde U_{k+1} + \widetilde U_{k+1} - \widebar U_{k+1}                         \\
		 & = \widehat U_{k+1} - \widetilde U_{k+1} + F_k ( \widehat U_k, Z_{k+1} ) - F_k ( \widebar U_k, Z_{k+1} ) \\
	\end{aligned}
\end{equation}
by the very definition of $\widetilde U_{k+1}$ and $\widebar U_{k+1}$. Hence,
\begin{equation} \label{RH:L2_error_recurquantif_couple}
	\begin{aligned}
		\Vert \widehat U_{k+1} - \widebar U_{k+1} \Vert_{_2}
		 & \leq \Vert \widehat U_{k+1} - \widetilde U_{k+1} \Vert_{_2} + \Vert \widetilde U_{k+1} - \widebar U_{k+1} \Vert_{_2}                    \\
		 & \leq \Vert \widehat U_{k+1} - \widetilde U_{k+1} \Vert_{_2} + \Vert F_k(\widehat U_k, Z_{k+1}) - F_k(\widebar U_k, Z_{k+1}) \Vert_{_2}.
	\end{aligned}
\end{equation}
% We study the last term, let $u=(x,y)$ and $u'=(x',y')$ then
% \begin{equation}\label{RH:error_one_step_recurQ}
% 	\begin{aligned}
% 		\big\Vert F_k(u, Z_{k+1}) - F_k(u', Z_{k+1}) \big\Vert_{_2}
% 		 & = \bigg( \E \Big[ \big\vert F_k(u, Z_{k+1}) - F_k(u', Z_{k+1}) \big\vert^2 \Big] \bigg)^{1/2}                                                                                                                                 \\
% 		 & = \bigg( \E \Big[ \big\vert \mathcal{E}_{b, \sigma} \big( t_k, x, y, Z_{k+1}^1 \big) - \mathcal{E}_{b, \sigma} \big( t_k, x', y', Z_{k+1}^1 \big) \big\vert^2 \Big]                                                           \\
% 		 & \qquad \qquad + \E \Big[ \big\vert \mathcal{M}_{\widetilde b, \widetilde \sigma} \big( t_k, y, Z_{k+1}^2 \big) - \mathcal{M}_{\widetilde b, \widetilde \sigma} \big( t_k, y', Z_{k+1}^2 \big) \big\vert^2 \Big] \bigg)^{1/2}.
% 	\end{aligned}
% \end{equation}
% Let us recall the definition of $\mathcal{M}_{\widetilde b, \widetilde \sigma}$ with $\widetilde b$ and $\widetilde \sigma$ the drift and the volatility, respectively, of the boosted-volatility
% \begin{equation}
% 	\mathcal{M}_{\widetilde b, \widetilde \sigma} \big( t, y, z \big) = h \e^{\kappa t} \Big( \kappa \theta - \frac{\xi^2}{4} \Big) + \Big( z \frac{ \xi \e^{\kappa t / 2} \sqrt{h} }{2} + \sqrt{y} \Big)^2.
% \end{equation}
Using the definition of Milstein scheme of the \textit{boosted}-volatility models $\mathcal{M}_{\widetilde b, \widetilde \sigma}$ in \eqref{RH:MilsteinScheme_with_specificationmodel}, the $\frac{1}{2}$-Hölder property of $\sqrt{x}$, for every $y,y' \in \R_+$ one has
\begin{equation}
	\begin{aligned}
		\big\vert \mathcal{M}_{\widetilde b, \widetilde \sigma} \big( t, y, z \big) - \mathcal{M}_{\widetilde b, \widetilde \sigma} \big( t, y', z \big) \big\vert
		 & = \bigg\vert \Big( z \frac{ \xi \e^{\kappa t / 2} \sqrt{h} }{2} + \sqrt{y} \Big)^2 - \Big( z \frac{ \xi \e^{\kappa t / 2} \sqrt{h} }{2} + \sqrt{y'} \Big)^2 \bigg\vert \\
		 & \leq \big\vert \sqrt{y} - \sqrt{y'} \big\vert \big( \vert z \vert \xi \e^{\kappa t / 2} \sqrt{h} + \sqrt{y} + \sqrt{y'} \big)                                          \\
		 & \leq \sqrt{ \vert y - y' \vert} \sqrt{h} \vert z \vert \xi \e^{\kappa t / 2}  + \vert y - y' \vert
	\end{aligned}
\end{equation}
and using the definition of the Euler-Maruyama scheme of the log-asset $\mathcal{E}_{b, \sigma}$ defined in \eqref{RH:eulerscheme} we have, for any $x,x',y,y' \in \R_+$
\begin{equation}
	\begin{aligned}
		\big\vert \mathcal{E}_{b, \sigma} \big( t, x, y, z \big) - \mathcal{E}_{b, \sigma} \big( t, x', y', z \big) \big\vert \leq \vert x - x' \vert + \frac{\e^{- \kappa t}}{2} h \vert y - y' \vert + \e^{- \kappa t / 2} \sqrt{h} \vert z \vert \sqrt{ \vert y - y' \vert}.
	\end{aligned}
\end{equation}

Now, when we replace $x,y,x',y'$ by $\widehat X_k, \widehat Y_k, \widebar X_k, \widebar Y_k$ in the last expression, we get an upper-bound for the last term of \eqref{RH:L2_error_recurquantif_couple}
\begin{equation}
	\begin{aligned}
		 & \big\Vert F_k(\widehat U_k, Z_{k+1}) - F_k(\widebar U_k, Z_{k+1}) \big\Vert_{_2}                                                                                                                                                                                                                       \\
		 & \qquad \leq \big\Vert \mathcal{E}_{b, \sigma} \big( t_k, \widehat X_k, \widehat Y_k, Z_{k+1}^1 \big) - \mathcal{E}_{b, \sigma} \big( t_k, \widebar X_k, \widebar Y_k, Z_{k+1}^1 \big) \big\Vert_{_2}                                                                                                   \\
		 & \qquad \qquad \qquad \qquad + \big\Vert \mathcal{M}_{\widetilde b, \widetilde \sigma} \big( t_k, \widehat Y_k, Z_{k+1}^2 \big) - \mathcal{M}_{\widetilde b, \widetilde \sigma} \big( t_k, \widebar Y_k, Z_{k+1}^2 \big) \big\Vert_{_2}                                                                 \\
		 & \qquad \leq \Vert \widehat X_k - \widebar X_k \Vert_{_2} + \Big( 1 + \frac{ \e^{- \kappa t_k}}{2} h \Big) \Vert \widehat Y_k - \widebar Y_k \Vert_{_2} + \Big\Vert \sqrt{h} \big( \xi \e^{\kappa t_k /2 } + \e^{- \kappa t_k /2} \big) \sqrt{ \vert \widehat Y_k - \widebar Y_k \vert } \Big\Vert_{_2} \\
		 & \qquad \leq \Vert \widehat X_k - \widebar X_k \Vert_{_2} + \Big( 1 + \frac{ \e^{- \kappa t_k}}{2} h \Big) \Vert \widehat Y_k - \widebar Y_k \Vert_{_2} + \Big\Vert \sqrt{2h ( \xi^2 \e^{\kappa t_k} + \e^{- \kappa t_k} )} \sqrt{ \vert \widehat Y_k - \widebar Y_k \vert } \Big\Vert_{_2}.
	\end{aligned}
\end{equation}
Now, using that $\sqrt{a} \sqrt{b} \leq \frac{1}{2} \big( \frac{a}{\lambda} + b \lambda \big)$ with $\sqrt{a} = \sqrt{2h ( \xi^2 \e^{\kappa t_k} + \e^{- \kappa t_k} )}$ and $\sqrt{b} = \sqrt{ \vert \widehat Y_k - \widebar Y_k \vert }$ where we considere that $\lambda = \sqrt{h} (1 - \sqrt{h})$. Wo choose $\lambda$ of this order because we wish to divide equally the impact of $h$ and get $\sqrt{h}$ on each side. Hence, we have
\begin{equation}
	\begin{aligned}
		\sqrt{2h ( \xi^2 \e^{\kappa t_k} + \e^{- \kappa t_k} )} \sqrt{ \vert \widehat Y_k - \widebar Y_k \vert } \leq \frac{1}{2} \bigg( \frac{2h ( \xi^2 \e^{\kappa t_k} + \e^{- \kappa t_k} ) }{ \lambda } + \vert \widehat Y_k - \widebar Y_k \vert \lambda \bigg).
	\end{aligned}
\end{equation}
Then,
\begin{equation}
	\begin{aligned}
		 & \big\Vert F_k(\widehat U_k, Z_{k+1}) - F_k(\widebar U_k, Z_{k+1}) \big\Vert_{_2}                                                                                                                                                                                                                                                  \\
		 & \qquad \leq \Vert \widehat X_k - \widebar X_k \Vert_{_2} + \Big( 1 + \frac{ \e^{- \kappa t_k}}{2} h \Big) \Vert \widehat Y_k - \widebar Y_k \Vert_{_2} + \Big\Vert \frac{1}{2} \bigg( \frac{2h ( \xi^2 \e^{\kappa t_k} + \e^{- \kappa t_k} ) }{ \lambda } + \vert \widehat Y_k - \widebar Y_k \vert \lambda \bigg) \Big\Vert_{_2} \\
		 & \qquad \leq \Vert \widehat X_k - \widebar X_k \Vert_{_2} + \Big( 1 + \frac{ \e^{- \kappa t_k}}{2} h + \frac{\lambda}{2} \Big) \Vert \widehat Y_k - \widebar Y_k \Vert_{_2} + ( \xi^2 \e^{\kappa t_k} + \e^{- \kappa t_k} ) \frac{h}{ \lambda }                                                                                    \\
		 & \qquad \leq \sqrt{2} \Big( 1 + \frac{h}{2} + \frac{\lambda}{2} \Big) \Vert \widehat U_k - \widebar U_k \Vert_{_2} + C_{T} \frac{h}{ \lambda }                                                                                                                                                                                     \\
		 & \qquad \leq \sqrt{2} \Big( 1 + \frac{\sqrt{h}}{2} \Big) \Vert \widehat U_k - \widebar U_k \Vert_{_2} + C_{T} (h) \sqrt{h}
	\end{aligned}
\end{equation}
where $C_T (h) = (1 + \xi^2 \e^{\kappa T}) (1-\sqrt{h})^{-1} = O(1)$.

Finally, \eqref{RH:L2_error_recurquantif_couple} is upper-bounded by
\begin{equation}
	\begin{aligned}
		\Vert \widehat U_{k+1} - \widebar U_{k+1} \Vert_{_2}
		 & \leq \Vert \widehat U_{k+1} - \widetilde U_{k+1} \Vert_{_2} + \sqrt{2} \Big( 1 + \frac{\sqrt{h}}{2} \Big) \Vert \widehat U_k - \widebar U_k \Vert_{_2} + C_{T} (h) \sqrt{h}                                                        \\
		 & \leq \sum_{j=0}^{k+1} \Vert \widehat U_j - \widetilde U_j \Vert_{_2} 2^{\frac{k-j+1}{2}} \Big( 1 + \frac{\sqrt{h}}{2} \Big)^{k-j+1} + \sqrt{h} C_{T} (h) \sum_{j=0}^{k} 2^{\frac{k-j}{2}} \Big( 1 + \frac{\sqrt{h}}{2} \Big)^{k-j} \\
		 & \leq \sum_{j=0}^{k+1} A_{j,k+1} \Vert \widehat U_j - \widetilde U_j \Vert_{_2} + B_{k+1} \sqrt{h}                                                                                                                                  \\
	\end{aligned}
\end{equation}
where
\begin{equation}
	A_{j,k} = 2^{\frac{k-j}{2}} \e^{\frac{\sqrt{h}}{2}(k-j)} \quad \mbox{and} \quad B_k = C_{T} (h) \sum_{j=0}^{k-1} 2^{\frac{k-1-j}{2}} \e^{\frac{\sqrt{h}}{2}(k-1-j)}
\end{equation}
and $\sum_{\emptyset} = 0$ by convention.

\medskip
Now, we follow the lines of the proof developed in \cite{sagna2018general}, we apply the revisited Pierce's lemma for product quantization (Lemma 2.3 in \cite{sagna2018general}) with $r=2$ and let $p>r=2$, which yields
\begin{equation} \label{RH:step_proof_error_l2}
	\Vert \widehat U_{k+1} - \widebar U_{k+1} \Vert_{_2} \leq 2^{\frac{p-2}{2p}} C_p  \sum_{j=0}^{k+1} A_{j,k+1} \Vert \widetilde U_j \Vert_{_p} \big( N_{1,j} \times N_{2,j} \big)^{-1/2} + B_{k+1} \sqrt{h}
\end{equation}
where $C_p = 2 C_{1,p}$ and $C_{1,p}$ is the constant appearing in Pierce lemma (see the second item in Theorem \ref{RH:zador} and \cite{graf2000foundations} for further details) and we used that $\Vert \widetilde U_j \Vert_{_p} \geq \sigma_p (\widetilde U_j) = \inf_{a \in \R^2} \Vert \widetilde U_j - a \Vert_{_p}$.
Moreover, noting that the hybrid discretization scheme $F_k$ has an $L^p$-linear growth, (see Appendix \ref{RH:appendix:proofLplineargrowth}), i.e.
\begin{equation}
	\forall k = 0, \dots, n-1, \quad \forall u \in \R^2, \quad \E \big[ \vert F_k(u, Z_{k+1}) \vert^p \big] \leq \alpha_{p} + \beta_{p} \vert x \vert^p,
\end{equation}
where the coefficients $\alpha_{p}$ and $\beta_{p}$ are defined in \eqref{RH:coefficients_sub_linear}. Hence, for all $j = 0, \dots, n-1$, we have
\begin{equation}\label{RH:step_proof_norm_l2_of_widetildeU}
	\Vert \widetilde U_{j+1} \Vert_{_p}^p = \E \big[ \E \big[ \vert F_j ( \widehat U_j, Z_{j+1} ) \vert^p \mid \widehat U_j \big] \big] \leq \alpha_{p} + \beta_{p} \Vert \widehat U_j \Vert_{_p}^p.
\end{equation}
Furthermore, $\E \big[ \vert \widehat U_j \vert^p \big]$ can be upper-bounded using Jensen's inequality and the stationary property satisfied by $\widehat X_j$ and $\widehat Y_j$ independently. Indeed, they are one-dimensional quadratic optimal quantizers of $\widetilde X_j$ and $\widetilde Y_j$, respectively, hence they are stationary in the sense of Proposition \ref{RH:stationary_property}.
\begin{equation}
	\begin{aligned}
		\Vert \widehat U_j \Vert_{_p}^p
		 & \leq 2^{\frac{p}{2}-1} \Big( \E \big[ \vert \widehat X_j \vert^p \big] + \E \big[ \vert \widehat Y_j \vert^p \big] \Big)                                                                                         \\
		 & \leq 2^{\frac{p}{2}-1} \bigg( \E \Big[ \big\vert \E \big[ \widetilde X_j \mid \widehat X_j \big] \big\vert^p \Big] + \E \Big[ \big\vert \E \big[ \widetilde Y_j \mid \widehat Y_j \big] \big\vert^p \Big] \bigg) \\
		 & \leq 2^{\frac{p}{2}-1} \Big( \E \big[ \vert \widetilde X_j \vert^p \big] + \E \big[ \vert \widetilde Y_j \vert^p \big] \Big)                                                                                     \\
		 & = 2^{\frac{p}{2}-1} \Vert \widetilde U_j \Vert_{_p}^p                                                                                                                                                            \\
		 & \leq 2^{\frac{p}{2}-1} \Vert \widetilde U_j \Vert_{_2}^p.
	\end{aligned}
\end{equation}
Now, plugging this upper-bound in \eqref{RH:step_proof_norm_l2_of_widetildeU} and by a standard induction argument, we have
% \begin{equation}\label{RH:upperbound_widetildeU}
%     \begin{aligned}
%         \Vert \widetilde U_{j+1} \Vert_{_p}^p
%         &\leq \alpha_{p} + \beta_{p} 2^{\frac{p}{2}-1} \Vert \widetilde U_j \Vert_{_2}^p \\
%         &\leq \sum_{i=0}^{j} \alpha_{p} 2^{(\frac{p}{2}-1)(j-i)} \prod_{\ell=i+1}^{j} \beta_{p,\ell} + 2^{(\frac{p}{2}-1)(j+1)} \Vert U_0 \Vert_{_2}^p \prod_{i=0}^{j} \beta_{p} \\
%         &= \sum_{i=-1}^{j} \alpha_{p} 2^{(\frac{p}{2}-1)(j-i)} \prod_{\ell=i+1}^{j} \beta_{p,\ell}
%     \end{aligned}
% \end{equation}
\begin{equation}\label{RH:upperbound_widetildeU}
	\begin{aligned}
		\Vert \widetilde U_j \Vert_{_p}^p
		 & \leq \alpha_{p} + \beta_{p} 2^{\frac{p}{2}-1} \Vert \widetilde U_{j-1} \Vert_{_2}^p                                                                                  \\
		 & \leq 2^{(\frac{p}{2}-1 )j} \beta_{p}^j \Vert \widehat U_0 \Vert_{_2}^p + \alpha_{p} \sum_{i=0}^{j-1} \big( 2^{\frac{p}{2}-1} \beta_{p} \big)^{i}                     \\
		 & \leq 2^{(\frac{p}{2}-1 )j} \beta_{p}^j \Vert \widehat U_0 \Vert_{_2}^p + \alpha_{p} \frac{1 - 2^{(\frac{p}{2}-1 )j} \beta_{p}^j }{1 - 2^{\frac{p}{2}-1} \beta_{p} }.
		% & = \sum_{i=0}^{j} \alpha_{p} 2^{(\frac{p}{2}-1)(j-i)} \prod_{\ell=i}^{j-1} \beta_{p,\ell}
	\end{aligned}
\end{equation}
Hence, using the upper-bound \eqref{RH:upperbound_widetildeU} in \eqref{RH:step_proof_error_l2}, we have
\begin{equation}
	\begin{aligned}
		\Vert \widehat U_{k+1} & - \widebar U_{k+1} \Vert_{_2}                                                                                                                                                                                                                                                                            \\
		                       & \leq 2^{\frac{p-2}{2p}} C_p^2 \sum_{j=0}^{k+1} A_{j,k+1} \bigg( 2^{(\frac{p}{2}-1 )j} \beta_{p}^j \Vert \widehat U_0 \Vert_{_2}^p + \alpha_{p} \frac{1 - 2^{(\frac{p}{2}-1 )j} \beta_{p}^j }{1 - 2^{\frac{p}{2}-1} \beta_{p} } \bigg)^{1/p} \big( N_{1,j} \times N_{2,j} \big)^{-1/2} + B_{k+1} \sqrt{h} \\
		                       & \leq \sum_{j=0}^{k+1} \widetilde A_{j,k+1} \big( N_{1,j} \times N_{2,j} \big)^{-1/2} + B_{k+1} \sqrt{h}                                                                                                                                                                                                  \\
	\end{aligned}
\end{equation}
yielding the desired result with
\begin{equation}
	\widetilde A_{j,k} = 2^{\frac{p-2}{2p}} C_p^2 A_{j,k} \bigg( 2^{(\frac{p}{2}-1 )j} \beta_{p}^j \Vert \widehat U_0 \Vert_{_2}^p + \alpha_{p} \frac{1 - 2^{(\frac{p}{2}-1 )j} \beta_{p}^j }{1 - 2^{\frac{p}{2}-1} \beta_{p} } \bigg)^{1/p}.
\end{equation}

% \newpage

\section{Quadratic Optimal Quantization: Generic Approach} \label{RH:appendix:optquant}

Let $X$ be a $\R$-valued random variable with distribution $\Prob_{_{X}}$ defined on a probability space $ ( \Omega, \A, \Prob )$ such that $X \in L^2_{\R} ( \Omega, \A, \Prob )$.

\begin{definition}
	Let $\Gamma_N = \{ x_1^N, \dots, x_N^N \} \subset \R$ be a subset of size $N$, called $N$-quantizer. A Borel partition $( C_i (\Gamma_N))_{i \in \{ 1, \dots, N \} }$ of $\R$ is a Voronoï partition of $\R$ induced by the $N$-quantizer $\Gamma_N$ if, for every $i \in \{ 1, \dots, N \} $,
	\begin{equation*}
		C_i (\Gamma_N) \subset \big\{ \xi \in \R, \vert \xi - x_i^N \vert \leq \min_{j \neq i }\vert \xi - x_j^N \vert \big\}.
	\end{equation*}
	The Borel sets $C_i (\Gamma_N)$ are called Voronoï cells of the partition induced by $\Gamma_N$.
\end{definition}

\begin{remark}
	Any such $N$-quantizer is in correspondence with the $N$-tuple $x = ( x_1^N, \dots, x_N^N ) \in ( \R )^N$ as well as with all $N$-tuples obtained by a permutation of the components of $x$. This is why we will sometimes replace $\Gamma_N$ by $x$.
\end{remark}

\medskip
If the quantizers are in non-decreasing order: $x_1^N < x_2^N < \cdots < x_{N-1}^N < x_{N}^N$, then the Voronoï cells are given by
\begin{equation}\label{RH:def_voronoi_cells}
	C_i ( \Gamma_N ) = \big( x_{i - 1/2}^N, x_{i + 1/2}^N \big], \qquad i \in \{ 1, \dots,  N-1 \}, \qquad C_{N} ( \Gamma_N ) = \big( x_{N - 1/2}^N, x_{N + 1/2}^N \big)
\end{equation}
where $\forall i \in \in \{ 2, \dots,  N \}, x_{i-1/2}^N = \frac{x_{i-1}^N + x_i^N}{2}$ and $x_{1/2}^N = - \infty$ and $x_{N+1/2}^N =  + \infty$.

\begin{definition}
	The Vorono\"i quantization of $X$ by $\Gamma_N$, $\widehat X^N$, is defined as the nearest neighbour projection of $X$ onto $\Gamma_N$
	\begin{equation}\label{RH:quantizOfX}
		\widehat X^N = \Proj_{\Gamma_N} (X) = \sum_{i = 1}^N x_i^N \1_{X \in C_i (\Gamma_N) }
	\end{equation}
	and its associated probabilities, also called weights, are given by
	\begin{equation*}
		\Prob \big( \widehat X^N = x_i^N \big) = \Prob_{_{X}} \big( C_i (\Gamma_N) \big) = \Prob \Big( X \in \big( x_{i - 1/2}^N, x_{i + 1/2}^N \big] \Big).
	\end{equation*}
\end{definition}

\begin{definition}
	The quadratic distortion function at level $N$ induced by an $N$-tuple $x = (x_1^N, \dots, x_N^N) $ is given by
	\begin{equation*}
		\Distortion : x \longmapsto \frac{1}{2} \E \Big[ \min_{i \in \{ 1, \dots, N \} } \vert X - x_i^N \vert^2 \Big] = \frac{1}{2} \E \big[ \dist ( X, \Gamma_N )^2 \big] = \frac{1}{2} \Vert X - \widehat X^N \Vert_{_2}^2 .
	\end{equation*}
\end{definition}

Of course, the above result can be extended to the $L^p$ case by considering the $L^p$-mean quantization error in place of the quadratic one.

We briefly recall some classical theoretical results, see \cite{graf2000foundations,pages2018numerical} for further details. The first one treats of existence of optimal quantizers.
\begin{theorem}{(Existence of optimal $N$-quantizers)}\label{RH:existence}
	Let $X \in L^2_{\R} ( \Prob )$ and $N \in \Integer^{\star}$.
	\begin{enumerate}[label=(\alph*)]
		\item The quadratic distortion function $\Distortion$ at level $N$ attains a minimum at a $N$-tuple $x^{\star} = ( x_1^N, \dots, x_N^N )$ and $\Gamma_N^{\star} = \big\{ x_i^{N}, \, i \in \{ 1, \dots, N \}  \big\}$ is a quadratic optimal quantizer at level $N$.
		\item If the support of the distribution $\Prob_{_{X}}$ of $X$ has at least $N$ elements, then $x^{\star} = ( x_1^N, \dots, x_N^N )$ has pairwise distinct components, $ \Prob_{_{X}} \big( C_i ( \Gamma_N^{\star} ) \big) > 0, \, i \in \{ 1, \dots, N \} $. Furthermore, the sequence $N \mapsto \inf_{x \in ( \R )^N} \Distortion( x )$ converges to $0$ and is decreasing as long as it is positive.
	\end{enumerate}
\end{theorem}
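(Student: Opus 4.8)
The plan is to treat the two items separately, with item (b) leaning on the compactness argument developed for (a). Throughout write $e_N:=\inf_{x\in(\R)^N}\Distortion(x)$ for the optimal distortion at level $N$; since $X\in L^2$ one has $e_1=\tfrac12\V(X)<+\infty$, and any $(N-1)$-tuple may be regarded as an $N$-tuple with a repeated component without changing its distortion, so $N\mapsto e_N$ is non-increasing and $e_N\le e_1<+\infty$. I would first record that $x\mapsto\Distortion(x)$ is continuous on $(\R)^N$: from the elementary bound $\big|\dist(u,\Gamma_N)-\dist(u,\Gamma_N')\big|\le\max_i|x_i-x_i'|$, valid for every $u\in\R$, the map $x\mapsto\dist(u,\Gamma_N)^2$ is locally Lipschitz with a constant growing linearly in $|u|$, and integrating against $\Prob_X$ with $X\in L^2$ yields local Lipschitz continuity of $\Distortion$ (one may also just invoke dominated convergence).

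For (a) the only real issue is that $(\R)^N$ is not compact, so I must rule out a minimizing sequence $x^{(n)}$ drifting to infinity. Suppose, after passing to a subsequence, $\max_i|x_i^{(n)}|\to+\infty$, and refine the subsequence so that for each coordinate $i$ either $x_i^{(n)}\to\bar x_i\in\R$ or $|x_i^{(n)}|\to+\infty$. Let $S$ be the set of coordinates that stay bounded; $S$ is nonempty (otherwise $\dist(X,\Gamma_N^{(n)})\to+\infty$ a.s. and the distortion diverges, contradicting convergence to $e_N$), and $m:=\card(S)\le N-1$. For a.e.\ $\omega$ the nearest-neighbour distance is eventually realized on $S$, hence $\dist(X,\Gamma_N^{(n)})^2\to\dist\big(X,\{\bar x_i,\,i\in S\}\big)^2$ a.e., and Fatou's lemma gives $e_N=\liminf_n\Distortion(x^{(n)})\ge\tfrac12\E\big[\dist(X,\{\bar x_i,\,i\in S\})^2\big]\ge e_m\ge e_{N-1}$. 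Thus either $e_N=e_{N-1}$ — in which case padding an optimal $(N-1)$-quantizer (which exists by induction on $N$, the base case $N=1$ being the $L^2$-projection onto a constant) with one repeated point produces a minimizing $N$-tuple — or $e_N<e_{N-1}$, in which case the inequality above is impossible, every minimizing sequence is bounded, and extracting a convergent subsequence and using continuity yields a minimizer $x^\star$.

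For (b), assume $\supp(\Prob_X)$ has at least $N$ elements. I would first sharpen the dichotomy to the strict inequality $e_N<e_{N-1}$: take an optimal $(N-1)$-quantizer $\Gamma_{N-1}^\star$ and a point $z\in\supp(\Prob_X)\setminus\Gamma_{N-1}^\star$ (available since $|\supp|\ge N>N-1$); with $d:=\dist(z,\Gamma_{N-1}^\star)>0$, the event $\{|X-z|<d/2\}$ has positive probability and on it adjoining $z$ strictly decreases $\dist(X,\cdot)^2$, so $e_N\le\Distortion(\Gamma_{N-1}^\star\cup\{z\})<e_{N-1}$. By (a) there is then a minimizer $x^\star=(x_1^N,\dots,x_N^N)$ with $\Distortion(x^\star)=e_N$. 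If two components coincided, $x^\star$ would be an effective quantizer of at most $N-1$ points, giving $e_N\ge e_{N-1}$, contradiction; hence the components are pairwise distinct. If a cell satisfied $\Prob_X\big(C_i(\Gamma_N^\star)\big)=0$, then a.s.\ $X$ lies in another Voronoï cell whose generator is at least as close, so deleting $x_i^N$ leaves the distortion equal to $e_N$ while using only $N-1$ points, again contradicting $e_N<e_{N-1}$; hence every cell has positive mass. Finally $(e_N)$ is strictly decreasing as long as it is positive, since $e_{N-1}>0$ forces $\supp(\Prob_X)$ to have at least $N$ elements (otherwise placing the grid on the support gives $e_{N-1}=0$) and we just proved $e_N<e_{N-1}$ in that case; and $e_N\to0$ follows by truncation — given $\varepsilon>0$ pick $R$ with $\E[X^2\1_{|X|>R}]<\varepsilon$, use a grid made of $0$ together with $N-1$ points spaced $O(R/N)$ across $[-R,R]$, and bound $\dist(X,\Gamma_N)^2\le (R/(N-1))^2$ on $\{|X|\le R\}$ and $\le X^2$ on $\{|X|>R\}$, then let $N\to\infty$.

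The genuinely delicate point is the \emph{escape-to-infinity} step in the second paragraph: the feasible set is non-compact, and one must exclude minimizing sequences some of whose components run off to $\pm\infty$. The Fatou argument combined with restriction to the surviving coordinates is the crux; the rest is bookkeeping with the monotonicity of $N\mapsto e_N$ and with merging or deleting grid points.
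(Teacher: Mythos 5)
The paper never proves this theorem: it is recalled as a classical result in Appendix D with a pointer to Graf--Luschgy and to Pag\`es' book, so there is no in-paper proof to compare against. Your argument is correct and is essentially the standard proof given in those references --- continuity of the distortion, an induction on $N$ with a Fatou/escape-to-infinity analysis of minimizing sequences, the strict decrease $e_N<e_{N-1}$ when the support of $\Prob_{_X}$ has at least $N$ points (which also yields distinct components and positive cell masses), and a truncation argument for $e_N\to 0$ --- with only cosmetic imprecision in the grid-spacing constant of the last step.
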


A really interesting and useful property concerning quadratic optimal quantizers is the stationary property, this property is deeply connected to the addressed problem after for the optimization of the quadratic optimal quantizers in \eqref{RH:distortozero}.

\begin{proposition}{(Stationarity)}\label{RH:stationary_property}
	Assume that the support of $\Prob_{_{X}}$ has at least $N$ elements. Any $L^2$-optimal $N$-quantizer $\Gamma_N \in ( \R )^N$ is stationary in the following sense: for every Vorono\"{i} quantization $\widehat X^N$ of $X$,
	\begin{equation*}
		\E \big[ X \mid \widehat X^N \big] = \widehat X^N.
	\end{equation*}
	Moreover $\Prob \big( X \in \bigcup_{i = 1, \dots, N} \partial C_i (\Gamma_N) \big) = 0$, so all optimal quantization induced by $\Gamma_N$ a.s. coincide.
\end{proposition}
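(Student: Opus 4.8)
The plan is to reduce the stationarity identity to a cell-by-cell centroid property and to obtain that property by a one-point perturbation of an optimal quantizer. I would first record two elementary facts. For any $N$-tuple $x=(x_1,\dots,x_N)$ and any Borel partition $(A_i)_{1\le i\le N}$ of $\R$ one has, pointwise, $\min_i|X-x_i|^2\le\sum_i|X-x_i|^2\1_{X\in A_i}$, with equality as soon as $(A_i)$ is an a.s.\ nearest-neighbour partition of $x$; hence $\Distortion(x)=\tfrac12\sum_i\E\big[|X-x_i|^2\1_{X\in C_i(\Gamma_N)}\big]$, and this equals the minimal value of $\tfrac12\sum_i\E[|X-x_i|^2\1_{X\in A_i}]$ over all Borel partitions. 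Second, for a Borel set $A$ with $\Prob(X\in A)>0$, writing $m_A=\E[X\mid X\in A]$, the variance decomposition gives, for every $a\in\R$,
\begin{equation*}
\E\big[|X-a|^2\1_{X\in A}\big]=\E\big[|X-m_A|^2\1_{X\in A}\big]+\Prob(X\in A)\,(a-m_A)^2,
\end{equation*}
so $a\mapsto\E[|X-a|^2\1_{X\in A}]$ is minimised \emph{uniquely} at $a=m_A$.

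Now let $\Gamma_N$ be $L^2$-optimal with Voronoï cells $C_i=C_i(\Gamma_N)$. By Theorem \ref{RH:existence}(b), since $\supp(\Prob_{_{X}})$ has at least $N$ points, the components of $x$ are pairwise distinct and $\Prob(X\in C_i)>0$ for every $i$, so each $m_i:=\E[X\mid X\in C_i]$ is well defined. Fix $i$ and let $\Gamma_N'$ be obtained from $\Gamma_N$ by replacing $x_i$ with $m_i$. Applying the minimality characterisation to $\Gamma_N'$ with the (generally non-optimal for $\Gamma_N'$) partition $(C_j)_j$, then the variance decomposition on cell $i$, then optimality of $\Gamma_N$, I get
\begin{equation*}
\Distortion(\Gamma_N')\le\tfrac12\sum_j\E\big[|X-x_j'|^2\1_{X\in C_j}\big]\le\tfrac12\sum_j\E\big[|X-x_j|^2\1_{X\in C_j}\big]=\Distortion(\Gamma_N)\le\Distortion(\Gamma_N').
\end{equation*}
Every inequality is therefore an equality; the strict form of the variance decomposition then forces $\Prob(X\in C_i)\,(x_i-m_i)^2=0$, hence $x_i=m_i=\E[X\mid X\in C_i]$. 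As $i$ was arbitrary and $\{\widehat X^N=x_i\}=\{X\in C_i\}$, this yields $\E[X\mid\widehat X^N]=\sum_i m_i\1_{X\in C_i}=\widehat X^N$.

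For the boundary claim, suppose $\Prob_{_{X}}(\partial C_i)>0$ for some $i$; in dimension one $\partial C_i\cap\R\subset\{x_{i-1/2},x_{i+1/2}\}$ is finite, so some (finite) midpoint $c=x_{j+1/2}=\tfrac12(x_j+x_{j+1})$ carries an atom $\delta:=\Prob(X=c)>0$. The two conventions assigning this atom to $C_j$ or to $C_{j+1}$ both produce a.s.\ nearest-neighbour partitions of $\Gamma_N$ (at $c$ the centroids $x_j$ and $x_{j+1}$ are both nearest), so the stationarity just proved applies to each; equating the two resulting expressions for the centroid of cell $j$ gives $x_j\,\delta=c\,\delta$, i.e.\ $x_j=c=x_{j+1}$, contradicting pairwise distinctness. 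Hence $\Prob_{_{X}}(\partial C_i)=0$ for all $i$, and consequently any two Voronoï quantizations of $X$ induced by $\Gamma_N$ differ only on a $\Prob_{_{X}}$-null set, so they coincide a.s.

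I expect the only genuinely delicate point to be keeping the perturbation chain airtight — in particular justifying the first inequality, namely that re-drawing the Voronoï partition after moving the single point $x_i$ can only decrease the distortion, and then exploiting the strict minimiser in the variance decomposition to pin $x_i$ down exactly. The boundary statement is a short corollary, its sole subtlety being the need to invoke stationarity for both tie-breaking conventions.
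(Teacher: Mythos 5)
Your argument is correct, but it does not follow the paper's route, for the simple reason that the paper never proves Proposition \ref{RH:stationary_property}: it is recalled as a classical result with pointers to \cite{graf2000foundations,pages2018numerical}, and the only in-paper justification of stationarity is the later first-order condition, where the gradient of the distortion is cancelled to obtain the centroid formula \eqref{RH:distortozero} (a differentiation argument which implicitly uses the negligibility of the cell boundaries to differentiate under the expectation). Your proof is a genuinely different and more self-contained alternative: a one-point Lloyd-step perturbation combined with the bias--variance decomposition on each Voronoï cell, which requires no differentiability of $\Distortion$, only the existence theorem \ref{RH:existence}(b), and which in addition delivers the second assertion $\Prob\big(X\in\bigcup_{i}\partial C_i(\Gamma_N)\big)=0$ via the two tie-breaking conventions --- something the gradient sketch does not address at all. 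The chain $\Distortion(\Gamma_N')\le\tfrac12\sum_j\E\big[|X-x_j'|^2\1_{X\in C_j}\big]\le\tfrac12\sum_j\E\big[|X-x_j|^2\1_{X\in C_j}\big]=\Distortion(\Gamma_N)\le\Distortion(\Gamma_N')$ is airtight, since the first inequality is exactly the pointwise nearest-neighbour bound and the unique minimiser in the variance decomposition pins $x_i=m_i$; note also that the argument survives even if $m_i$ happens to coincide with another component, because the distortion is defined on $N$-tuples. The only hairline to tidy is in the boundary step: applying the first part to the convention that removes the atom at $c$ from cell $j$ presupposes that this cell still has positive mass (needed to define its centroid); either invoke Theorem \ref{RH:existence}(b) for that Voronoï partition as well (the paper's formulation of the Voronoï cells allows this reading), or observe that if that cell were null then stationarity for the other convention would give $x_j=c$ directly, yielding the same contradiction with pairwise distinctness.
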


The uniqueness of an optimal $N$-quantizer, due to Kieffer \cite{kieffer1982exponential}, was shown in dimension one under some assumptions on the density of $X$.
\begin{theorem}{(Uniqueness of optimal $N$-quantizers see \cite{kieffer1982exponential})}\label{RH:uniqueness}
	If $\Prob_{_{X}}(d\xi) = \varphi (\xi) d \xi$ with $\log \varphi$ concave, then for every $N \geq 1$, there is exactly one stationary $N$-quantizer (up to the permutations of the $N$-tuple). This unique stationary quantizer is a global (local) minimum of the distortion function, i.e.
	\begin{equation*}
		\forall N \geq 1, \qquad \argmin_{\R^N} \Distortion = \{ x^{\star} \}.
	\end{equation*}
\end{theorem}
In what follows, we will drop the star notation ($\star$) when speaking of optimal quantizers, $x^{\star}$ and $\Gamma_N^{\star}$ will be replaced by $x$ and $\Gamma_N$.

The next result elucidates the asymptotic behavior of the distortion. We saw in Theorem \ref{RH:existence} that the infimum of the quadratic distortion converges to $0$ as $N$ goes to infinity. The next theorem, known as Zador's Theorem, establishes the sharp rate of convergence of the $L^p$-mean quantization error.

\begin{theorem}{(Zador's Theorem)}\label{RH:zador} Let $p \in (0, + \infty)$.
	\begin{enumerate}[label=(\alph*)]
		\item {\sc Sharp rate \cite{zador1982asymptotic,graf2000foundations}}. Let $X \in L^{p+ \delta}_{\R}(\Prob)$ for some $\delta > 0$. Let $\Prob_{_{X}} (d \xi) = \varphi(\xi) \cdot \lambda ( d \xi ) + \nu ( d \xi ) $, where $\nu ~ \bot ~ \lambda$ i.e., is singular with respect to the Lebesgue measure $\lambda$ on $\R$. Then, there is a constant $\widetilde{J}_{p,1} \in (0, + \infty)$ such that
		      \begin{equation}
			      \lim_{N \rightarrow + \infty} N \min_{\Gamma_N \subset \R, \vert \Gamma_N \vert \leq N } \Vert X - \widehat{X}^N \Vert_{_p} = \frac{1}{2^p (p+1)} \bigg[ \int_{\R} \varphi^{\frac{1}{1+p}} d \lambda \bigg]^{1+\frac{1}{p} }.
		      \end{equation}

		\item {\sc Non asymptotic upper-bound \cite{graf2000foundations,pages2018numerical}}. Let $\delta > 0$. There exists a real constant $C_{1,p} \in (0, +\infty )$ such that, for every $\R$-valued random variable $X$,
		      \begin{equation}
			      \forall N \geq 1, \qquad \min_{\Gamma_N \subset \R, \vert \Gamma_N \vert \leq N } \Vert X - \widehat{X}^N \Vert_{_p} \leq C_{1,p} \sigma_{\delta+p} (X) N^{- 1}
		      \end{equation}
		      where, for $r \in (0, + \infty),\sigma_r(X) = \min_{a \in \R} \left\Vert X - a \right\Vert_{_r} < + \infty$ is the $L^r$-pseudo-standard deviation.
	\end{enumerate}
\end{theorem}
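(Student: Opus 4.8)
The idea is to establish the non-asymptotic estimate (b) first, since it is exactly what controls the tails of $X$ in the sharp asymptotics (a), and then to derive (a) by a localisation argument bootstrapped from the uniform law on an interval. Throughout write $e_{N,p}(X):=\min_{\Gamma_N\subset\R,\,|\Gamma_N|\le N}\Vert X-\widehat X^N\Vert_{_p}$, a minimum attained by Theorem~\ref{RH:existence}; for $p<1$ one replaces the triangle inequality by the $p$-subadditivity of $\Vert\cdot\Vert_{_p}^{p}$ wherever needed.

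\emph{Step 1: Pierce's lemma (b).} By the homogeneity relations $e_{N,p}(cX+a)=|c|\,e_{N,p}(X)$ and $\sigma_{\delta+p}(cX+a)=|c|\,\sigma_{\delta+p}(X)$, after translating $X$ to the minimiser of $a\mapsto\Vert X-a\Vert_{_{\delta+p}}$ and rescaling, I may assume $\E\big[|X|^{\delta+p}\big]=1$, so it suffices to exhibit a quantizer with $e_{N,p}(X)\le C\,N^{-1}$. I split $\R$ into the dyadic shells $S_0=\{|x|<1\}$ and $S_j=\{2^{j-1}\le|x|<2^{j}\}$, $j\ge1$, truncated at a level $J\sim(p/\delta)\log_2 N$, and place $n_j$ equispaced points in $S_j$ (the inner shells filling the gaps of the outer ones, the points $\pm2^{J}$ absorbing $\{|x|\ge 2^{J-1}\}$, cf.\ \eqref{RH:def_voronoi_cells}), which gives $e_{N,p}(X)^{p}\le C\big(N^{-p}+\sum_{j\le J}\Prob(X\in S_j)(2^{j}/n_j)^{p}\big)$. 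Taking $n_j$ roughly proportional to $\big(\Prob(X\in S_j)\big)^{1/(p+1)}2^{jp/(p+1)}$ (so that $\sum_j n_j=O(N)$) turns the bound into $C\,N^{-p}\big(\sum_j(\Prob(X\in S_j))^{1/(p+1)}2^{jp/(p+1)}\big)^{p+1}$; writing $2^{jp/(p+1)}=(2^{j(\delta+p)})^{1/(p+1)}2^{-j\delta/(p+1)}$, using $2^{j(\delta+p)}\1_{S_j}\le 2^{\delta+p}|x|^{\delta+p}\1_{S_j}$ for $j\ge1$, and then Hölder in $j$ with exponents $p+1$ and $(p+1)/p$ bounds the inner sum by $C_{p,\delta}\,\E[|X|^{\delta+p}]^{1/(p+1)}=C_{p,\delta}$. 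Undoing the normalisation gives $e_{N,p}(X)\le C_{1,p}\,\sigma_{\delta+p}(X)\,N^{-1}$.

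\emph{Step 2: the interval case and the upper bound in (a).} For $X\sim U([0,1])$ the equispaced grid $\{(2i-1)/(2N)\}_{1\le i\le N}$ is optimal — the $L^{p}$-centre of an interval under the uniform law is its midpoint, and $\sum_i\ell_i^{p+1}$ subject to $\sum_i\ell_i=1$ is minimal at $\ell_i\equiv1/N$ by convexity of $t\mapsto t^{p+1}$ — so $e_{N,p}(U([0,1]))^{p}=N\!\int_{-1/(2N)}^{1/(2N)}\!|u|^{p}\,du=\widetilde J_{p,1}\,N^{-p}$ with $\widetilde J_{p,1}:=((p+1)2^{p})^{-1}$, and by scaling $e_{N,p}(U([a,b]))^{p}=\widetilde J_{p,1}(b-a)^{p}N^{-p}$. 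For $X$ with a compactly supported density $\varphi$, partition $\supp(\Prob_{X})$ into short intervals $I_\ell$ of length $h$ on which $\varphi\approx\varphi_\ell$; allocating $n_\ell$ points to $I_\ell$ and conditioning yields a total distortion of order $\widetilde J_{p,1}\sum_\ell\varphi_\ell h^{p+1}n_\ell^{-p}$, minimised under $\sum_\ell n_\ell\le N$ by $n_\ell$ proportional to $\varphi_\ell^{1/(p+1)}h$; as $h\to0$ the constraint and the optimised distortion converge to Riemann integrals, giving $\limsup_N N^{p}e_{N,p}(X)^{p}\le\widetilde J_{p,1}\big(\int\varphi^{1/(1+p)}\,d\lambda\big)^{1+p}$, i.e.\ the ``$\le$'' half of (a).

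\emph{Step 3: lower bound, singular part, and the main obstacle.} The matching $\liminf$ in the compactly supported absolutely continuous case is the Graf--Luschgy firm lower bound: given an arbitrary $N$-quantizer, attach each of its points to the sub-interval $I_\ell$ it chiefly serves, apply on each $I_\ell$ the uniform-law lower estimate with the corresponding point count, and optimise the allocation, which recovers the same constant from below. For a general $X\in L^{\delta+p}$ with $\Prob_{X}=\varphi\cdot\lambda+\nu$, condition on $\{|X|\le R\}$: this conditional law has an absolutely continuous part handled by Steps~1--2 and a singular part costing only a vanishing share of the budget (one atom per Dirac, a covering interval of $\lambda$-measure $\to0$ for the remaining part of $\supp(\nu)$), while Step~1 shows the mass on $\{|X|>R\}$ contributes $o(N^{-1})$; one then lets $N\to\infty$ and afterwards $R\to\infty$. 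I expect this last step to be the genuine difficulty: the elementary constructions deliver only the upper bound, so the sharp \emph{lower} bound of Graf--Luschgy type, together with the interchange of $\lim_R$ and $\lim_N$ and the check that $\nu$ and the unbounded tail remain negligible at scale $N^{-1}$, is the crux, whereas Step~1 is comparatively routine once the shell allocation is set.
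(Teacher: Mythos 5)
First, a point of context: the paper does not prove this statement at all — Zador's theorem is recalled verbatim from the literature (Zador, Graf--Luschgy, Pagès), with the reader sent to \cite{graf2000foundations,pages2018numerical} for the proofs. So there is no in-paper argument to compare against; what you are doing is re-deriving a textbook theorem. Within that frame, your Step 1 is essentially the standard dyadic-shell proof of Pierce's lemma and is sound in outline (the Hölder split with exponents $p+1$ and $(p+1)/p$, the truncation at $J\sim(p/\delta)\log_2 N$ and the tail estimate $\E[|X|^{p}\1_{\{|X|\ge 2^{J}\}}]\le 2^{-J\delta}$ are exactly the right ingredients, modulo the routine integer-rounding of the $n_j$), and your Step 2 gives the correct upper-bound half of the sharp rate. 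Incidentally, your own computation $e_{N,p}(U([0,1]))^{p}=\frac{1}{2^{p}(p+1)}N^{-p}$ shows that the constant in the paper's displayed limit, which is stated for the un-powered norm $N\,e_{N,p}(X)$, should read $\big(2^{p}(p+1)\big)^{-1/p}$ rather than $\big(2^{p}(p+1)\big)^{-1}$ — a typo in the recalled statement that your argument usefully exposes.

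The genuine gap is the one you yourself flag: part (a) is only half-proved. The matching lower bound is not obtained by the sketch you give — ``attach each codebook point to the sub-interval it chiefly serves'' fails naively, because a near-optimal quantizer can place points on or near the boundaries of several cells $I_\ell$ at once, or outside the support altogether, and the per-interval uniform lower estimate degrades uncontrollably under such sharing; the actual Graf--Luschgy proof needs the firm/blow-up lower bound (or, equivalently, the empirical-measure and point-density-theorem machinery), which is the real content of the theorem. Likewise, dismissing the singular part with ``a covering interval of $\lambda$-measure $\to 0$'' is not an argument: a singular continuous $\nu$ can charge a set of full topological support, and showing that it can be quantized at scale $o(N^{-1})$ with a sublinear share of the budget requires the Vitali/differentiation argument of Graf--Luschgy, not one atom per Dirac. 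Finally, the interchange of the limits $N\to\infty$ and $R\to\infty$ (and the splitting of the budget between $\{|X|\le R\}$ and the tail) needs the uniformity furnished by Pierce's lemma applied to the tail law, which you gesture at but do not carry out. As it stands, the proposal proves (b) and the upper bound in (a); the sharp lower bound and the treatment of $\nu$ remain open, so it is not a complete proof of the stated theorem — consistent with the fact that the paper itself delegates this to \cite{graf2000foundations}.
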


Now, we will be interested by the construction of such quadratic optimal quantizer. We differentiate $\Distortion$, whose gradient is given by
\begin{equation}
	\nabla \Distortion ( x )
	= \bigg( \E \Big[ ( x_i^N - X ) \1_{ X \, \in \, \big( x_{i - 1/2}^N, x_{i + 1/2}^N \big] } \Big] \bigg)_{i = 1, \dots, N}.
\end{equation}
Moreover, if $x$ is solution to the distortion minimization problem then it satisfies
\begin{equation}\label{RH:distortozero}
	\begin{aligned}
		\nabla \Distortion ( x ) = 0
		\quad \iff \quad & x_i^N = \frac{ \E \Big[ X \1_{ X \, \in \, \big( x_{i - 1/2}^N, x_{i + 1/2}^N \big] } \Big] }{ \Prob \Big( X \in \big( x_{i - 1/2}^N, x_{i + 1/2}^N \big] \Big) }, \qquad i = 1, \dots, N \\
		\quad \iff \quad & x_i^N = \frac{ K_{_X} \big( x_{i + 1/2}^N \big) - K_{_X} \big( x_{i - 1/2}^N \big) }{ F_{_X} \big( x_{i + 1/2}^N \big) - F_{_X} \big( x_{i - 1/2}^N \big) }, \qquad i = 1, \dots, N       \\
	\end{aligned}
\end{equation}
where $K_{_X}(\cdot)$ and $F_{_X}(\cdot)$ are the first partial moment and the cumulative distribution respectively, function of $X$, i.e.
\begin{equation}
	K_{_X}( x ) = \E \big[ X \1_{X \leq x} \big] \qquad \textrm{ and } \qquad F_{_X}( x ) = \Prob \big( X \leq x \big).
\end{equation}

Hence, one can notices that the optimal quantizer that cancel the gradient defined in \eqref{RH:distortozero}, hence is an optimal quantizer, is a stationary quantizer in the following sense
\begin{equation}
	\E \big[ \widehat x^N \mid X \big] = \widehat X^N.
\end{equation}

The last equality in \eqref{RH:distortozero} was the starting point to the development of the first method devoted to the numerical computation of optimal quantizers: the Lloyd's method I. This method was first devised in 1957 by S.P. Lloyd and published later \cite{lloyd1982least}. Starting from a sorted $N$-tuple $x^{[0]}$ and with the knowledge of the first partial moment $K_{_X}$ and the cumulative distribution function $F_{_X}$ of $X$, the algorithm, which is essentially a deterministic fixed point method, is defined as follows
\begin{equation}\label{RH:LloydAlgo}
	\begin{aligned}
		x_i^{N, [n+1]} = \frac{ K_{_X} \big( x_{i + 1/2}^{N, [n]} \big) - K_{_X} \big( x_{i - 1/2}^{N, [n]} \big) }{ F_{_X} \big( x_{i + 1/2}^{N, [n]} \big) - F_{_X} \big( x_{i - 1/2}^{N, [n]} \big) }, \qquad i = 1, \dots, N.
	\end{aligned}
\end{equation}
In the seminal paper of \cite{kieffer1982exponential}, it has been shown that $\big( x^{[n]} \big)_{n \geq 1}$ converges exponentially fast toward $x$, the optimal quantizer, when the density $\varphi$ of $X$ is $\log$-concave and not piecewise affine. Numerical optimizations can be made in order to increase the rate of convergence to the optimal quantizer such as fixed point search acceleration, for example the Anderson acceleration (see \cite{anderson1965iterative} for the original paper and \cite{walker2011anderson} for details on the procedure).

Of course, other algorithms exist, such as the Newton Raphson zero search procedure or its variant the Levenberg–Marquardt algorithm which are deterministic procedures as well if the density, the first partial moment and the cumulative distribution function of $X$ are known. Additionally, we can cite stochastic procedures such as the CLVQ procedure (Competitive Learning Vector Quantization) which is a zero search stochastic gradient and the randomized version of the Lloyd's method I. For more details, the reader can refer to \cite{pages2018numerical,pages2016pointwise}.

Once the algorithm \eqref{RH:LloydAlgo} has been converging, we have at hand the quadratic optimal quantizer $\widehat X^N$ of $X$ and its associated probabilities given by
\begin{equation}\label{RH:probaOptimalQuantizer}
	\Prob \big( \widehat X^N = x_i^n \big) = F_{_X} \big( x_{i + 1/2}^{N} \big) - F_{_X} \big( x_{i - 1/2}^{N} \big), \qquad i = 1, \dots, n.
\end{equation}

\end{appendices}

\end{document}